\pdfoutput=1
\documentclass[a4paper,UKenglish,cleveref, autoref, thm-restate]{lipics-v2021}
\usepackage[scr=boondoxo]{mathalpha}

\nolinenumbers

\usepackage[utf8]{inputenc}
\usepackage[T1]{fontenc}
\usepackage{microtype}
\renewcommand{\phi}{\varphi}
\renewcommand{\epsilon}{\varepsilon}
\usepackage[]{algorithm2e}

\usepackage{comment}

\usepackage{lipsum}

\usepackage{amsmath}
\usepackage{amssymb}
\usepackage{amsthm}
\usepackage{mathtools}
\usepackage{todonotes}
\usepackage{graphicx}

\catcode`\@ = 11
\newdimen\@InsertBoxMargin
\newcount\@numlines    %
\newcount\@linesleft   %
\def\ParShape{%
    \@numlines = 0
    \def\@parshapedata{ }%
    \afterassignment\@beginParShape
    \@linesleft
}%
\def\@beginParShape{%
    \ifnum \@linesleft = 0
      \let\@whatnext = \@endParShape
    \else
      \let\@whatnext = \@readnextline
    \fi
    \@whatnext
}%
\def\@endParShape{%
    \global\parshape = \@numlines \@parshapedata
}%
\def\@readnextline#1 #2 #3 {%
    \ifnum #1 > 0
      \bgroup  %
        \dimen0 = \hsize
        \advance \dimen0 by -#2  %
        \advance \dimen0 by -#3  %
        \count0 = 0
        \loop
          \global\edef\@parshapedata{%
            \@parshapedata    %
            #2                %
            \space            %
            \the\dimen0       %
            \space            %
          }%
          \advance \count0 by 1
          \ifnum \count0 < #1
        \repeat
      \egroup
      \advance \@numlines by #1
    \fi
    \advance \@linesleft by -1
    \@beginParShape
}%
\newbox\@boxcontent     %
\newcount\@numnormal    %
\newdimen\@framewidth   %
\newdimen\@wherebottom  %
\newif\if@byframe       %
\@byframefalse
\def\InsertBoxC#1{%
  \leavevmode
  \vadjust{
    \vskip \@InsertBoxMargin
    \hbox to \hsize{\hss#1\hss}
    \vskip \@InsertBoxMargin
  }%
}%
\def\InsertBoxL#1#2{%
  \@numnormal = #1
  \setbox\@boxcontent = \hbox{#2}%
  \let\@side = 0
  \futurelet \@optionalparameter \@InsertBox
}
\def\InsertBoxR#1#2{%
  \@numnormal = #1
  \setbox\@boxcontent = \hbox{#2}%
  \let\@side = 1
  \futurelet \@optionalparameter \@InsertBox
}%
\def\@InsertBox{%
  \ifx \@optionalparameter [
    \let\@whatnext = \@@InsertBoxCorrection
  \else
    \let\@whatnext = \@@InsertBoxNoCorrection
  \fi
  \@whatnext
}%
\def\@@InsertBoxCorrection[#1]{%
  \ifx \@side 0
    \@@InsertBox{#1}{0}{{\the\@framewidth} 0cm}%
  \else
    \@@InsertBox{#1}{1}{0cm {\the\@framewidth}}%
  \fi
}%
\def\@@InsertBoxNoCorrection{%
  \@@InsertBoxCorrection[0]%
}%
\def\@@InsertBox#1#2#3{%
  \MoveBelowBox
  \@byframetrue
  \@wherebottom = \baselineskip
  \multiply \@wherebottom by \@numnormal
  \advance \@wherebottom by 2\@InsertBoxMargin
  \advance \@wherebottom by \ht\@boxcontent
  \advance \@wherebottom by \pagetotal
  \ifdim \pagetotal = 0cm
    \advance \@wherebottom by -\baselineskip  %
  \fi
  \advance \@wherebottom by #1\baselineskip
  \@framewidth = \wd\@boxcontent
  \advance \@framewidth by \@InsertBoxMargin
  \bgroup  %
    \ifdim \pagetotal = 0cm
      \dimen0 = \vsize
    \else
      \dimen0 = \pagegoal
    \fi
    \ifdim \@wherebottom > \dimen0
      \immediate\write16{+--------------------------------------------------------------+}%
      \immediate\write16{| The box will not fit in the page. Please, re-edit your text. |}%
      \immediate\write16{+--------------------------------------------------------------+}%
      \vrule width \overfullrule
    \fi
  \egroup
  \prevgraf = 0
  \vbox to 0cm{%
    \dimen0 = \baselineskip
    \multiply \dimen0 by \@numnormal
    \advance \dimen0 by -\baselineskip
    \setbox0 = \hbox{y}%
    \vskip \dp0
    \vskip \dimen0
    \vskip \@InsertBoxMargin
    \ifnum #2 = 1
      \vtop{\noindent \hbox to \hsize{\hss \box\@boxcontent}}%
    \else
      \vtop{\noindent \box\@boxcontent}%
    \fi
    \vss
  }%
  \vglue -\parskip
  \vskip -\baselineskip
  \everypar = {%
    \ifdim \pagetotal < \@wherebottom
      \bgroup  %
        \dimen0 = \@wherebottom
        \advance \dimen0 by -\pagetotal
        \divide \dimen0 by \baselineskip
        \count1 = \dimen0
        \advance \count1 by 1
        \advance \count1 by -\@numnormal
        \ifnum #2 = 1
          \ParShape = 3
                      {\the\@numnormal}   0cm   0cm
                      {\the\count1}       0cm   {\the\@framewidth}
                      1                   0cm   0cm
        \else
          \ParShape = 3
                      {\the\@numnormal}   0cm                  0cm
                      {\the\count1}       {\the\@framewidth}   0cm
                      1                   0cm                  0cm
        \fi
      \egroup
    \else
      \@restore@    %
    \fi
  }%
  \def\par{%
      \endgraf
      \global\advance \@numnormal by -\prevgraf
      \ifnum \@numnormal < 0
        \global\@numnormal = 0
      \fi
      \prevgraf = 0
  }%
}%
\def\MoveBelowBox{%
  \par
  \if@byframe
    \global\advance \@wherebottom by -\pagetotal
    \ifdim \@wherebottom > 0cm
      \vskip \@wherebottom
    \fi
    \@restore@
  \fi
}%
\def\@restore@{%
    \global\@wherebottom = 0cm
    \global\@byframefalse
    \global\everypar = {}%
    \global\let \par = \endgraf
    \global\parshape = 1 0cm \hsize
}%
\ifx \documentclass \@Dont@Know@What@It@Is@
\else
  \let \pageno = \c@page
\fi

\catcode`\@ = 12
 \usepackage{bm}
\usepackage{bbm}

\usepackage{calc}
\usepackage{float}
\usepackage{booktabs}
\usepackage{graphicx}
\usepackage{tikz}
\usetikzlibrary{positioning,arrows.meta,trees,shapes,graphs, graphs.standard,decorations.pathmorphing}

\newcommand{\N}{\mathbb{N}}
\newcommand{\bigO}{\ensuremath{\mathcal{O}}}
\newcommand{\assign}{\ensuremath{\mathscr{I}}}
\newcommand{\supp}[1]{\ensuremath{\text{supp($ #1 $)}}}
\newcommand{\restrict}[1]{\raisebox{-.2ex}{$|$}_{#1}}
\newcommand{\ascr}{\ensuremath{\mathscr{a}}}
\newcommand{\bscr}{\ensuremath{\mathscr{b}}}

\newcommand{\imp}{\mathcal{Z}}
\newcommand{\control}{\mathcal{C}}
\newcommand{\EQ}{\text{EQ}}
\newcommand{\class}{\ensuremath{\mathfrak{C}}}
\newcommand{\classD}{\ensuremath{\mathfrak{D}}}
\newcommand{\var}{\text{var}}
\DeclareMathOperator{\clause}{clause}

\renewcommand{\hat}{\widehat}
\newcommand{\G}{\mathcal{G}}

\usepackage{xcolor} %
\usepackage[most]{tcolorbox}

\usepackage{hyperref}
\usepackage[capitalise,noabbrev]{cleveref}
\usepackage{lipsum}

\author{Christoph Berkholz}{Technische Universität Ilmenau, Germany}{christoph.berkholz@tu-ilmenau.de}{}{Funded by the Deutsche Forschungsgemeinschaft (DFG, German
Research Foundation) – project number 414325841.}%
\author{Matthäus Micun}{Technische Universität Ilmenau, Germany}{matthaeus.micun@tu-ilmenau.de}{}{Funded by the Deutsche Forschungsgemeinschaft (DFG, German
Research Foundation) – project number 414325841.}%

\begin{document}

\pagenumbering{arabic}
\date{\today}
\title{Proof Systems Based on Structured Circuits}

\authorrunning{C. Berkholz and M. Micun}
\Copyright{Christoph Berkholz, Matthäus Micun}
\keywords{Proof Complexity, Sentential Decision Diagram, DNNF, OBDD}
\maketitle

\begin{CCSXML}
    <ccs2012>
       <concept>
           <concept_id>10003752.10003777.10003785</concept_id>
           <concept_desc>Theory of computation~Proof complexity</concept_desc>
           <concept_significance>500</concept_significance>
           </concept>
     </ccs2012>
\end{CCSXML}
    
\ccsdesc[500]{Theory of computation~Proof complexity}

\begin{abstract}
  Since their introduction by Atserias, Kolaitis, and Vardi in 2004, proof systems where each line is represented by an ordered binary 
  decision diagram (OBDD) have been intensively studied as they allow to compactly represent Boolean functions. We extend this line of work by 
  considering representation formats that can be even more succinct than OBDDs and have gained a lot of attention in the area of knowledge 
  compilation: sentential decision diagrams (SDDs) and deterministic structured DNNF circuits (d-SDNNFs).

  We show that both variants can provide strictly smaller refutations of unsatisfiable CNFs than their OBDD counterparts. 
  Furthermore, we investigate the relative strength of these systems depending on which of the three fundamental derivation rules 
  join, reordering, and weakening are allowed. Here we obtain several separations and identify interesting open problems. 
  To streamline our proofs we establish a sat-to-unsat lifting theorem that might be of independent interest: it turns satisfiable CNFs 
  that are hard to represent by SDDs and d-SDNNFs into unsatisfiable CNFs that are hard to refute in the corresponding proof system.
\end{abstract}

\section{Introduction}

Propositional proof systems where every line is represented by an
Ordered Binary Decision Diagram (OBDD) have been introduced in 2004
\cite{atserias2004constraint} and have been intensively studied since then
\cite{buss2018reordering,segerlind2008relative,itsykson2020obdd,de2022lower,itsykson2022tight,buss2021lower,itsykson2022automating}
(see \cite{buss2021lower} for an overview).
One main feature of the approach to take OBDDs instead of simpler
objects such as disjunctive clauses is that even complex Boolean functions can be
succinctly represented as OBDDs, which then results in more succinct
proofs. At the same time, important operations such as
equivalence, entailment, and satisfiability tests can be implemented
in polynomial time on OBDDs, which is a prerequisite for designing
useful and efficiently verifiable derivation rules.

OBDDs also play a fundamental role in \emph{knowledge
  compilation}, an area pioneered in \cite{knowledgecompilationmap} that studies different representation formats
for Boolean functions with a particular focus on the trade-off between
\emph{succinctness} and \emph{usefulness}. Motivated by recent
advances in that field, the central question we want to answer with
this paper is whether there are more succinct representation formats that
are still useful enough to serve as a basis for a proof system and
whether these systems lead to strictly smaller refutations of unsatisfiable CNFs
than OBDD-based systems.
We identify two such representation formats.

The first one are 
\emph{Sentential Decision Diagrams} (SDDs) introduced in
\cite{darwiche2011sdd}.
SDDs generalise OBDDs by branching not only on a single
variable, but on several variables at the same time. Moreover, while each OBDD respects some linear
order on the variables, structure is enforced in SDDs by a \emph{variable tree} (vtree).
Similarly to OBDDs, they also have several nice algorithmic
properties, such as polynomial time conjunction and negation. On the
other hand, SDDs can be strictly more succinct than OBDDs \cite{razgon2014obdds,bova2016sdds}. 
The second representation format we consider are \emph{deterministic
structured circuits in Decomposable Negation Normal Form} (d-SDNNF), which
have been introduced in \cite{pipatsrisawat2008new}. 
Here, variables are also structured along a \emph{vtree} and d-SDNNFs
generalise SDDs.
In fact, it has recently be shown that d-SDNNFs can be strictly more
succinct that SDDs \cite{vinall2024structured}.
While being more succinct, they also lose some important algorithmic
properties; luckily not the ones we need to use as a basis for a
proof system.

\begin{figure}
\centering        
\begin{tikzpicture}
    \node[matrix, row sep=1.5cm, column sep=1.5cm] (mx1){
        \node (obddlrw) {OBDD($\land,r,w$)}; &
        \node (obddlw) {OBDD($\land,w$)}; &
        \node (res) {Resolution}; \\
        \node (obddlr) {OBDD($\land,r$)}; & 
        \node (obddl) {OBDD($\land$)}; \\
    };
    \node[matrix, row sep=1.5cm, column sep=1.5cm, above left of=mx1,xshift=-30mm,yshift=7mm] (mx2){
        \node (sddlrw) {d-SDNNF($\land,r^\ast,w$)}; &
        \node (sddlw) {d-SDNNF($\land,w$)}; \\
        \node (sddlr) {d-SDNNF($\land,r^\ast$)}; & 
        \node (sddl) {d-SDNNF($\land$)}; \\
    };
    \graph{
        (obddlr) -> (obddl);
        (obddlrw) -> (obddlr);
        (obddlrw) -> (obddlw);
        (obddlw) -> (obddl);
        (obddlw) -> (res);
    };
    \graph{
        (sddlr) -> (sddl);
        (sddlrw) -> (sddlr);
        (sddlrw) -> (sddlw);
        (sddlw) -> (sddl);
    };
    \graph{
        (sddlrw) -> (obddlrw);
        (sddlr) -> (obddlr);
        (sddlw) -> (obddlw);
        (sddl) -> (obddl);
    };
    \graph{
        (obddlrw) ->[dotted,bend right=20,thick] (sddlrw);
        (obddlw) ->[dotted,bend right=20,thick] (sddlw);
    };
    \end{tikzpicture}
    \caption{The various OBDD- and d-SDNNF-based proof systems. A solid edge from $A$ to $B$ implies that proof system $A$ polynomially 
    simulates proof system $B$, that is, $A$ is at least as succinct as $B$. A dotted edge from $A$ to $B$ indicates that no 
    superpolynomial separation between $A$ and $B$ is known. If there is no path from $A$ to $B$ using solid or dotted edges, 
    then $A$ does not polynomially simulate $B$. Dotted edges from OBDD$(\land,w)$ to d-SDNNF$(\land)$ as well as from
    OBDD$(\land,r,w)$ to every d-SDNNF proof system are implied, but left out for legibility.
    }
    \label{fig:cube}
\end{figure}

The main contribution of this paper is to introduce proof
systems based on these representation formats and to analyse the relative succinctness compared to their
OBDD counterparts and to each other. The derivation rules of the new
systems mimic the ones that have been studied for OBDD-based systems:
if we have derived $D', D''$, the join ($\land$) rule allows us to infer
the conjunction $D = D' \land D''$ provided that $D'$ and $D''$
respect the same order/vtree; the reordering $(r)$ rule allows us to infer $D$
from an equivalent $D'$ with a different order/vtree; the weakening $(w)$
rule allows us to infer $D$ from $D'$ over the same order/vtree if $D'$ entails
$D$.
Since in some cases reordering of vtrees is not known to be polynomial
time verifiable, we introduce a \emph{small restructuring rule} $(r^\ast)$ that
is verifiable and modifies the vtree only locally. However, even
with this restricted rule, the SDD/d-SDNNF-systems still polynomially simulate
their OBDD counterparts with one-step reordering $(r)$.
Figure \ref{fig:cube} depicts the obtained simulations and separations between OBDD and d-SDNNF-based proof systems. The proof systems based on SDDs are not included for simplicity,
as all separations and simulations agree with the d-SDNNF case. In
fact, we leave it as open problem, whether d-SDNNF circuits (that can
be more succinct than SDDs) prove a strictly more powerful
propositional proof system.

\subparagraph*{Technical contributions}
Our first technical contribution is a method for lifting lower bounds on the
size of OBDDs/SDDs/d-SDNNFs that represent the models of a satisfiable
CNF $\phi$ to a lower bound on the proof size for refuting an
unsatisfiable CNF $\imp(\phi)$ in a corresponding proof system without
weakening (Theorem~\ref{th:meta}). 
This method can be applied to a broad class of $(k, \log n)$-CNFs,
where every clause has at most $k$ literals and every variable appears
in at most $\log n$ clauses.

We then use this method to obtain several separations in
Section~\ref{sec:separations}. In particular, we show that in the absence of
weakening, SDD and d-SDNNF proof systems cannot be polynomially simulated by their OBDD-counterparts
and are indeed strictly stronger (Section~\ref{sec:sepgood}).
We also apply the lifting method to obtain separations between
different variants of these proof systems. These separations have been
known for OBDD proof system (see \cite{buss2021lower} for an overview)
and the challenge is to show that the lower bound constructions are also hard for the
stronger SDD/d-SDNNF format. Besides applying the lifting method (Theorem~\ref{th:meta}), we
also use the connection between structured circuits and rectangle
covers in communication complexity (see \cite{bova2016knowledge}) to
establish the bounds. In particular we show:

\begin{itemize}
\item d-SDNNF($\land$, $r$) does not polynomially simulate OBDD($\land$,
  $w$) \quad (Section~\ref{sec:sep1}) 
\item d-SDNNF($\land$, $w$) does not polynomially simulate OBDD($\land$,
  $r$) \quad (Section~\ref{sec:sep2})
\item d-SDNNF($\land$) does not subexponentially simulate OBDD($\land$,
  $r$) \quad (Section~\ref{sec:sep3})
\end{itemize}

\subparagraph*{The possibility of even stronger DNNF proof systems}
SDDs and d-SDNNF can be generalised even further. In particular, the structured decomposable
negation normal form (SDNNF) generalises d-SDNNF by dropping the
\emph{determinism} and has the potential to be more compact.
In that sense, it might seem useful to also define proof systems using this stronger class. In fact, there has already been some
related work on derivation systems using SDNNFs \cite{de2022lower} in the context of bottom-up knowledge compilation.
However, neither weakening nor reordering are polynomial-time
verifiable for SDNNF, even if both SDNNF respect the same vtree
\cite{pipatsrisawat2008new} and even worse, as every DNF is a  SDNNF it is
coNP-hard to decide whether they are equivalent and hence $(\land)$
isn't polynomial time verifiable as well.
Therefore, systems based on SDNNF (with some of the derivation steps $\land,r,w$) are not propositional proof systems and a similar reasoning
applies to d-DNNF (omitting the \emph{structure}, see
Section~\ref{sec:structuredProofSystems}). In that sense, d-SDNNF is the most general DNNF-circuit variant
that is suitable for defining a propositional proof system.

\section{Preliminaries \label{sec:preliminaries}}

\subparagraph*{Boolean functions}

A \emph{Boolean function} is a function $f\colon\{0,1\}^X \to \{0,1\}$ for some set of \emph{variables} $X$. An \emph{assignment} for $f$ is a function $\assign: X \to \{0,1\}$. 
A literal $\gamma$ is either a variable $x$, or its negation $\neg x$. A \emph{clause} $C$ is a disjunction of literals (also
considered as set of literals), and a formula $\phi$ in \emph{conjunctive normal form} (CNF) is a conjunction of clauses,
where no clause contains a variable $x$ and its negation.
A \emph{$(k,\ell)$-CNF} is a CNF where every clause has at most $k$ literals,
and every variable appears in at most $\ell$ clauses.
For a CNF $\phi$, we let $\clause(\phi)$ be the set of clauses in $\phi$, $\text{var}(\phi)$ be the set of variables 
mentioned in $\phi$, and $f_\phi:\{0,1\}^{\text{var}(\phi)} \to \{0,1\}$ be the Boolean function associated with $\phi$.

We write $\assign \models \phi$ to denote that an assignment $\assign$ satisfies a CNF $\phi$.
A \emph{partial assignment} for a Boolean function $f\colon \{0,1\}^X \to \{0,1\}$ is a function
$\ascr: S \to \{0,1\}$ for some
$S \subset X$.
We also call $S$ the \emph{support} of $\ascr$ and denote it with \supp{\ascr}. 
For two partial assignments $\ascr, \bscr$ with disjoint supports $S,T$ we let
$\ascr \cup \bscr: S \cup T \to \{0,1\}$ be the partial assignment
that agrees with $\ascr$ on $S$ and with $\bscr$ on $T$. 
If $\ascr$ is a partial assignment of $f$, then the \emph{restriction}
of $f$ to $\ascr$ is Boolean function $f \restrict{\ascr}: \{ 0,1 \}^{X \setminus S} \to \{0,1\}$ with
$f \restrict{\ascr}(\bscr) \coloneqq f(\ascr \cup \bscr)$. Observe that for any assignment $\bscr$ of $f\restrict{\ascr}$
    it holds that $f\restrict{\ascr}(\bscr) = f\restrict{\bscr}(\ascr)$.

\subparagraph*{Graphs}
A graph $G = (V,E)$ consists of a set $V$ of \emph{vertices} and a set $E$ of undirected \emph{edges}. We use $G[V']$ to denote the \emph{induced subgraph}
$G' = (V',(E \cap \binom{V'}{2}))$. For a graph $H = (V',E')$, we use $H \subseteq G$ to denote that $H$ is a subgraph of $G$. For a given $v \in V$ we let the \emph{neighbourhood} $N(v)$ of $v$ be the set of vertices adjacent to $v$, and $E(v)$ be the set of edges incident to $v$.
For a set $S \subseteq V$, we let $N(S)$ be the set of vertices $v \in V\setminus S$ that are adjacent to some $w \in S$, $E(S)$ be the set of edges in $G[S]$ and $\partial(S)$ be the set of edges that are 
incident to \emph{exactly one} $v \in S$. We may also use $V(G)$ and $E(G)$ to describe the vertex set and edge set of $G$.
A \emph{tree decomposition} of a graph $G$ is a pair $(T,\beta)$, where $T$ is a tree and $\beta: V(T) \to 2^{V(G)}$ a function such that
\begin{enumerate}
    \item For all $e \in E(G)$ there is a $t \in V(T)$ such that $e \in \beta(t)$
    \item For all $v \in V(G)$, $\beta^{-1}(v)$ is connected.
\end{enumerate}
The
\emph{width} of a tree decomposition $(T,\beta)$ is $w=\max_{t \in
  V(T)}(|\beta(t)|-1)$ and the \emph{treewidth} $tw(G)$ of $G$ is the
minimum width over all tree decompositions of $G$. The \emph{treewidth
  of a CNF} $\phi$, denoted by $tw(\phi)$ is the treewidth of the
primal graph $G_\phi$, where $V(G_\phi)$ is the set of variables in $\phi$, and $\{u,v\} \in E(G_\phi)$, if the variables associated with $u$, and $v$ share a clause.

Expander graphs are a very useful ingredient in lower bound
constructions: for $d \geq 3$ and $0 < c < 1$ an \emph{$(n,d,c)$-expander} is a
    $d$-regular $n$-vertex graph $G = (V,E)$ such that
    $|N(S)|\geq c\cdot|S|$ for every $S\subset V$ with $|S|\leq \frac n2$.
    A family $(G_n)_{n \in \N}$ is called a \emph{family of $(d,c)$-expanders}, if every $G_n$ is an $(n,d,c)$-expander.
It is well established that for $d \geq 3$ and $0 < c < 1$ there are
families of $(d,c)$-expanders (see e.\,g. \cite{DBLP:books/wi/AlonS00}).

\subsection{Knowledge Compilation}
One of the main aims of \emph{knowledge compilation} is to provide
succinct representation formats for Boolean functions that allow
efficient operations ranging from counting and enumerating 
satisfying assignments to testing equivalence
and entailment \cite{knowledgecompilationmap}.
In this paper we focus on \emph{structured} representation formats, in particular
ordered binary decision diagrams (OBDDs), sentential decision
diagrams (SDDs) and
structured deterministic decomposable circuits (d-SDNNF). 
In the following paragraphs we provide a brief introduction.

\begin{figure}
    \centering
    \resizebox{0.8\textwidth}{!}{
\begin{tikzpicture}
    [   auto,
        dnnf/.style={ellipse,draw},
        vert/.style={circle,draw},
        sink/.style={rectangle,draw},
        bend angle=10
    ]

    \node[dnnf]   (rt)  at  (-0.75,0)       {$\lor$};

    \node[dnnf]   (i1) at  (-2.75,-1)      {$\land$};
    \node[dnnf]   (i2) at  (-0.75,-1)      {$\land$};
    \node[dnnf]   (i3) at  (1.25,-1)     {$\land$};

    \node[dnnf]   (c11) at  (-3.25,-2)   {$\land$};
    \node[dnnf]   (c12) at  (-2.25,-2)   {$x_3$};

    \node[dnnf]   (d11) at  (-3.35,-3)   {$\neg x_1$};
    \node[dnnf]   (d12) at  (-2.25,-3)   {$x_2$};

    \node[dnnf]   (c21) at  (-1.25,-2)   {$\land$};
    \node[dnnf]   (c22) at  (-0.25,-2)   {$x_3$};

    \node[dnnf]   (d21) at  (-1.25,-3)   {$x_1$};
    \node[dnnf]   (d22) at  (-0.15,-3)   {$\neg x_2$};

    \node[dnnf]   (c31) at  (0.75,-2)   {$x_1$};
    \node[dnnf]   (c32) at  (1.75,-2)   {$x_2$};

    \draw 
    (rt)    -- (i1)
    (rt)    -- (i2)
    (rt)    -- (i3)
    
    (i1)   -- (c11)
    (i1)   -- (c12)
    (c11)   -- (d11)
    (c11)   -- (d12)

    (i2)   -- (c21)
    (i2)   -- (c22)
    (c21)   -- (d21)
    (c21)   -- (d22)

    (i3)   -- (c31)
    (i3)   -- (c32)
    ;

    \node[vert] (root) at (-5.75,0) {$x_1$};
    \node[vert] (x1left) at (-6.75,-1) {$x_2$};
    \node[vert] (x1right) at (-4.75,-1) {$x_2$};
    \node[vert] (x2) at (-5.75,-2.5) {$x_3$};
    \node[sink] (0sink) at (-6.75,-3.5) {$0$};
    \node[sink] (1sink) at (-4.75,-3.5) {$1$};

    \draw[->] (root) -- node[swap] {0} (x1left);
    \draw[->] (root) -- node {1} (x1right);
    \draw[->, bend right] (x1left) to node[swap] {0} (0sink);
    \draw[->] (x1left) to node {1} (x2);
    \draw[->] (x1right) to node[swap] {0} (x2);
    \draw[->, bend left] (x1right) to node {1} (1sink);   
    \draw[->] (x2) to node[swap] {0} (0sink);
    \draw[->] (x2) to node {1} (1sink);

    \node[dnnf, inner sep=6pt]    (vroot) at  (4.5,-0.5)  {};
    \node[dnnf, inner sep=6pt]    (vi1)   at  (3.5,-1.5)  {};
    \node[dnnf]                   (vi2)   at  (5.5,-1.5)  {$x_3$};
    \node[dnnf]                   (vl1)   at  (3,-2.5)    {$x_1$};
    \node[dnnf]                   (vl2)   at  (4,-2.5)    {$x_2$};

    \draw
    (vroot) -- (vi1)
    (vroot) -- (vi2)
    (vi1) -- (vl1)
    (vi1) -- (vl2);

    \node[draw=none]    (ld1) at (-7.2,0) {$D_1$:};
    \node[draw=none]    (ld2) at (-2.5,0)   {$D_2$:};
    \node[draw=none]    (lt)  at (3.5,-0.5)    {$T$:};
\end{tikzpicture}     }
    \caption{An OBDD $D_1$, and a d-SDNNF $D_2$ computing the majority function on variables $\{x_1, x_2, x_3\}$.
    The d-SDNNF $D_2$ respects the vtree $T$.}
\end{figure}

\subparagraph*{Ordered Binary Decision Diagrams}
A \emph{binary decision diagram} (BDD) is a directed acyclic graph $D$ with exactly one source and two sinks, labelled 0-sink, and 1-sink, respectively. 
Each internal vertex $v$ of $D$ has exactly two children. The outgoing edges of $v$ are labelled $0$-edge and $1$-edge, and
the children of $v$ are referred to as $0$-child, and $1$-child, respectively. Furthermore, $D$ is equipped with a labelling function $\ell: V(D) \to X$,
where $V(D)$ is the set of internal vertices of $D$, and $X$ some set of variables. The size $|D|$ of a BDD $D$ is the number of vertices in $V(D)$.
A BDD represents a Boolean function $f_D: \{0,1\}^X \to \{0,1\}$ such that for any assignment $\assign: X \to \{0,1\}$, we can construct
a path from the source $s$ to a sink $t$, which respects the assignment $\assign$: if an internal vertex $v$ along the path is labelled with variable $a$,
then the next vertex in the path will be the $\assign(a)$-child of $v$. Then $f_D(\assign) = 1$, if the path defined by $\assign$ on $D$ ends in the $1$-sink.

Let $\leq$ be a linear order on variables $X$. We say, a BDD $D$ \emph{respects} the order $\leq$, if for every internal edge $(v,w)$ it holds that $\ell(v) < \ell(w)$. 
An \emph{ordered binary decision diagram} (OBDD) is a BDD that respects some order $\leq$. We may also use the term $\leq$-OBDD, if the order is relevant. For any order $\leq$ and Boolean function $f$,
there is a canonical OBDD $D$ respecting $\leq$ such that $f = f_D$.
For an OBDD $D$ and partial assignment $\ascr$, we let the
\emph{partial restriction} $D\restrict{\ascr}$ of $D$ to $\ascr$  be
the OBDD that results from restricting decisions according to $\ascr$
and that computes $f_D\restrict{\ascr}$.

Throughout this paper, we use the following well-known properties of OBDDs:

\begin{lemma}[\cite{WegenerBP}]\label{lem:OBDDand}
    The following properties hold for \textup{OBDD}s:
    \begin{enumerate}
        \item There is a polynomial-time algorithm that computes for two \textup{OBDD}s $D_1$ and $D_2$
        respecting the same linear order $\leq$ an \textup{OBDD} $D$ respecting $\leq$ such that $D \equiv D_1 \land D_2$.
        \item There exists a polynomial-time algorithm that computes for an \textup{OBDD} $D$, whether $f_D \equiv \bot$
        \item Given two \textup{OBDD}s $D_1$ and $D_2$ respecting the same linear order $\leq$, it can be checked 
        in polynomial time whether $D_1 \models D_2$.
    \end{enumerate}
\end{lemma}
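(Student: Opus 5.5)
The plan is to prove the three parts of Lemma~\ref{lem:OBDDand} together. All three rest on one construction: a synchronized traversal of two OBDDs that respect the same order $\leq$.

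For part 1 I will use the product construction. Let $D_1, D_2$ respect $\leq$, and write $\ell(u)$ for the label of an internal vertex, treating sinks as having label $+\infty$ with respect to $\leq$. Build a new diagram $D$ whose vertices are the pairs $(u,v)$ with $u \in D_1$ and $v \in D_2$ that are reachable from $(s_1,s_2)$. The vertex $(u,v)$ gets label $x = \min_{\leq}(\ell(u),\ell(v))$. For $b \in \{0,1\}$, its $b$-child is $(u',v')$, where $u'$ is the $b$-child of $u$ if $\ell(u) = x$ and $u' = u$ otherwise, and $v'$ is defined symmetrically. Any pair containing a $0$-sink is merged into the $0$-sink of $D$, and the pair $(1,1)$ becomes the $1$-sink.

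Two things must be checked for this construction.
\begin{itemize}
\item \emph{$D$ respects $\leq$.} Along every edge, the labels of both components weakly increase, and the component carrying the minimal label strictly increases. Hence the minimum label strictly increases along each edge.
\item \emph{$D$ computes $f_{D_1} \land f_{D_2}$.} Follow the path of an assignment $\assign$ in $D$. Its projection to each component is exactly the path of $\assign$ in $D_i$, with some repeated vertices. This uses that OBDDs may skip variables, and that a variable which appears on the $D$-path but is skipped in $D_i$ leaves that component unchanged. So the $D$-path ends in $(1,1)$ if and only if both original paths end in the $1$-sink.
\end{itemize}
The size of $D$ is at most $|D_1|\cdot|D_2|+2$. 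Exploring reachable pairs, for example by DFS with memoisation, therefore takes polynomial time.

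For part 2, observe that $f_D \equiv \bot$ holds if and only if the $1$-sink is unreachable from the source. The \emph{if} direction is immediate. For \emph{only if}, suppose some source-to-$1$-sink path exists. Since $D$ is ordered, each variable is tested at most once on that path, so the edge labels along it are consistent. Assigning those values, and arbitrary values to the untested variables, gives an assignment that follows exactly this path, so $f_D \not\equiv \bot$. This consistency step is the only subtle point, and it is precisely where the ordering is used. Reachability is decidable in linear time.

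For part 3, use that $D_1 \models D_2$ if and only if $f_{D_1} \land \neg f_{D_2} \equiv \bot$. Negation of an OBDD is done by swapping its two sinks, which preserves $\leq$ and the size. So I apply part 1 to $D_1$ and $\neg D_2$, then test the result with part 2, all in polynomial time.

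I do not expect a real obstacle. The only care needed is in part 1: correctly handling a variable skipped in one diagram but tested in the other, and confirming that the product stays ordered.
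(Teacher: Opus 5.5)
Your argument is correct. It is precisely the standard argument behind the textbook result, which the paper does not prove but simply imports from \cite{WegenerBP}: the apply/product construction, reachability of the $1$-sink, and entailment via $D_1 \land \neg D_2$ with $\neg D_2$ obtained by swapping sinks. The only slip is the size count: reachable pairs with one component equal to the $1$-sink are also internal vertices of the product. So under the paper's convention of counting internal vertices, the bound is $(|D_1|+1)(|D_2|+1)-1$ rather than $|D_1|\cdot|D_2|+2$, which is of course still polynomial.
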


\subparagraph*{Variable trees}
The other two representation formats are not structured along an order
as OBDDs, but along a variable tree (vtree).
A \emph{vtree} over a variable set $X$ is a pair $(T,b)$, where $T$ is a rooted full binary tree
and $b$ is a bijection from the leaves of $T$ to $X$.  
Furthermore, the two children of each internal vertex are labelled as
left and right child. For an internal vertex $v$, we let $T_v$ be the subtree rooted at $v$
and $T_{v,l}$, $T_{v,r}$
be the subtrees rooted at the left and right children of $v$,
respectively.
For a subtree $T'$ we denote by $b(T')$ be the image of $b$ restricted
to the leaves of $T'$.

\subparagraph*{Sentential Decision Diagrams}
The central idea of sentential decision diagrams is that they do not
branch over a single variable $x\in X$, but on multiple variables
$Y\subseteq X$ at the same time while
grouping some of the $2^{|Y|}$ assignments together. To define this
formally, let an \emph{$Y$-partition} be a set $\{f_1, \dots, f_k\}$ of Boolean functions
on $Y$ such that
\begin{itemize}
    \item All $f_i$ are satisfiable ($f_i^{-1}(1)\neq \emptyset$).
    \item The set $\{f_i^{-1}(1) \mid i \in [k]\}$ is a partition of $\{0,1\}^Y$.
\end{itemize}
A \emph{Sentential Decision Diagram (SDD) respecting a vtree $(T,b)$} on variables $X$ is defined recursively as follows:
    \begin{itemize}
        \item $\top$ and $\bot$ are SDDs with $f_{\top} = \top$ and $f_{\bot} = \bot$.
        \item $x$ and $\neg x$ are SDDs that respect the vtree
          consisting of a single leaf $v$ with $b(v) = x$. They
          compute $f_{x} = x$ and $f_{\neg x} = \neg x$, respectively.
        \item Let $(T,b)$ be a vtree with root $v$. Suppose that $F_1, \dots,
          F_k$ are SDDs each respecting some subtree of $T_{v,l}$ such
          that $\{f_{F_1},\dots,f_{F_k}\}$ is a
          $b(T_{v,l})$-partition. Further suppose
        that $G_1, \dots, G_k$ are SDDs each respecting some subtree of $T_{v,r}$.
        Then $\alpha \coloneqq \{(F_1,G_1),\dots, (F_k,G_k)\}$ is an
        SDD that respects $(T,b)$. Intuitively, the SDD branches
        over the assignment groups specified by the $F_i$. Formally, 
        $f_{\alpha} \coloneqq \bigvee_{i \in [k]} (f_{F_i} \land f_{G_i})$.
    \end{itemize}
    The size $|D|$ of an SDD $D$ is the number of gates in $D$, if $D$
    is viewed as a \emph{circuit} (which is the number of distinct
    sub-SDD used in the recursive
    definition).

    Let $D$ be an SDD respecting a vtree $(T,b)$ rooted at a node $v$. $D$ is called \emph{normalised}, if either $v$ is a leaf,
    or if $D = \{(F_1,G_1),\dots, (F_k,G_k)\}$, where each $F_i$ is normalised and respects the vtree $T_{v,l}$,
    and each $G_i$ is normalised and respects the vtree $T_{v,r}$.

As OBDDs, SDDs have desirable closure properties:

\begin{theorem}[\cite{darwiche2011sdd}]\label{lem:sddpoly}
    Let $D_1,D_2$ be normalised \textup{SDD}s respecting the same vtree $(T,b)$. Then there exist polynomial-time algorithms that compute $D_1 \land D_2$,
    $D_1 \lor D_2$, and $\neg D_1$. Furthermore, it is possible to verify in polynomial time whether $D_1 \equiv D_2$.
\end{theorem}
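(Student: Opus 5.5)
The plan is to prove all claims by structural recursion on normalised SDDs over the fixed vtree $(T,b)$, using the classical \emph{apply} scheme of \cite{darwiche2011sdd} together with memoisation. Negation comes first because it is the simplest. For a leaf the only SDDs are $\top,\bot,x,\neg x$, so negation just swaps them. For an internal node, given $D_1=\{(F_1,G_1),\dots,(F_k,G_k)\}$, the output is $\{(F_1,\neg G_1),\dots,(F_k,\neg G_k)\}$. This is correct because the primes $F_i$ form a $b(T_{v,l})$-partition: for any assignment exactly one $F_i$ holds, so $\neg\bigvee_i(F_i\land G_i)\equiv\bigvee_i(F_i\land\neg G_i)$. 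The primes are unchanged, so the result is again normalised. Memoising on sub-SDDs means each distinct node is negated once, giving linear size and time.

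For a binary operation $\circ\in\{\land,\lor\}$ I would define $\mathrm{apply}(D_1,D_2,\circ)$ recursively. Terminal cases ($\top,\bot$, or literals at a leaf) are handled directly. At an internal node $v$, write $D_1=\{(F_i,G_i)\}_{i\le k}$ and $D_2=\{(F'_j,G'_j)\}_{j\le m}$, both normalised with respect to $T_v$, and output
\[\{(F_i\land F'_j,\; G_i\circ G'_j) : i\le k,\ j\le m,\ F_i\land F'_j\not\equiv\bot\}.\]
The new primes are pairwise disjoint, satisfiable, and cover $\{0,1\}^{b(T_{v,l})}$, so they form a partition. The represented function is $\bigvee_{i,j}(F_i\land F'_j)\land(G_i\circ G'_j)$, and this equals $D_1\circ D_2$ by the same partition argument. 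Both recursive calls are on SDDs normalised for the same subtrees $T_{v,l}$ and $T_{v,r}$, so normality is preserved. Deciding whether $F_i\land F'_j\equiv\bot$ is easy: either the recursively computed SDD is the constant $\bot$, or we maintain the invariant that every non-constant normalised SDD produced this way is satisfiable and test for $\bot$ after trimming.

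The main obstacle is the complexity bound. I would memoise $\mathrm{apply}$ on pairs of nodes $(u,u')$ with $u$ from $D_1$ and $u'$ from $D_2$ normalised for the same vtree node. Each pair then triggers one call whose local work is $O(k\cdot m)$ before recursion. Summed over all pairs, the number of distinct recursive calls is at most $|D_1|\cdot|D_2|$, and the total size of the output is $O(|D_1|\cdot|D_2|)$ in the circuit-size sense used here. This yields a polynomial algorithm. The subtle point is that normalisation forces the sub-SDDs of $D_1$ and $D_2$ paired in a recursive call to sit at the \emph{same} vtree node. Without normalisation the subtrees could differ and the pair recursion would be ill-typed. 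I would prove this alignment by induction on the depth of $v$.

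For equivalence, I would avoid relying on canonicity, which needs compression and trimming. Instead I would use the operations just built: $D_1\equiv D_2$ iff $(D_1\land\neg D_2)\lor(\neg D_1\land D_2)$ is unsatisfiable. This SDD has polynomial size, though composing several applies squares the size repeatedly, which is still polynomial for a constant number of steps. Satisfiability of a normalised SDD is decidable in linear time bottom-up: a decomposition node $\{(F_i,G_i)\}$ is satisfiable iff some $F_i$ and $G_i$ are both satisfiable, since $F_i$ and $G_i$ are over disjoint variable sets. This completes the equivalence test.
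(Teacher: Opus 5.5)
The paper does not prove this statement; it imports it from \cite{darwiche2011sdd}, so the natural comparison is with Darwiche's argument. Your proposal is correct.

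Your negation and your apply are exactly Darwiche's. Apply takes the Cartesian product of primes, prunes the unsatisfiable ones, recurses on the aligned vtree children, and memoises on node pairs, which gives the $\bigO(|D_1|\cdot|D_2|)$ bound.

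Where you genuinely differ is the equivalence test.
\begin{itemize}
\item Darwiche obtains equivalence from canonicity of compressed and trimmed SDDs. This gives equivalence by syntactic identity inside a package that keeps everything canonical.
\item Bringing arbitrary normalised inputs into canonical form requires compression, and compression can incur an exponential blow-up. Van den Broeck and Darwiche later showed that apply on compressed SDDs is not polynomial in general.
\item You instead reduce equivalence to unsatisfiability of $(D_1\land\neg D_2)\lor(\neg D_1\land D_2)$, using only uncompressed apply and a linear bottom-up satisfiability test.
\end{itemize}
So for the statement as formulated here, about arbitrary normalised SDDs, your route is the one that actually delivers a polynomial bound. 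It also matches how the paper verifies rules elsewhere, by conjoining and then counting models. An equally short alternative needs only one apply: $D_1\equiv D_2$ iff $\#D_1=\#D_2=\#(D_1\land D_2)$.

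One small clean-up concerns pruning the products $F_i\land F'_j$. Your remark about testing for $\bot$ ``after trimming'' does not fit the paper's notion of normalisation. There, a normalised SDD at an internal vtree node is always a decomposition, never the constant $\bot$. Instead, store a satisfiability bit with every node you create, computed by the rule from your last paragraph. Since primes are satisfiable by definition, a decomposition is satisfiable iff one of its subs is. Then use these bits to prune the products.
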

It follows directly that if $D_1,D_2$ are SDDs respecting the same vtree $(T,b)$, then verifying whether $f_{D_1}^{-1}(1) \subseteq f_{D_2}^{-1}(1)$, that is,
    whether $D_1$ implies $D_2$, can be done in polynomial time.
    Note that every SDD $D$ can be transformed into a normalised SDD $D'$ in polynomial time \cite{darwiche2011sdd}. Thus,
    we can also omit the condition that $D_1$ and $D_2$ are normalised from Theorem \ref*{lem:sddpoly}. 

\subparagraph*{Structured DNNFs}
Structured circuits generalise SDDs. They allow even more succinct
representations, but lose some desirable closure properties.
A \emph{Boolean circuit} $C$ over a variable set $X$ is a directed, 
acyclic graph with a unique source $v_s$.
The internal vertices $v$ of $C$ are labelled by some gate $g_v \in \{\land,\lor,\neg\}$;
if $g_v \in \{\lor,\land\}$, then $v$ has exactly two children, and if $g_v = \neg$, then $v$ has exactly one child. The leaves
of $C$ are marked with variables in $X$. For a vertex $v$ of $C$ we let
$\text{var}(v)$ be the set of variables mentioned at leaves
below $v$ and $f_v\colon \{0,1\}^{\text{var}(v)}\to\{0,1\}$ be the
Boolean function computed at $v$.
The size $|C|$ of a circuit $C$ is the number of gates (inner vertices) in $C$. A circuit $C$ is in \emph{negation normal form} (NNF), if all 
$\neg$-gates of $C$ appear directly above leaves. Furthermore, an NNF $C$ is called \emph{decomposable} (DNNF), if for all $\land$-gates $v$ with
children $u_1,u_2$ it holds that $\text{var}(u_1) \cap \text{var}(u_2)
= \emptyset$.
As SDDs, DNNFs can also be structured along a vtree. A DNNF $C$ is
a \emph{structured} DNNF (SDNNF)
if it respects a vtree $(T,b)$ in the following way:
\begin{itemize}
    \item For every $\land$-gate $v$ with children $u_1,u_2$, there is an $s \in V(T)$ such that $\text{var}(u_1) \subseteq b(T_{s,l})$ and $\text{var}(u_2) \subseteq b(T_{s,r})$.
    \item For every $\lor$-gate $v$ with children $u_1,u_2$, there is an $s \in V(T)$, such that $\text{var}(s) = \text{var}(u_1) = \text{var}(u_2) = b(T_s)$. 
\end{itemize} 

Finally, a DNNF is called a \emph{deterministic} DNNF (d-DNNF),
if for every $\lor$-gate with children $u,w$ no assignment
satisfies $f_{u}$ and $f_{w}$ at the same time. If we require determinism for
structured DNNF we get d-SDNNFs, which in turn generalise SDDs:
\begin{observation}
    Let $S$ be an \textup{SDD} that respects a vtree $(T,b)$. Then there exists
    a \textup{d-SDNNF} $D$ computing $f_D = f_S$  that respects $(T,b)$
    and has size $\bigO(|S|)$.
\end{observation}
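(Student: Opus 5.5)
The plan is to translate $S$ gate by gate into a circuit by structural recursion on the definition of SDDs, reusing the sharing of sub-SDDs so that each distinct sub-SDD is translated exactly once. Constants $\top$ and $\bot$ and literals $x,\neg x$ become the corresponding leaves (constants can be eliminated by propagation, or treated as allowed leaves; if necessary $\top$ over a set $Y$ can be simulated by $x\lor\neg x$ at a leaf). A decision node $\alpha=\{(F_1,G_1),\dots,(F_k,G_k)\}$ respecting the vtree rooted at $v$ becomes an $\lor$ over $k$ gates $\land(D_{F_i},D_{G_i})$. Binary fan-in is obtained by replacing the $k$-ary $\lor$ with a balanced or linear chain of $k-1$ binary $\lor$-gates. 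This costs $\bigO(k)$ gates per decision node, and $k$ is bounded by the number of edges into the node, so the total size is $\bigO(|S|)$ when size is measured as in the circuit view (counting the element gates $(F_i,G_i)$, which the paper's notion of "gates of $D$ viewed as a circuit" does).

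Next I will check the three properties. \textbf{Decomposability and structure of $\land$-gates:} $F_i$ respects a subtree of $T_{v,l}$ and $G_i$ one of $T_{v,r}$, so $\text{var}(D_{F_i})\subseteq b(T_{v,l})$ and $\text{var}(D_{G_i})\subseteq b(T_{v,r})$, witnessed by $s=v$. \textbf{Determinism:} any assignment satisfies at most one $f_{F_i}$, because the $F_i$ form a $b(T_{v,l})$-partition. Hence the disjuncts $f_{F_i}\land f_{G_i}$ are pairwise disjoint, and every intermediate binary $\lor$ in the chain is deterministic. \textbf{Function:} correctness, $f_D=f_S$, is immediate by induction.

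The main obstacle is the paper's rather strict requirement for $\lor$-gates, namely that both children have variable set exactly $b(T_s)$ for some vtree node $s$. To meet it I will first normalise $S$ in polynomial time, which keeps the size polynomial (and, with care, linear, since normalisation only inserts trivial nodes along vtree paths). Each element then has $F_i$ respecting exactly $T_{v,l}$ and $G_i$ respecting exactly $T_{v,r}$. Alternatively, I can smooth children whose variable set is too small: conjoin them with $\top$ over the missing variables, realised by gates $(y\lor\neg y)$ attached along the vtree. That alternative risks a blow-up beyond $\bigO(|S|)$, so I would take the normalisation route. Under normalisation each $\land$-child $\land(D_{F_i},D_{G_i})$ has variable set $b(T_{v,l})\cup b(T_{v,r})=b(T_v)$, and so does every partial disjunction in the chain, which satisfies the $\lor$ condition with $s=v$. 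If the size bound after normalisation is only polynomial rather than linear, I would instead relax the reading of the $\lor$ condition to $\text{var}(u_j)\subseteq b(T_s)$, which is the standard definition, and then the direct translation already gives $\bigO(|S|)$.
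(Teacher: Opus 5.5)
Your gate-by-gate translation is the right argument; the paper states the observation without proof. Your checks of the $\land$-structure, determinism and correctness are fine. Insisting on the circuit-view size is also right: it is the out-degree of a decision node, i.e.\ its number of element gates, that bounds $k$. Under the parenthetical ``number of distinct sub-SDDs'', elements are not counted, and shared primes could make the translation superlinear.

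The genuine problem is your treatment of the paper's $\lor$-condition. Normalisation is not linear, ``with care'' or otherwise. Lifting a sub-SDD that respects $T_w$ to an ancestor $u$ inserts a decision node at every vtree node on the path from $u$ down to (but excluding) $w$. Take the right-linear vtree over $x_1,\dots,x_n$ (root $v_1$ with left leaf $x_1$ and right child $v_2$, and so on). The constant-size SDD $S=\{(x_1,x_n),(\neg x_1,x_{n-1})\}$ respects this vtree, but its normalisation has $\Theta(n)$ nodes.

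In fact nothing can rescue the linear bound under the literal condition. The function $f_S$ depends essentially on $x_1,x_{n-1},x_n$ and does not factor as a conjunction over any split of these three variables. So any decomposable $\land$-gate computing it has one child equivalent to $\top$ and the other computing $f_S$. Descending from the output of any d-SDNNF $D$ for $f_S$, you therefore reach an $\lor$-gate computing $f_S$, whose variable set contains $x_1$ and $x_n$. The requirement $\text{var}(u_1)=\text{var}(u_2)=b(T_s)$ then forces $s=v_1$. So this gate mentions all $n$ variables, and $|D|=\Omega(n)$.

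Thus the observation is false with the paper's $\lor$-condition, and your ``fallback'' is not optional but the only reading under which it holds. In the standard definition of structured DNNF (Pipatsrisawat and Darwiche) only $\land$-gates are constrained. There, your direct translation with binarised $\lor$-chains and constant propagation gives $\bigO(|S|)$. You should commit to that reading, drop the normalisation route, and state explicitly that the paper's $\lor$-condition (a vtree-smoothness requirement) must be discarded. Your proposed relaxation $\text{var}(u_j)\subseteq b(T_s)$ is vacuous (take $s$ to be the root), so it amounts to exactly that. Note also that with only variable leaves no DNNF computes $\bot$, so constant leaves must be admitted anyway.
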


In the other direction, it has recently be shown that d-SDNNFs can be
strictly more succinct than SDDs, but as a downside they lose
polynomial-time disjunction and negation
\cite{vinall2024structured}. However, we still have the following:

\begin{lemma}\cite{pipatsrisawat2008new}\label{lem:dSDNNFand}
   There is an algorithm that computes for two 
    \textup{d-SDNNF}s $D_1,D_2$, that respect the same vtree $(T,b)$, a \textup{d-SDNNF}
    for $f_{D_1} \land f_{D_2}$ in time $\bigO(|D_1| \cdot |D_2| )$.
\end{lemma}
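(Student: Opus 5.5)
The idea is the standard product construction for decomposable circuits, done gate by gate along the shared vtree. First I normalise both inputs so that every gate is annotated with the vtree node it respects. Concretely, I rewrite each $D_i$ as a d-SDNNF in which every $\land$-gate $g$ carries a vtree node $s(g)$ with $\text{var}$ of its left child inside $b(T_{s,l})$ and of its right child inside $b(T_{s,r})$. Every $\lor$-gate likewise carries the node $s$ with $\text{var}(g)=b(T_s)$. Leaves carry the vtree leaf of their variable.

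If a child's variable set is strictly contained in the required set, I make it \emph{smooth} by conjoining tautologies $(x\lor\neg x)$ for the missing variables, structured along the vtree. I also binarise fan-in. Both steps preserve determinism and decomposability and increase size only by a factor depending on $|T|$. Since the claimed bound is $\bigO(|D_1|\cdot|D_2|)$, I would rather avoid smoothing blow-up. To do so, I let the product gates handle ``missing'' variables implicitly, by allowing a product of a gate at node $s$ with a gate at an ancestor node $s'$.

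The construction creates, for each pair $(g_1,g_2)$ of gates of $D_1,D_2$, at most one new gate $\langle g_1,g_2\rangle$ computing $f_{g_1}\land f_{g_2}$. This gives $\bigO(|D_1||D_2|)$ gates. The recursive rules are as follows, proceeding bottom-up in a topological order of pairs.
\begin{itemize}
\item If $g_1$ is an $\lor$-gate with children $u,w$, set $\langle g_1,g_2\rangle = \langle u,g_2\rangle\lor\langle w,g_2\rangle$, and symmetrically for $g_2$. Determinism is inherited, since the two disjuncts imply the disjoint $f_u$ and $f_w$.
\item If both are $\land$-gates at the same vtree node $s$, with children $(u_1,v_1)$ and $(u_2,v_2)$, set $\langle g_1,g_2\rangle=\langle u_1,u_2\rangle\land\langle v_1,v_2\rangle$. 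This is decomposable and respects $s$, because the left product lives in $b(T_{s,l})$ and the right in $b(T_{s,r})$.
\item If the $\land$-gates sit at different nodes $s_1$ strictly below $s_2$, say $s_1$ in the left subtree of $s_2$, set $\langle g_1,g_2\rangle=\langle g_1,u_2\rangle\land v_2$.
\item If the nodes are incomparable, the variable sets are disjoint, so I form a plain $\land$ at their lowest common ancestor.
\item Literals are handled directly: $x\land x=x$, and $x\land\neg x=\bot$, with constants propagated.
\end{itemize}
Correctness is checked pairwise by induction, using distributivity together with the fact that in a structured circuit each gate's function depends only on its variables.

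The main obstacle is the bookkeeping for $\lor$-gates under the vtree condition, since an $\lor$ requires both children to have variable set exactly $b(T_s)$. The products in the mismatched-node cases may mention fewer variables. I would first try to establish the invariant that $\text{var}(\langle g_1,g_2\rangle)=\text{var}(g_1)\cup\text{var}(g_2)$. In the degenerate cases where a child collapses to $\bot$, I would simply drop that disjunct, so that this invariant plus the input $\lor$-conditions give the output $\lor$-condition.

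If this fails, the fallback is to allow a linear smoothing overhead. I would still argue that the total size is $\bigO(|D_1||D_2|)$ by charging each added tautology gate to an existing pair, using at most $\bigO(1)$ per pair after the $\bot$ eliminations.
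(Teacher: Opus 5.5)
Your pairwise product along the shared vtree is the standard argument behind the cited result of Pipatsrisawat and Darwiche; the paper itself gives no proof. Most of what you write is sound: one gate per pair gives the $\bigO(|D_1|\cdot|D_2|)$ count, and decomposability and determinism are inherited as you say. Your invariant also holds by induction: every product that is not syntactically $\bot$ mentions exactly $\var(g_1)\cup\var(g_2)$, using $\var(u)=\var(w)$ for the children of an input $\lor$-gate.

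\textbf{The gap.} It lies in the step where you conclude from this invariant that the output meets the $\lor$-condition. Your first rule is applied unconditionally. Distributing an $\lor$-gate $g_1$ at node $s$ over $g_2$ yields children that mention $b(T_s)\cup\var(g_2)$, and in general this set is not of the form $b(T_v)$.

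Concretely, let $T$ have root $r$ with children $s$ (leaves $x_1,x_2$) and $t$ (leaves $x_3,x_4$). Take $D_1=(x_1\land x_2)\lor(\neg x_1\land\neg x_2)$, which respects $T$ at $s$, and $D_2=x_3$. Your rules produce $\langle x_1\land x_2,x_3\rangle\lor\langle \neg x_1\land\neg x_2,x_3\rangle$. Both children mention exactly $\{x_1,x_2,x_3\}$, which is $b(T_v)$ for no $v$, so the output does not respect $(T,b)$ in the paper's sense. The same failure occurs when the $\lor$-gate sits strictly below an $\land$-gate of the other circuit. For example, with $D_2=D_1\land x_3$ (an $\land$-gate at $r$), distributing $D_1$ first gives two disjuncts over $\{x_1,x_2,x_3\}$.

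\textbf{The missing idea.} You need a precedence among your cases, dictated by vtree position:
\begin{itemize}
\item If the nodes of $g_1,g_2$ are incomparable, output $g_1\land g_2$ at their lowest common ancestor without distributing.
\item Otherwise, always expand the gate at the strictly higher node: use your one-sided $\land$-rule if it is an $\land$-gate, and distribution if it is an $\lor$-gate.
\item At equal nodes, distribute an $\lor$-gate if one is present, and use the paired $\land$-rule only when both are $\land$-gates.
\end{itemize}
With this order, an $\lor$ at $s$ is only ever distributed over a gate whose variables lie in $b(T_s)$. By your invariant, every surviving disjunct then mentions exactly $b(T_s)$, so the output respects $(T,b)$. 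Each pair still contributes at most one gate. Annotating gates with vtree nodes and answering ancestor/LCA queries in constant time after linear preprocessing of $T$ gives the claimed running time, up to that preprocessing.

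Your smoothing fallback is then unnecessary, and as stated it is not justified anyway. Padding a child at node $t$ up to $b(T_s)$ needs one tautology conjunct for each sibling subtree hanging off the path from $t$ to $s$. That costs up to the depth of $T$ per padded edge, not $\bigO(1)$ per pair.
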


Moreover, for every deterministic DNNF (hence also d-SDNNFs and SDDs)
we can count the number of models in polynomial time (see \cite{knowledgecompilationmap}).
We will use this at several places and write $\# D$ as an abbreviation for $|f^{-1}_D(1)|$.

\section{Proof system based on structured representations}
\label{sec:structuredProofSystems}

An OBDD-based proof system over a variable set $X$ is a derivation
system where each line consists of an order $\leq$ over $X$ and an
OBDD $D$ that respects $\leq$. An OBDD($\land, r, w$)-refutation of an
unsatisfiable CNF $\phi = \bigwedge_{j \in [m]} C_j$ is a derivation
$((D_1,\leq_1),\ldots,(D_\ell,\leq_\ell))$ of an OBDD $D_\ell$ that
represents $f_{D_\ell} = \bot$ where each line $(D_i,\leq_i)$ is
derived according to one of the following rules:

\smallskip
\begin{tabular}{rl}
   \textbf{init} & $f_{D_i} = f_{C_j}$ for some clause $C_j$ in $\phi$. \\
   \textbf{join ($\land$)} & There are $j,j' < i$ such that ${\leq_i} = {\leq_j} =
                    {\leq_{j'}}$ and  $f_{D_i} = f_{D_j} \land f_{D_{j'}}$.\\
   \textbf{reordering ($r$)} & There is a $j < i$ such that $f_{D_i} = f_{D_j}$
                      (but ${\leq_i} \neq {\leq_j}$). \\
   \textbf{weakening ($w$)} & There is a $j < i$ such that ${\leq_i} = {\leq_j}$ and $f_{D_j}^{-1}(1) \subseteq f_{D_i}^{-1}(1)$. 
\end{tabular}
\smallskip

\noindent
It is well known that all derivation rules can be verified in
polynomial time (see \cite{WegenerBP}). We use the common convention
and denote by OBDD($\land, r$), OBDD($\land, w$), and OBDD($\land$)
the subsystem that use only the stated derivation rules. Note that
OBDD($\land$) is already a sound and complete proof system for UNSAT.

To define the new proof systems SDD($\land, r, w$) and d-SDNNF($\land,
r, w$) as well as their restrictions, we literally only have to replace
``OBDD'' by ``SDD'' or ``d-SDNNF'' and ``order $\leq$'' by ``vtree
$(T,b)$'' in the definition above. 
However, we need to carefully check whether they actually produce polynomially
verifiable proofs! In particular, to the best of our knowledge it is
not known whether the reordering rule stated as above is polynomially verifiable for
SDDs or d-SDNNFs. To be more precise, given 
two SDDs $D_1,D_2$ with vtrees $(T_1,b_1) \neq (T_2,b_2)$ it is
unknown whether there exists a polynomial-time algorithm verifying
whether $f_{D_1} = f_{D_2}$ and the same holds for d-SDNNFs. Before
discussing this further, let's verify that the all other properties are
fulfilled. 

\begin{lemma} \label{lem:d-SDNNFproperties} The following properties hold for {\upshape d-SDNNF}s
  and hence, for {\upshape SDD}s.
  \begin{enumerate}
  \item Given a circuit $D$ and a vtree $(T,b)$ it can be verified in
    polynomial time whether $D$ is an {\upshape d-SDNNF} that respects $(T,b)$.
  \item Given {\upshape d-SDNNF}s $D_1,D_2,D_3$ that respect the same vtree,
    it can be checked in polynomial time whether $f_{D_1}\land
    f_{D_2}=f_{D_3}$.
  \item Given {\upshape d-SDNNF}s $D_1,D_2$ that respect the same vtree,
    it can be checked in polynomial time whether $f_{D_1}^{-1}(1)
    \subseteq f_{D_2}^{-1}(1)$.
  \item Given a {\upshape d-SDNNF}s $D$, it can be checked in polynomial time
    whether $f_D=\bot$.
  \end{enumerate}
\end{lemma}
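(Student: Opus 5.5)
The plan is to reduce every item to two polynomial-time primitives that are already available: conjunction of two d-SDNNFs over a common vtree (Lemma~\ref{lem:dSDNNFand}) and model counting $\# D$ for deterministic DNNFs. Throughout, I fix the ambient variable set $X$ of the vtree. Counts are always taken over $\{0,1\}^X$: if a circuit mentions only $\text{var}(D)\subsetneq X$, its count over $\text{var}(D)$ is multiplied by $2^{|X\setminus \text{var}(D)|}$. This factor has polynomially many bits, so the arithmetic stays polynomial.

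\emph{Item 4.} A d-SDNNF $D$ computes $\bot$ iff $\# D = 0$, and $\# D$ is computed in polynomial time. Alternatively, satisfiability of any DNNF is decidable bottom-up in linear time, because decomposability lets $\land$-gates be evaluated independently.

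\emph{Items 3 and 2.} For item 3, I first compute $E \coloneqq D_1 \land D_2$ via Lemma~\ref{lem:dSDNNFand}. This is a d-SDNNF of size $\bigO(|D_1|\cdot|D_2|)$ respecting the same vtree. Since $f_E^{-1}(1) \subseteq f_{D_1}^{-1}(1)$ always holds, we have $f_{D_1}^{-1}(1)\subseteq f_{D_2}^{-1}(1)$ iff $\# E = \# D_1$, and both counts are computable in polynomial time. For item 2, I compute $D \coloneqq D_1\land D_2$ with Lemma~\ref{lem:dSDNNFand}. Equality $f_D = f_{D_3}$ is then two entailment tests in the sense of item 3, each between circuits over the same vtree. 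Concretely, I compute $D\land D_3$ and check $\#(D\land D_3) = \# D = \# D_3$. All intermediate circuits have polynomial size, since each is a product of at most three input sizes.

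\emph{Item 1.} Checking that $D$ is an NNF with fan-in conditions and computing $\text{var}(v)$ for every gate are simple bottom-up passes. For each $\land$-gate with children $u_1,u_2$, I look for a vtree node $s$ with $\text{var}(u_1)\subseteq b(T_{s,l})$ and $\text{var}(u_2)\subseteq b(T_{s,r})$. If such an $s$ exists it is unique: it is the lowest common ancestor of the leaves of $\text{var}(u_1)\cup\text{var}(u_2)$ whenever both sets are nonempty, and the other cases are checked directly. Decomposability follows automatically. For each $\lor$-gate, I check that both children have the same variable set and that this set equals $b(T_s)$ for some $s$; this is a linear search over the vtree nodes. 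So structuredness is plainly polynomial.

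\emph{Determinism is the main obstacle}, because testing whether two arbitrary DNNFs are disjoint is hard in general. I handle it by processing the $\lor$-gates bottom-up in topological order. When an $\lor$-gate $v$ with children $u,w$ is reached, the subcircuits rooted at $u$ and $w$ have already been verified to be d-SDNNFs. Both respect the common subtree $T_s$, since $\text{var}(u)=\text{var}(w)=b(T_s)$. Hence Lemma~\ref{lem:dSDNNFand} applies to them, and I compute $u\land w$ as a d-SDNNF of size $\bigO(|D|^2)$. The gate $v$ is deterministic iff $\#(u\land w)=0$, or equivalently iff $u \land w$ is unsatisfiable by item 4. This makes at most $|D|$ polynomial-time calls, so the whole check is polynomial. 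Finally, every SDD translates into a d-SDNNF of linear size respecting the same vtree (by the observation above), so all four properties transfer to SDDs.
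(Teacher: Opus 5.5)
Your proof is correct and matches the paper's argument for item~1: structuredness is checked via the lowest vtree node containing each gate's variables, and determinism is checked bottom-up by conjoining the two children of each $\lor$-gate with Lemma~\ref{lem:dSDNNFand} and testing unsatisfiability. For items 2--4 the paper simply cites Pipatsrisawat and Darwiche, whereas you derive them directly from conjunction plus model counting (e.g.\ $f_{D_1}^{-1}(1)\subseteq f_{D_2}^{-1}(1)$ iff $\#(D_1\land D_2)=\#D_1$), which is a sound and more self-contained variant of the same approach.
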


A proof for Lemma \ref*{lem:d-SDNNFproperties} can be found in the appendix.

\begin{corollary}
  {\upshape SDD($\land$)},
  {\upshape SDD($\land, w$)},
  {\upshape d-SDNNF($\land$)}, and
  {\upshape d-SDNNF($\land, w$)} are sound and complete propositional proof systems.
\end{corollary}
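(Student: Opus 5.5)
To establish the corollary I would verify the three defining properties of a Cook--Reckhow propositional proof system: polynomial-time verifiability of refutations, soundness, and completeness. I would treat the four systems uniformly. The SDD variants follow from the d-SDNNF variants by the observation that every SDD respecting $(T,b)$ converts to a d-SDNNF of linear size respecting the same vtree. Alternatively, they can be handled directly by Theorem~\ref{lem:sddpoly}.

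\textbf{Verifiability.} A candidate refutation is a sequence of pairs $(D_i,(T_i,b_i))$. For each line I first apply Lemma~\ref{lem:d-SDNNFproperties}(1) to check that $D_i$ is a d-SDNNF respecting $(T_i,b_i)$. Then I check the justifying rule for that line.
\begin{itemize}
\item For \textbf{init}, I must check $f_{D_i}=f_{C_j}$ for a clause $C_j$. A clause is representable by a small d-SDNNF over any vtree, built by deterministic case splitting along the vtree. So I construct such a d-SDNNF $D_C$ for $C_j$ respecting $(T_i,b_i)$, and test mutual entailment of $D_i$ and $D_C$ via Lemma~\ref{lem:d-SDNNFproperties}(3). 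A simpler route is to note that $f_{D_i}\models C_j$ iff $D_i$ conditioned on the falsifying assignment of $C_j$ is unsatisfiable, and that conditioning preserves d-SDNNF; the converse direction is a model count, $\#D_i = 2^{n}-2^{n-|C_j|}$.
\item For \textbf{join}, I use Lemma~\ref{lem:d-SDNNFproperties}(2), after checking that the vtrees coincide.
\item For \textbf{weakening}, I use Lemma~\ref{lem:d-SDNNFproperties}(3).
\item Finally, I check the last line with Lemma~\ref{lem:d-SDNNFproperties}(4).
\end{itemize}
Each check takes time polynomial in the proof size, so the whole verification is polynomial.

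\textbf{Soundness.} Each rule preserves truth under any satisfying assignment of $\phi$: init lines are clauses of $\phi$, join is the conjunction of two previously derived lines, and weakening derives a consequence. By induction on the line index, every line is entailed by $\phi$. Hence deriving $\bot$ implies that $\phi$ is unsatisfiable.

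\textbf{Completeness.} Given an unsatisfiable $\phi$, I fix an arbitrary vtree $(T,b)$ over $\mathrm{var}(\phi)$. I represent each clause by a d-SDNNF respecting it, which is the init step. I then repeatedly join using Lemma~\ref{lem:dSDNNFand} until I obtain a representation of $\bigwedge_j C_j \equiv \bot$. This uses only the join rule, so it is a refutation in all four systems. Completeness requires no size bound; the exponential blowup is irrelevant here.

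The only step I expect to require real care is the init check, namely producing or recognising a clause representation for an arbitrary vtree. I would handle it by the explicit recursive construction along the vtree: at each node, the subcircuit is either a disjunction of a left part that satisfies the clause with a right part equal to $\top$, or a left part that falsifies the clause with a right subclause. This construction is deterministic and of size $\bigO(|X|)$.
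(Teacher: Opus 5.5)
Your proposal is correct and follows the paper's route. The paper reads the corollary off Lemma~\ref{lem:d-SDNNFproperties}, whose items (1)--(4) give polynomial-time checks for well-formedness, join, weakening and the final $\bot$ line; soundness and completeness (join all clauses over one fixed vtree) are treated as immediate. Your explicit handling of the init rule, via clausal entailment plus the count $2^{n}-2^{n-|C_j|}$ or via mutual entailment with a vtree-structured clause circuit, fills in a check the paper leaves implicit.
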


It turns out that without weakening, reordering steps can be
verified in polynomial time.\footnote{We thank the anonymous
  reviewer of an earlier version for pointing this out.}

\begin{lemma}\label{lem:reorderingwithoutweakening}
  {\upshape SDD($\land, r$)} and
  {\upshape d-SDNNF($\land, r$)} are sound and complete propositional proof systems.  
\end{lemma}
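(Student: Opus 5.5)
The plan is to exploit the fact that, in the absence of weakening, every derived line is \emph{syntactically tracked}: it always computes a conjunction of some subset of the clauses of $\phi$. Knowing which subset lets us replace the hard check ``$f_{D_i}=f_{D_j}$ across two different vtrees'' with checks that only touch one d-SDNNF at a time. Soundness is immediate because every rule preserves logical equivalence or conjunction. Completeness follows from the completeness of d-SDNNF($\land$) (resp.\ SDD($\land$)) stated in the corollary above, since these are subsystems. So the whole content is polynomial-time verifiability of the reordering steps.

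\textbf{Tracking clause sets.} Given a purported refutation $((D_1,T_1),\dots,(D_\ell,T_\ell))$, the verifier computes a set $S_i\subseteq \clause(\phi)$ for each line:
\begin{itemize}
\item for an init step from $C_j$, set $S_i=\{C_j\}$;
\item for a join of lines $j,j'$, set $S_i=S_j\cup S_{j'}$;
\item for a reordering from line $j$, set $S_i=S_j$.
\end{itemize}
By induction on $i$ I will show that, whenever all previous lines passed verification, $f_{D_i}=\bigwedge_{C\in S_i} f_C$. Init steps are checked directly: $D_i$ and a trivially built d-SDNNF for $C_j$ over the same vtree are compared using Lemma~\ref{lem:d-SDNNFproperties}(3) in both directions. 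Join steps are checked with Lemma~\ref{lem:d-SDNNFproperties}(2), and well-formedness of each line with Lemma~\ref{lem:d-SDNNFproperties}(1). Emptiness of the last line is checked with Lemma~\ref{lem:d-SDNNFproperties}(4).

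\textbf{Checking a reordering step.} Suppose line $i$ is obtained from line $j$ by reordering, so $D_i$ respects some vtree $T_i\neq T_j$. By induction $f_{D_j}=\bigwedge_{C\in S_j} f_C$. The verifier performs two checks.
\begin{itemize}
\item For every clause $C\in S_j$ it checks $D_i\models C$. This holds iff $D_i$ restricted to the unique partial assignment falsifying $C$ is unsatisfiable. The restriction is obtained by replacing the relevant literal leaves by the constants $\top/\bot$, which keeps the circuit deterministic and decomposable. Its model count, hence its satisfiability, is computable in polynomial time because model counting is polynomial for d-DNNFs.
\item It checks $\#D_i=\#D_j$, again by polynomial-time model counting.
\end{itemize}
The first check gives $f_{D_i}^{-1}(1)\subseteq f_{D_j}^{-1}(1)$. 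Together with equal model counts this forces $f_{D_i}=f_{D_j}$. Conversely, if the step is correct both checks succeed, so the procedure is exact.

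\textbf{Expected obstacle.} The only delicate point is the inclusion test $D_i\models C$ for a circuit structured along an arbitrary vtree. I expect it to go through via conditioning plus counting, but one must verify that substituting constants and counting remains polynomial for d-SDNNFs. If this is awkward, a fallback is to build, for each clause $C$, a small d-SDNNF for $\neg C$ (a single cube) that respects $T_i$. We then conjoin it with $D_i$ by Lemma~\ref{lem:dSDNNFand} and test emptiness by Lemma~\ref{lem:d-SDNNFproperties}(4). The SDD case follows identically, since SDDs are d-SDNNFs by the observation above.
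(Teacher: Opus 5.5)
Your proposal is correct and follows essentially the same route as the paper. You track the sub-CNF computed by each line, and you verify a reordering step by checking $\#D_i=\#D_j$ together with clausal entailment $D_i\models C$ for every tracked clause $C$. The only difference is that the paper cites polynomial-time clausal entailment for d-SDNNFs from the literature, whereas you derive it yourself via conditioning plus counting (or conjunction with a cube), which is fine.
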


\begin{proof}
  We prove the lemma for {\upshape d-SDNNF($\land, r$)}, the statement
  for SDDs follows immediately. 
The \emph{clausal entailment problem} asks whether a representation $D$
entails a clause $C$ ($D\models C$), i.\,e., whether $f_{D}^{-1}(1)\subseteq f_{C}^{-1}(1)$.
The key of the proof is that by \cite{pipatsrisawat2008new} we know
clausal entailment for d-SDNNFs is decidable in polynomial time.

Let $\phi$ be a CNF and consider a {\upshape
  d-SDNNF($\land, r$)}-derivation over $\phi$.
Since reordering does not change the function computed by a d-SDNNF,
it follows by induction that each d-SDNNF $D$ in the derivation computes a CNF $f_D=\phi_D$ for some
$\phi_D\subseteq \phi$.
To prove the lemma, we only need to show that reordering steps can be
verified in polynomial time. If $(D',(T',b'))$ has been
derived from $(D,(T,b))$ by a reordering step, we can verify that $f_{D'}=f_D=\phi_D$ as follows.
\begin{enumerate}
    \item Check, whether $D$ and $D'$ have the same number of models.
    \item Check, whether $D' \models C$ for all clauses $C\in\phi_D$. 
\end{enumerate}
Since both tests can be implemented in polynomial time and together
they are equivalent to the test whether $f_{D'}=\phi_D$, the lemma follows.
\end{proof}

Unfortunately, this argument does not work for d-SDNNF$(\land,r,w)$,
as weakening prevents us from constructing $\phi_D$. To circumvent
this issue, we introduce a \emph{small restructuring rule} that allows to
locally modify the vtree. Interestingly, it follows from
known results that OBDD proof system that only
locally modify the order are equivalent to the ones with unrestricted
reordering. For completeness, we provide a proof in the appendix.

\begin{lemma}\label{lem:OBDDswap}
  Let $D$ and $D'$ be two equivalent \textup{OBDD}s on $n$ variables that respect the orders
  $\leq$ and $\leq'$, respectively. There is a sequence
  $(D,\leq)=(D_1,\leq_1),\ldots,(D_\ell,\leq_\ell)=(D',\leq')$ with
  $\ell\leq n$ and $f_{D_1}=\cdots =f_{D_{\ell}}$ such that
  \begin{enumerate}
  \item For each $2\leq i\leq\ell$ the order $\leq_i$ is obtained from
    $\leq_{i-1}$ by moving a single variable to a different position.
  \item For each $1\leq i\leq\ell$: $|D_i|\leq |D|\cdot|D'|$.
  \end{enumerate}
\end{lemma}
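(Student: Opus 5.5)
We transform $\leq$ into $\leq'$ one variable at a time. Write $\leq'$ as $y_1 <' y_2 <' \dots <' y_n$. For $i = 1,\dots,n$, the order $\leq_{i+1}$ is obtained from $\leq_i$ by moving $y_i$ to position $i$, i.e.\ directly after $y_1,\dots,y_{i-1}$. Inductively, $\leq_{i}$ starts with the prefix $y_1,\dots,y_{i-1}$ in this order, and the remaining variables follow in the relative order given by $\leq$. Steps that do not change the order are dropped. After at most $n$ moves we reach $\leq'$, so $\ell \le n$ and property 1 holds. Each $D_i$ is chosen as the canonical (reduced) $\leq_i$-OBDD of $f = f_D = f_{D'}$, so all $D_i$ compute the same function. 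It remains to bound the size of these canonical OBDDs.

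To bound $|D_i|$, we use the standard characterisation of the reduced OBDD size by subfunctions. For an order whose first $k$ variables form the set $P$, the number of nodes labelled by the $(k+1)$-st variable $z$ is at most the number of distinct subfunctions $f\restrict{\ascr}$ over assignments $\ascr$ to $P$ that essentially depend on $z$. It therefore suffices to bound, for every prefix $P$ of $\leq_i$ and the next variable $z$, the number of distinct subfunctions $f\restrict{\ascr}$ with $\supp{\ascr} = P$.

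Every prefix of $\leq_i$ has the form $P = A \cup B$. Here $A = \{y_1,\dots,y_j\}$ is a prefix of $\leq'$, and $B$ is a prefix of the restriction of $\leq$ to $X \setminus \{y_1,\dots,y_{i-1}\}$ (with $B = \emptyset$ if $j < i-1$). Hence $B$ equals $Q \setminus A$ for some prefix $Q$ of $\leq$. An assignment $\ascr$ to $P$ splits as $\ascr_A \cup \ascr_B$. The function $f\restrict{\ascr_A}$ is determined by the node of $D'$ reached on $\ascr_A$, which gives at most the number of nodes of $D'$ at that level. Further restricting by $\ascr_B$ corresponds to following paths in $D$ over the remaining variables.

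The cleanest way to formalise this is a product construction: the reduced $\leq_i$-OBDD is no larger than an OBDD whose states are pairs (node of $D'$ reached after reading $A$, node of $D$ reached after reading $B$ with the $A$-variables still undetermined). This is the main obstacle. Reading $B$ in $D$ while the $A$-variables, which are interleaved under $\leq$, are not yet fixed does not yield a single node of $D$, but a set of nodes.

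I would resolve it by reversing the roles. Since all of $A$ is read first in $\leq_i$, I would consider $f\restrict{\ascr_A}$ as computed by $D\restrict{\ascr_A}$, which is a $\leq$-OBDD on $X \setminus A$ in which the variables of $B$ form a prefix. The subfunction after $\ascr_B$ is then given by a node of $D\restrict{\ascr_A}$, i.e.\ by a node of $D$ at or below the $B$-frontier. Which node this is depends only on $\ascr_B$ and on the class of $\ascr_A$, and that class is given by the node of $D'$ reached. Hence the pair (node of $D'$, node of $D$) determines the subfunction, so each level has at most $|D|\cdot|D'|$ distinct subfunctions. Since the nodes of $D$ and $D'$ are shared across levels, I would make this count global over all nodes rather than per level, which gives $|D_i| \le |D|\cdot|D'|$ and establishes property 2.
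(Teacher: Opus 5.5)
Your proposal is essentially the paper's proof. Both use the same sequence of orders (the first $i-1$ variables of $\leq'$ in front, the rest in $\leq$-order) and the same key device of attaching $D\restrict{\ascr}$ below each frontier node of $D'$, so at most $|D'|$ copies of a diagram of size $|D|$ appear; the paper builds this glued OBDD explicitly, whereas you bound the canonical $\leq_i$-OBDD by the same count, which is equivalent.

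One point needs to be stated precisely. As in the paper, fix a single representative assignment $\ascr^\ast_v$ for each node $v$ of $D'$, and take the node of $D$ reached by $\ascr_B$ in $D\restrict{\ascr^\ast_v}$. The node reached in $D\restrict{\ascr_A}$ for the actual $\ascr_A$ depends on $\ascr_A$ itself, through the $A$-variables interleaved in $\leq$, not only on its class; read that way, your sentence ``which node this is depends only on $\ascr_B$ and on the class of $\ascr_A$'' is false, and the pair (node of $D'$, node of $D$) would not determine the subfunction.
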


We extend the notion of swapping two positions in an order to locally
restructuring a vtree as follows.
    Let $(T,b)$ be a vtree, $x \in X$ some variable, $v \in V(T)$ the
    unique leaf node such that $b(v) = x$, $w \in V(T)\setminus\{v\}$
    another node in $T$, and $d \in \{\text{left},\text{right}\}$.
    We define the operation {\upshape $\texttt{move}(T,x,w,d)$} as follows:
\begin{figure}
    \centering
    \resizebox{0.8\textwidth}{!}{
\begin{tikzpicture}[,
level distance=12mm,
level 1/.style={sibling distance=32mm},
level 2/.style={sibling distance=16mm},
level 3/.style={sibling distance=8mm}, 
nodes={draw,circle, thick}, inner sep=3pt]
\node[minimum size = 7mm] (root1) at (0,0) {$v_1$}
    child {node[minimum size = 7mm] (l1) {$w$}
         child {node[rectangle, minimum size = 7mm] {$x_1$}}
         child {node[rectangle, minimum size = 7mm] {$x_2$}}
    }
    child {node[minimum size = 7mm] (i1) {$v_2$}
        child {node[minimum size = 7mm] (i2) {$v_3$}
            child {node[rectangle, minimum size = 7mm] (l2) {$x_3$}}
            child {node[rectangle, minimum size = 7mm] (l3) {$x_4$}}
        }
        child {node[minimum size = 7mm] (i3) {$v_4$} 
            child {node[rectangle, minimum size = 7mm] (l4) {$x_5$}}
            child {node[rectangle, minimum size = 7mm] (l5) {$x_6$}}
        }
    };
\node[minimum size = 7mm] (root2) at (10,0) {$v_1$}
    child {node[minimum size = 7mm] (i01) {$u$}
        child {node[minimum size = 7mm] (l1) {$w$}
           child {node[rectangle, minimum size = 7mm] {$x_1$}}
           child {node[rectangle, minimum size = 7mm] {$x_2$}}
        }
        child {node[rectangle, minimum size = 7mm] (l5) {$x_6$}}
    }
        child {node[minimum size = 7mm] (i2) {$v_2$}
            child {node[minimum size = 7mm] (i3) {$v_3$}
                child {node[rectangle, minimum size = 7mm] (l2) {$x_3$}}
                child {node[rectangle, minimum size = 7mm] (l3) {$x_4$}}
        }
        child {node[rectangle, minimum size = 7mm] (l4) {$x_5$}}
    };

    \draw[-{Stealth[length=4mm]}] (4,-2) -- (6,-2) node[draw=none, midway,above,yshift=-0.9cm] {move$(T,x_6,w,$ right$)$};
\end{tikzpicture}
     }
    \caption{The result of using \texttt{move}$(T,x_6,w,\text{right})$ on the left vtree.}
    \label{fig:move-new}
\end{figure}
    \begin{enumerate}
        \item Remove the edge connecting $v$ to its parent $p_v$:
        \begin{itemize}
            \item If $p_v$ is the root of $T$, remove it.
            \item If $p_v$ is not a root, then the vertex has a unique parent $s_1$, and a second child $s_2$. Replace $p_v$ with
            the edge $\{s_1,s_2\}$.
        \end{itemize}
        \item Add the vertex $v$ back to the tree:
        \begin{itemize}
            \item If $w$ is a root, add a new root $u$ with children
              $v$ and $w$. The vertex $v$ becomes the $d$-child of $u$.
            \item Otherwise, let $p_w$ be the parent of $w$. Subdivide the edge $\{p_w,w\}$. The new vertex is called $u$.
            Add an edge $\{u,v\}$ such that $v$ is the $d$-child of $u$.
        \end{itemize}
    \end{enumerate}

    \begin{definition}[small restructuring rule $r^\ast$]
      In \textup{SDD}- and \textup{d-SDNNF}-proofs the \emph{small restructuring rule
        $r^\ast$} allows to derive $(D',(T',b'))$ from $(D,(T,b))$ if
      $f_{D'}=f_{D}$ and $(T',b')$ has been obtained from $(T,b)$ by
      an instantiation of a {\upshape $\texttt{move}(T,x,w,d)$} operation.      
    \end{definition}

    We say that a proof system $A$ \emph{polynomially simulates} (p-simulates) a proof system $B$, if there is a polynomial $p$ such that
    for every unsatisfiable CNF $\phi$ and $B$-refutation of $\phi$ of size $T$ there is an $A$-refutation of $\phi$ of size $p(T)$.

    \begin{theorem}\label{thm:smallrestructuringsimulatesOBDD}
  {\upshape SDD($\land, w, r^\ast$)} and
  {\upshape d-SDNNF($\land, w, r^\ast$)} are sound and complete
  propositional proof systems that polynomially simulate {\upshape OBDD($\land, w, r$)}.        
    \end{theorem}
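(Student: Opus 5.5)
The plan has two parts. First I show that every rule of SDD($\land,w,r^\ast$) and d-SDNNF($\land,w,r^\ast$) is polynomial-time verifiable and sound, which gives a propositional proof system. Then I translate OBDD($\land,w,r$)-refutations line by line; this gives both the polynomial simulation and, since OBDD($\land$) is already complete, completeness. All rules are clearly sound. By Lemma~\ref{lem:d-SDNNFproperties}, well-formedness, join, weakening and the final test $f_D=\bot$ are polynomial-time checkable. The only new ingredient is the verification of an $r^\ast$-step.

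For $r^\ast$, suppose $(D',(T',b'))$ is derived from $(D,(T,b))$ with $T'=\texttt{move}(T,x,w,d)$. Checking that $T'$ arises from $T$ by a single move is easy: guess $x$ by trying all variables, and compare $T$ and $T'$ after contracting the leaf of $x$. To check $f_D=f_{D'}$, I would condition on $x$: $f_D=f_{D'}$ holds iff $f_D\restrict{x=c}=f_{D'}\restrict{x=c}$ for both $c\in\{0,1\}$.

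Conditioning a d-SDNNF on a literal means replacing the leaves $x,\neg x$ by constants and propagating. This preserves decomposability and determinism. Its effect on the structure needs care:
\begin{itemize}
\item an $\land$-gate split along $T_{s,l},T_{s,r}$ where one side was exactly the leaf $x$ becomes a gate over the contracted vtree;
\item $\lor$-gates whose variable set shrank must be smoothed or re-attached to the contracted node.
\end{itemize}
The outcome is that $D\restrict{x=c}$ respects the vtree $T-x$ obtained by deleting $x$'s leaf and suppressing its parent, and has polynomial size. I expect this bookkeeping to be the most delicate technical point.

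Since $\texttt{move}$ only relocates $x$, the vtrees $T-x$ and $T'-x$ coincide. Hence both $D\restrict{x=c}$ and $D'\restrict{x=c}$ respect the same vtree. Equivalence can then be decided by two entailment checks, using part~3 of Lemma~\ref{lem:d-SDNNFproperties}. The SDD case follows from the d-SDNNF case via the observation that SDDs are d-SDNNFs over the same vtree.

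For the simulation, I map an order $x_1<\dots<x_n$ to the right-linear vtree $T_\leq$, in which the $i$-th internal node has left child $x_i$ and right child the next subtree. A $\leq$-OBDD of size $s$ converts in polynomial time into an SDD, hence a d-SDNNF, respecting $T_\leq$ of size $\bigO(s)$. Each OBDD node on $x_i$ with children $u_0,u_1$ becomes the decision $\{(\neg x_i,G_{u_0}),(x_i,G_{u_1})\}$, and skipped variables are handled by trivial $\top$ decisions. With this translation, init, join and weakening lines translate directly with the same vtree, since they need only functional equality or entailment.

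A reordering step from $(D,\leq)$ to $(D',\leq')$ is replaced by the chain from Lemma~\ref{lem:OBDDswap}. This chain has at most $n$ intermediate OBDDs, each of size at most $|D|\cdot|D'|$, and consecutive orders differ by moving one variable. I then check that moving $x$ in the order is realised by a single $\texttt{move}$ on right-linear vtrees:
\begin{itemize}
\item to place $x$ directly before $y$, use $\texttt{move}(T,x,w,\text{left})$, where $w$ is the internal node whose left child is $y$ (for the last variable, $w$ is its leaf);
\item the endpoint cases are placing $x$ first, where $w$ is the root, and placing $x$ last, which is done via $d=\text{right}$ at the final leaf.
\end{itemize}
After removing $x$, the remaining tree is again right-linear, so the result is exactly $T_{\leq_i}$. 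Each intermediate line is the translated OBDD, so the whole refutation blows up only polynomially.
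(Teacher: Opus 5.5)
Your proposal is correct and follows the paper's proof. The simulation uses right-linear vtrees together with Lemma~\ref{lem:OBDDswap}, realising each single-variable move by one \texttt{move}, and an $r^\ast$-step is verified by conditioning on the moved variable $x$ so that the relevant circuits respect a common vtree. The only difference is in the bookkeeping. You condition both $D$ and $D'$ and compare $D|_{x=c}$ with $D'|_{x=c}$ over $T-x=T'-x$. The paper conditions only $D$, conjoins $D|_{x=c}$ with the literal $x=c$ and with $D'$ over a vtree $T''$ that has a dummy variable at $x$'s old position, and compares model counts. Your variant avoids the dummy variable and also makes explicit the check that $T'$ arises from $T$ by a single move, which the paper leaves implicit.
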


    \begin{proof}
First observe that OBDDs can be viewed as SDDs respecting \emph{right-linear} vtrees, which are vtrees where each
left child is a leaf, and that moving a single variable in an order can be simulated by
one \texttt{move}-operation in the right-linear vtree. Thus,
by Lemma~\ref{lem:OBDDswap}, {\upshape SDD($\land, w, r^\ast$)} and hence d-SDNNF($\land, w,
r^\ast$) p-simulate {\upshape OBDD($\land, w, r$)}.

It remains to show that small restructuring for SDDs (d-SDNNFs, resp.) can be verified in polynomial
time.     Assume that $(T',b')$ is obtained from $(T,b)$ by the operation $\texttt{move}(T,x,w,d)$,
and let $(T'',b'')$ be obtained by adding a dummy variable $d$ to $(T',b')$ at the old position of $x$ in $(T,b)$.
    First, we construct two SDDs (d-SDNNFs) $D_0 \coloneqq D\restrict{x = 0}$ and $D_1 \coloneqq D \restrict{x = 1}$. We observe that
    $D_0$ and $D_1$ respect \emph{both} $(T,b)$ and $(T'',b'')$, as the two vtrees only differ in the position of $x$, and the presence of $d$,
    neither of which are mentioned by either representation. We further observe that $D'$ respects both $(T',b')$ and $(T'',b'')$ for similar reasons,
    and that $D_0$ and $D_1$ each have size at most $S$.

    We then construct $D_0' \coloneqq D_0 \land (\neg x)$ as well as
    $D_1' \coloneqq D_1 \land (x)$ that respect $(T'',b'')$ by Theorem~\ref{lem:sddpoly}
    (Lemma~\ref{lem:dSDNNFand}). Moreover, again by applying Lemma~\ref{lem:sddpoly}
    (Lemma~\ref{lem:dSDNNFand}) we construct $D_0''\coloneqq D_0'\land
    D'$ and $D_1''\coloneqq D_1'\land D'$. Since the models of $D$ are
    the disjoint union of the models of $D'_0$ and $D_1'$, testing whether
    $f_{D}=f_{D'}$ is equivalent to testing whether $\# D_0'=\# D_0''$,
    $\# D_1=\# D_1''$, and $\# D_0'+\# D_1'=\# D'$.
            \end{proof}

Finally, let us briefly discuss the possibility of proof systems based on even
stronger representation formats. The first to consider are generalisations of d-SDNNF,
while the second are so-called \emph{free binary decision diagrams} (FBDDs), which generalise
OBDDs by dropping the requirement that edges need to respect a linear order.
While these representation formats may be more compact, it turns out that verifying even the join rule is NP-hard for 
both d-DNNF \cite{pipatsrisawat2008new}, as well as FBDD \cite{knowledgecompilationmap}. 
It turns out that SDNNF$(\land)$ is not really suited as a proof system either: while it is possible, given
two SDNNF $D_1, D_2$ respecting the same vtree, to compute in polynomial time a third $D$ such that $D \equiv D_1 \land D_2$,
SDNNF does not in general support polynomial-time equivalence checking, unless P=NP \cite{pipatsrisawat2008new}. This result
can easily be further extended to verification of the join rule, that is \emph{verifying} whether
$D \equiv D_1 \land D_2$ for some SDNNFs $D_1, D_2,D$ respecting the same vtree.
Therefore, it is not useful to define
proof systems using these stronger representation formats.

\section{Lifting from satisfiable CNFs \label{sec:meta}}

Our main goal is to separate proof systems relying on different
representation formats.
While the relative succinctness of different structured circuits that represent Boolean
functions (with many models) is well-studied, the main obstacle is that we need to work
with unsatisfiable CNFs that have a trivial representation and at the
same time are hard to refute. 

To deal with this, we establish in this section a ``lifting method''
and show that
certain satisfiable CNFs $\phi$ can be transformed into an unsatisfiable $\imp(\phi)$ such that superpolynomial lower bounds
on $\class$-representations for $\phi$ imply superpolynomial lower
bounds for $\class(\land,r)$-refutations of $\imp(\phi)$ for
$\class\in\{\text{OBDD}, \text{SDD}, \text{d-SDNNF}\}$. This method is inspired by
and generalises a related approach for OBDD proof systems from \cite{itsykson2020obdd}. 
In the next section we then focus on applying this lifting method
to obtain our separations.

We call an unsatisfiable CNF $\phi = \bigwedge_{i \in [m]} C_i$
\emph{minimally unsatisfiable} if it does not
have unnecessary clauses, i.\,e. $\phi = \bigwedge_{i \in [m]\setminus\{j\}} C_i$ is
satisfiable for every
$j\in[m]$.

Before we define the lifting operation properly, we first give a brief intuition: let $\phi \coloneqq \bigwedge_{i \in [m]} C_i$
be a CNF. We start by adding to each clause $C_i$
a fresh \emph{control variable} $y_i$. This allows us to ``turn off'' $C_i$ by assigning $y_i$ positively, whenever necessary.
Finally, we add new conditions of the form $(C_i \lor y_i) \to (z_i \to z_{i+1})$ along with $z_1$ and $\neg z_{m+1}$.
This means that, if all clauses $C_i \lor y_i$ hold, then the implication chain simplifies to a contradiction.
However, as all clauses need to be satisfied, the transformed CNF is unsatisfiable.
Formally, the lifting operation transforms $\phi$ to $\mathcal{Z}(\phi)$ as follows:

\begin{definition} %
    Let $\phi = \bigwedge_{i \in [m]} C_i$ be a \textup{CNF} over $X=\{x_1,\ldots,x_n\}$. We define $\mathcal{Z}(\phi)$ as
    \begin{equation}
    \imp(\phi) \coloneqq \left(\bigwedge_{i \in [m]} (C_i\lor y_i)\right) \;\land\; \left(\bigwedge_{i \in [m]}
     \bigwedge_{\lambda \in C_i\cup\{y_i\}} \left( \neg \lambda \lor \neg z_i
       \lor z_{i+1}  \right)\right) \;\land\; z_1 \;\land\; \neg z_{m+1},
    \end{equation}
    where $Y=\{y_1,\ldots,y_m\}$ and $Z = \{ z_1, \dots, z_{m+1} \}$ are fresh variables.
\end{definition}
Note that $\bigwedge_{\lambda_j \in C_i\cup\{y_i\}} \left( \neg
  \lambda_j \lor \neg z_i \lor z_{i+1}  \right)$ is equivalent to
$\left( C_i\lor y_i \to (z_i \to z_{i+1}) \right)$.
It is therefore easy to see that $\imp(\phi)$ is always unsatisfiable.

\begin{lemma}\label{lem:impmin}
    Let $\phi = \bigwedge_{i \in [m]} C_i$ be a \textup{CNF}. 
    Then $\imp(\phi)$ is minimally unsatisfiable.
  \end{lemma}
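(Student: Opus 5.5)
The plan is a direct case analysis. Unsatisfiability of $\imp(\phi)$ was already observed after the definition. So it remains to show that for every clause $D$ of $\imp(\phi)$, the formula $\imp(\phi)$ with $D$ removed is satisfiable. In each case I will write down an explicit assignment to $X\cup Y\cup Z$.

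The guiding idea is that $y_i=1$ satisfies $C_i\lor y_i$. The only constraint that then remains from block $i$ is that whenever some literal of $C_i\cup\{y_i\}$ is true we need $z_i\to z_{i+1}$. Setting $y_i=1$ makes $y_i$ itself true, so in block $i$ we just need $z_i\to z_{i+1}$.

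\textbf{Removing a unit clause.} If $D=z_1$, set all $z_i=0$ and all $y_i=1$. Every implication clause contains $\neg z_i$ and is satisfied, and every $C_i\lor y_i$ holds. If $D=\neg z_{m+1}$, set all $z_i=1$ and all $y_i=1$. Every implication clause contains $z_{i+1}$, and again every $C_i\lor y_i$ holds. In both cases $X$ is arbitrary.

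\textbf{Removing a clause of block $j$.} Here I break the chain at position $j$: set $z_1=\dots=z_j=1$ and $z_{j+1}=\dots=z_{m+1}=0$, so that $z_i\to z_{i+1}$ holds for all $i\neq j$ and fails only for $i=j$. For all $i\neq j$ set $y_i=1$. All clauses of block $i\neq j$ are then satisfied, regardless of how $X$ is chosen. So everything reduces to block $j$, where we must make the (at most one) required literal true and every remaining literal of $C_j\cup\{y_j\}$ false.

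\begin{itemize}
\item If $D=C_j\lor y_j$, set $y_j=0$ and choose $X$ to falsify every literal of $C_j$. Then every $\lambda\in C_j\cup\{y_j\}$ is false, so the clauses $\neg\lambda\lor\neg z_j\lor z_{j+1}$ hold.
\item If $D=\neg y_j\lor\neg z_j\lor z_{j+1}$, set $y_j=1$, satisfying $C_j\lor y_j$, and falsify all literals of $C_j$ via $X$. The remaining block-$j$ clauses have a true $\neg\lambda$.
\item If $D=\neg\lambda\lor\neg z_j\lor z_{j+1}$ for a literal $\lambda\in C_j$, set $y_j=0$, make $\lambda$ true, and make all other literals of $C_j$ false. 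Then $C_j\lor y_j$ holds, and every remaining block-$j$ clause involves a false literal $\lambda'\neq\lambda$, so its $\neg\lambda'$ is true.
\end{itemize}

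The only point requiring care, and the expected main obstacle, is that these assignments to $X$ must be consistent. This is exactly where the convention that no clause contains both $x$ and $\neg x$ is used. It makes $C_j$ falsifiable, and it also makes it possible to set one literal of $C_j$ true and all others false, since treating clauses as sets excludes duplicated literals. No other clause constrains $X$, because all other blocks are already satisfied through $y_i=1$ and the $Z$-assignment. Variables of $X$ outside $C_j$ are set arbitrarily.
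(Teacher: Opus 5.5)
Your proof is correct and follows the paper's own case analysis essentially verbatim: for a removed block-$j$ clause you break the $z$-chain at position $j$ and turn off all other blocks via $y_i=1$, and for the unit clauses $z_1$ and $\neg z_{m+1}$ you use constant $z$-assignments. In the case $D=C_j\lor y_j$ you also require $X$ to falsify $C_j$, which is right. The paper's version of that case leaves this condition out, but it is needed to satisfy the clauses $\neg\lambda\lor\neg z_j\lor z_{j+1}$ for $\lambda\in C_j$.
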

\begin{proof}
    Let $C \in \imp(\phi)$ be some clause in $\imp(\phi)$ and $\Phi'
    \coloneqq \imp(\phi) \setminus \{ C \}$ the remaining CNF.
    We have to show that $\Phi'$ is satisfiable and consider the following cases:
    \begin{enumerate}
        \item   $C = C_\ell\lor y_\ell$ for some $\ell \in [m]$. Any
          assignment $\assign$ with $\assign(y_\ell)=0$, $\assign(y_i)=1$
          for $i\neq \ell$, $\assign(z_i)=1$ for $i\leq \ell$, and
          $\assign(z_i)=0$ for $i > \ell$ satisfies  $\Phi'$.
        \item   $C = (\neg y_\ell \lor \neg z_\ell \lor
          z_{\ell+1})$. Any assignment $\assign$ that falsifies all
          literals from $C_\ell$ and sets $\assign(y_i)=1$
          for $i\in[m]$, $\assign(z_i)=1$ for $i\leq \ell$, and
          $\assign(z_i)=0$ for $i > \ell$ satisfies  $\Phi'$.
        \item   $C = (\neg \lambda \lor \neg z_\ell \lor z_{\ell+1})$ for
          $\lambda\in C_\ell$.
          Let $\assign$ be an assignment that satisfies $\lambda$, falsifies all
          literals from $C_\ell\setminus\{\lambda\}$ and sets
          $\assign(y_\ell)=0$,
          $\assign(y_i)=1$
          for $i\in[m]\setminus\{\ell\}$, $\assign(z_i)=1$ for $i\leq \ell$, and
          $\assign(z_i)=0$ for $i > \ell$. Then $\assign$ satisfies  $\Phi'$. 
        \item  $C = z_1$. Any $\assign$ with $\assign(y_i)=1$
           and $\assign(z_i)=0$ for all $i$
          satisfies  $\Phi'$.
        \item  $C = \neg z_m$. Any $\assign$ with $\assign(y_i)=1$
          and $\assign(z_i)=1$ for all $i$
          satisfies  $\Phi'$. \qedhere
    \end{enumerate}
\end{proof}

Recall from the preliminaries that in an $(k,\ell)$-CNF every clause contains at most $k$ variables and
every variable appears in at most $\ell$ clauses. 

\begin{theorem}\label{th:meta}
    Let $\class \in \{ \textup{OBDD}, \textup{SDD}, \textup{d-SDNNF}
    \}$, and $\phi$ be an $n$-variable $(k, \log n)$-CNF. 
    If there is a $\class(\land,r)$-refutation
    of $\imp(\phi)$ of size $t(n)$, then there is a $\class$-representation of $\phi$ of size $\bigO(t(n)^2 \cdot n^{2k^2})$.
\end{theorem}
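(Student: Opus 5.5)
The plan is to exploit the absence of weakening. Every line of a $\class(\land,r)$-derivation from $\imp(\phi)$ computes exactly $\bigwedge S$ for some set $S$ of clauses of $\imp(\phi)$: this holds for \textbf{init} lines, and \textbf{join} and \textbf{reordering} preserve it. This is the same invariant used in the proof of Lemma~\ref{lem:reorderingwithoutweakening}. The final line computes $\bot$, so by Lemma~\ref{lem:impmin} its clause set is all of $\imp(\phi)$. I would therefore look at a join step $D = D' \land D''$, with $D',D''$ respecting a common vtree $(T,b)$ and of size at most $t(n)$, where the clause sets $S'$ and $S''$ are both satisfiable but $S' \cup S''$ covers a large part of $\imp(\phi)$. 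The natural candidate is the first line in the refutation whose clause set contains \emph{all} the clauses $C_i \lor y_i$, or else the last join step.

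Next I would restrict this line by a partial assignment $\rho$ to $Y \cup Z$, using that conditioning an OBDD, SDD or d-SDNNF keeps the structure and does not increase the size. The choice of $\rho$ depends on which chain clauses are missing from $S'$ (or $S''$). I would set $y_i = 0$ for all clauses $C_i$ that should survive, which turns $C_i \lor y_i$ into $C_i$. I would set the $z$-variables as a threshold, $z_i = 1$ for $i \le j$ and $z_i = 0$ for $i > j$, at an index $j$ where the chain is broken in the chosen clause set. Then every chain clause $(\neg\lambda \lor \neg z_i \lor z_{i+1})$ is satisfied except those at position $j$, and those are absent from the set. Setting $y_j$ appropriately as well, the restricted line should compute $\bigwedge_{i \in I} C_i$ for a large index set $I$, namely all clauses except those that are absent from the chosen set or interact with position $j$.

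The missing clauses $C_i$, $i \notin I$, must then be conjoined back in, using Lemma~\ref{lem:OBDDand}, Theorem~\ref{lem:sddpoly} or Lemma~\ref{lem:dSDNNFand} respectively. This is where the $(k,\log n)$ assumption enters. A single clause with at most $k$ literals has a representation of size polynomial in $k$ with respect to any order or vtree. More usefully, the conjunction of a set of clauses $J$ that all lie in the neighbourhood of one position, involving at most $k\log n$ variables within $O(k)$ clauses, can be represented as a function on $O(k^2 \log n)$ variables. Its size is then at most $2^{O(k^2\log n)} = n^{O(k^2)}$ over any vtree, which I would choose to match $(T,b)$. One conjunction with this block costs a multiplicative $n^{O(k^2)}$. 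Possibly two such blocks are needed, one around $j$ and one for $D''$'s side, or the analogous restriction of $D''$ must also be conjoined, which gives a factor $t(n)$. Together these account for the bound $\bigO(t(n)^2 \cdot n^{2k^2})$.

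I expect the main obstacle to be the combinatorial choice of the line and of $j$. It must guarantee that the set of missing $\phi$-clauses is \emph{local}, i.e.\ bounded by $O(k)$ clauses confined to $O(k\log n)$ variables, rather than scattered. My first attempt is to take the last join $D' \land D''$ and argue with minimality: since $S' \cup S''$ is everything and each side is satisfiable, some chain position $j$ has its clauses split between the two sides. Restricting both $D'$ and $D''$ at that threshold and conjoining them yields $\phi$ up to the clauses sharing variables with $C_j$. Those are at most $k\log n$ clauses of size $k$, hence the $n^{O(k^2)}$ patch. If this split argument fails, I would fall back to choosing the first line containing all $C_i \lor y_i$ and handling its missing chain clauses through the same threshold trick.
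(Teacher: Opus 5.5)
Your overall plan matches the paper's: take the last join $D_\ell=D_p\land D_q$, use the no-weakening invariant and Lemma~\ref{lem:impmin} to get $\psi_{D_p}\cup\psi_{D_q}=\imp(\phi)$, restrict each side, and conjoin with a patch $D^\ast$ on at most $2k^2\log n$ variables. The gap is in the restriction step.

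You restrict only $Y\cup Z$, and you assume the threshold can sit at a position $j$ all of whose chain clauses are absent. Minimality guarantees only \emph{one} missing clause per side. That clause may be $C_j\lor y_j$, or a single chain clause $(\neg\lambda\lor\neg z_j\lor z_{j+1})$. Any assignment to $Z$ with $z_1=1$, $z_{m+1}=0$ has a position $i$ with $z_i=1$, $z_{i+1}=0$. Every chain clause $(\neg\lambda'\lor\neg z_i\lor z_{i+1})$ still present there becomes the unit clause $\neg\lambda'$ over $\var(C_i)$, and no choice of $y$-values removes it. For instance, if only $C_j\lor y_j$ is missing, the best you can get is $\bigwedge_{i\in I}C_i\land\bigwedge_{\lambda\in C_j}\neg\lambda$. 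This is inconsistent with $C_j$, so your final conjunction would be $\bot$ rather than $\phi$, and no later conjunction can repair that.

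A common $j$ for both sides also need not exist. If $\psi_{D_p}=\imp(\phi)\setminus\{z_1\}$ and $\psi_{D_q}=\imp(\phi)\setminus\{\neg z_{m+1}\}$, no chain clause is missing at all. The two sides therefore need independent restrictions, chosen by a case distinction on each side's own missing clause. For $z_1$ (resp.\ $\neg z_{m+1}$) missing, set all $z_i=0$ (resp.\ all $z_i=1$) and all $y_i=0$.

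The missing idea is that the restriction must also fix $\var(C_j)$:
\begin{itemize}
\item If $C_j\lor y_j$ is missing, falsify all of $C_j$ and set $y_j=0$.
\item If $(\neg y_j\lor\neg z_j\lor z_{j+1})$ is missing, falsify all of $C_j$ and set $y_j=1$.
\item If $(\neg\lambda\lor\neg z_j\lor z_{j+1})$ is missing, satisfy $\lambda$, falsify the rest of $C_j$, and set $y_j=0$.
\end{itemize}
In each case, with the threshold at $j$, every present chain clause at $j$ is satisfied. Fixing $\var(C_j)$, however, shortens every other $C_i$ that shares a variable with $C_j$ into a sub-clause not implied by $\phi$. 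This is why one sets $y_i=1$ exactly for those $i$ and $y_i=0$ otherwise. Only then is $D_p\restrict{\ascr_p}$ a genuine sub-CNF of $\phi$. The switched-off clauses are what $D^\ast$ restores; by the $(k,\log n)$ hypothesis there are at most $k\log n$ of them per side, on at most $k^2\log n$ variables.

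In your scheme nothing touches $\var(C_j)$. So your claim that precisely the clauses sharing variables with $C_j$ must be patched has no mechanism behind it. That is exactly where the $(k,\log n)$ hypothesis does its work.

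Note that the paper's own case~3 has your defect too: it leaves all $x_i$ unassigned and sets all $y_i=0$. The surviving unit clauses $\neg\lambda$, $\lambda\in C_\ell$, together with $C_\ell$, can force $D_p\restrict{\ascr_p}\equiv\bot$. That case should be treated like case~1, but with $y_\ell=1$.
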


\begin{proof}
To avoid notational clutter we focus on $\class= \textup{d-SDNNF}$,
the proof for \textup{SDD}s and \textup{OBDD}s (with \emph{order}
instead of \emph{vtree}) is similar.
Consider a \textup{d-SDNNF}-refutation
 of
$\imp(\phi)$ and w.\,l.\,o.\,g. assume that in the last step
$(D_\ell,(T,b))$ with ${D_\ell}\equiv \bot$ has been derived
from $(D_p,(T,b))$ and $(D_q,(T,b))$ with ${D_p}\not\equiv \bot$ and $f_{D_q}\not\equiv
\bot$ via the $\land$-rule.
To prove the theorem we show that there are two partial assignments
$\ascr_p,\ascr_q$ and a \textup{d-SDNNF} $D^\ast$ of size $\bigO(n^{2k^2})$ that respects
$(T,b)$ such that
\begin{equation}\label{eq:1}
  \phi \equiv {D_p\restrict{\ascr_p}}\land
  {D_q\restrict{\ascr_q}}\land {D^\ast}
\end{equation}
The theorem then follows from Lemma~\ref{lem:dSDNNFand} (Lemma~\ref{lem:sddpoly} for SDD and Lemma~\ref{lem:OBDDand}
for OBDD).

As in the proof of Lemma~\ref{lem:reorderingwithoutweakening} we associate to each
d-SDNNF $D$ in the derivation the CNF $\psi_D\subseteq
\imp(\phi)$ with $D\equiv {\psi_D}$. Since ${D_p}\not\equiv \bot$ and ${D_q}\not\equiv
\bot$, let $C_p\in \imp(\phi)\setminus \psi_{D_p}$ and $C_q\in
\imp(\phi)\setminus \psi_{D_q}$ be two missing clauses.

    Our goal is to construct partial assignments $\ascr_p,\ascr_q$ such that we can satisfy all implication clauses without removing too many of the original clauses.
We now define $\ascr_p$ by a case distinction on $C_p$; $\ascr_q$ is
obtained analogously. While parsing the definition, the reader is
asked to verify that
$(\imp(\phi)\setminus C_p)\restrict{\ascr_p}\subseteq \phi$
as an
important feature of the construction (besides removing the
auxiliary variables) is, that the restriction
satisfies some clauses of $\phi$, but does
not produce sub-clauses.
    \begin{enumerate}
        \item   $C_p = C_\ell\lor y_\ell$ for some $\ell \in [m]$.
         We let $\ascr_p$ falsify all literals in $C_\ell$ and leave the
          remaining $x_i$ unassigned. For the control variables $y_i$
          we set
         $\ascr_p(y_i) \coloneqq 1$ if $i\in\{j\mid
           \var(C_j)\cap\var(C_\ell)\neq\emptyset\}\setminus\{\ell\}$,
           and $\ascr_p(y_i)\coloneqq 0$ in all other cases.
           Furthermore, set
          $\ascr_p(z_i)\coloneqq 1$ for $i\leq \ell$, and
          $\ascr_p(z_i)\coloneqq 0$ for $i > \ell$.
        \item   $C_p = (\neg \lambda \lor \neg z_\ell \lor z_{\ell+1})$ for
          $\lambda\in C_\ell$. We let $\ascr_p$ satisfy $\lambda$,
          falsify all other literals in $C_\ell$, and leave the
          remaining $x_i$ unassigned. As in case~1, we set
         $\ascr_p(y_i) \coloneqq 1$ if $i\in\{j\mid
           \var(C_j)\cap\var(C_\ell)\neq\emptyset\}\setminus\{\ell\}$,
           and $\ascr_p(y_i)\coloneqq 0$ in all other cases.
           Again, we set
          $\ascr_p(z_i)\coloneqq 1$ for $i\leq \ell$, and
          $\ascr_p(z_i)\coloneqq 0$ for $i > \ell$.
        \item   $C_p = (\neg y_\ell \lor \neg z_\ell \lor
          z_{\ell+1})$. We leave all $x_i$ unassigned and set
          $\ascr_p(y_i)\coloneqq 0$ for all $i\in [m]$, 
          $\ascr_p(z_i)\coloneqq 1$ for $i\leq \ell$, and 
          $\ascr_p(z_i)\coloneqq 0$ for $i > \ell$.
        \item  $C_p = z_1$. Set all  $\ascr_p(y_i)\coloneqq 0$, all
          $\ascr_p(z_i)\coloneqq 0$, and leave all $x_i$ unassigned.
        \item  $C_p = \neg z_m$. Set all  $\ascr_p(y_i)\coloneqq 0$, all
          $\ascr_p(z_i)\coloneqq 1$, and leave all $x_i$ unassigned.
    \end{enumerate}

Since by Lemma~\ref{lem:impmin} $\imp(\phi)$ is minimally
unsatisfiable and $D_p\land D_q \equiv \bot$ we have that every clause
of $\imp(\phi)$ is contained in the clause set
$\psi_{D_p}\cup\psi_{D_q}$. Moreover, by construction $\ascr_p$ ($\ascr_q$) falsifies
no clause of $\psi_{D_p}$ ($\psi_{D_q}$). Let us estimate how many
clauses of the form $C_i\lor y_i$ are satisfied. In cases 3-5 none of
them is satisfied. In cases 1 and 2 we satisfy all clauses $C_j\lor
y_j$ (via the control variable $y_j$) that share a variable with
$C_\ell$. Since $\phi$ is a $(k,\log n)$-CNF, each of the at most $k$
variables in $C_\ell$ appears in at most $\log n$ clauses. Hence, at
most $k\cdot\log n$ such clauses are satisfied by $\ascr_p$ and let $I_p\subseteq
[m]$ be the set of their indices. As the same holds for $q$, it
follows that $\phi\equiv {D_p\restrict{\ascr_p}}\land
  {D_q\restrict{\ascr_q}}\land \bigwedge_{i\in I_p\cup I_q}C_i$.
To finalise the proof of (\refeq{eq:1}) and hence of Theorem~\ref{th:meta},
observe that $\bigwedge_{i\in I_p\cup I_q}C_i$ mentions at most
$2k^2\log n$ variables. As d-SDNNFs (OBDDs, SDDs, \ldots) can
represent any Boolean function over $N$ variables with any vtree in
size $2^{N}$, the existence of the desired d-SDNNF $D^\ast$ of size at most
$2^{2k^2\log n}=n^{2k^2}$ follows.
\end{proof}

It is also possible to extend this theorem to further compilation classes $\class'$, as long as $\class'$ supports polynomial conjunction,
restriction to partial assignments, and can efficiently represent $D^*$.

We can also use lower bounds on $\class(\land)$-compilations for $\phi$ to obtain a lower bound for $\class(\land)$-refutations
of $\imp(\phi)$. A $\class(\land)$-compilation of a satisfiable CNF $\phi$ is defined analogously to a $\class(\land)$-refutation,
however, instead of obtaining a contradiction, the goal is to obtain a $\class$-representation of $\phi$.

\begin{theorem}\label{th:sddmeta}
    Let $\phi = \bigwedge_{i \in [m]} C_i$ be a \textup{CNF}, and $\class \in \{ \textup{OBDD}, \textup{SDD}, \textup{d-SDNNF} \}$. 
    If there exists a $\class(\land)$-compilation of $\control(\phi) \coloneqq \bigwedge_{i \in [m]} C_i \lor y_i$ of 
    size $S$, then there exists a $\class(\land)$-refutation of $\imp(\phi)$ of size $\bigO(m \cdot S)$.
\end{theorem}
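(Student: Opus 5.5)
The idea is to run the given $\class(\land)$-compilation of $\control(\phi)$ unchanged inside the refutation, and then spend $\bigO(1)$ lines of size $\bigO(S)$ per index $i\in[m]$ to propagate the implication chain $z_1\to z_2\to\cdots\to z_{m+1}$. Since only the join rule is available, every line must respect one fixed vtree. Let $(T,b)$ be the vtree (order, for OBDD) used by the compilation; its last line is a $\class$-representation $D$ with $f_D=\control(\phi)$. I extend $(T,b)$ to a vtree $(T^+,b^+)$ over $X\cup Y\cup Z$: a new root whose left subtree is $T$ and whose right subtree is a right-linear vtree on $z_1,\dots,z_{m+1}$. For OBDD this means appending the $z_i$ after all other variables.

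\textbf{Step 1 (reuse the compilation).} A $\class$-representation respecting $T$ also respects $T^+$, because every node of $T$ is still a node of $T^+$ and the structuring conditions refer only to subtrees. The init clauses $C_i\lor y_i$ are clauses of $\imp(\phi)$. Hence the whole compilation is a valid $\class(\land)$-derivation from $\imp(\phi)$ over $T^+$, of size $S$, and it ends with $D$. I assume w.l.o.g. that $S\geq n+m$, since $D$ must mention all variables of $X\cup Y$; this gives $|C_i|\leq S$.

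\textbf{Step 2 (derive the implication constraints).} For each $i$ I derive $G_i\equiv (C_i\lor y_i)\to(z_i\to z_{i+1})$ by joining its $|C_i|+1$ init clauses one at a time. Every intermediate function has the form $(\neg z_i\lor z_{i+1})\lor \bigwedge_{\lambda\in L}\neg\lambda$ for some $L\subseteq C_i\cup\{y_i\}$. Such a function has a small deterministic structured representation respecting $T^+$:
\[
\bigl((\neg z_i\lor z_{i+1})\land \top\bigr)\;\lor\;\bigl((z_i\land\neg z_{i+1})\land \textstyle\bigwedge_{\lambda\in L}\neg\lambda\bigr).
\]
Here the two disjuncts are disjoint, each $\land$ splits across the root of $T^+$, and the term $\bigwedge_{\lambda\in L}\neg\lambda$ has a representation of size $\bigO(|L|)$ respecting any vtree. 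Padding the branches with $\top$ to satisfy the variable condition on $\lor$-gates costs $\bigO(n+m)$ more. For SDDs the same function is written as the partition $\{(\top_{X\cup Y}\text{-side}, \ldots)\}$ at the root, i.e.\ with primes $\bigwedge\neg\lambda$ and its complement on the left and the corresponding $Z$-subs on the right. This gives $\bigO(|C_i|)$ lines of size $\bigO(n+m)$ each.

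\textbf{Step 3 (run the chain).} Take the init clause $z_1$ and join it with $D$ to get $D\land z_1$. Inductively, join $D\land z_1\land\dots\land z_i$ with $G_i$. Since $D\models C_i\lor y_i$, the result is equivalent to $D\land z_1\land\dots\land z_{i+1}$. This function is represented by a single $\land$-gate (an SDD with one prime $D$ and sub $z_1\land\dots\land z_{i+1}$) of size $|D|+\bigO(m)=\bigO(S)$. Finally, joining with $\neg z_{m+1}$ yields $\bot$. Summing over all parts, the refutation has size $S+\sum_i\bigO(|C_i|(n+m))+\bigO(m\cdot S)$, which is $\bigO(m\cdot S)$ once one uses $|C_i|(n+m)\leq \bigO(S)$. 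If $|C_i|$ is large, I instead derive $G_i$ directly from the at most $S$-sized clauses, or I charge this cost to the compilation, which must already contain $C_i$.

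\textbf{Main obstacle.} The delicate part is making sure every line really is a legal $\class$-object over the single vtree $T^+$. For OBDDs this reduces to checking widths along the order with $Z$ last, which is easy. For d-SDNNF and especially normalised SDDs, the representation from Step 2 has to be a valid (partition-based) decision at the root of $T^+$, and the conjunction lines from Step 3 have to fit the structure. I would verify these two concrete constructions carefully, first for SDDs. The d-SDNNF and OBDD cases then follow from the observation that SDDs are d-SDNNFs and from the order-based analogue.
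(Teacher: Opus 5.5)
Your construction is the paper's: reuse the compilation, derive each block $G_i\equiv (C_i\lor y_i)\to(z_i\to z_{i+1})$ by joins, and join $z_1$, $G_1,\dots,G_m$ and $\neg z_{m+1}$ into the running line $D\land z_1\land\dots\land z_i$. Your extension of the vtree by a $Z$-part and your explicit representations of the intermediate lines fill in details the paper omits. One small correction: in an SDD the primes at the root must form a partition, so the Step~3 line should be $\{(D,z_1\land\dots\land z_{i+1}),(\neg D,\bot)\}$. It is still of size $\bigO(S)$, because negating an SDD only negates its subs.

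The genuine gap is the final size estimate. From $S\geq n+m$ and $S\geq\sum_i|C_i|$ (the compilation must contain every init line $C_i\lor y_i$) you cannot conclude $|C_i|(n+m)\le\bigO(S)$. Neither fallback helps either: in a $\class(\land)$-derivation, $G_i$ can only be obtained by joining its $|C_i|+1$ init clauses, and every intermediate line depends on all literals joined so far.

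In fact $\bigO(m\cdot S)$ cannot be reached without an extra hypothesis. Take $\phi=x_1\lor\dots\lor x_n$, so $m=1$ and there is a one-line compilation of size $\bigO(n)$. Every satisfiable conjunction of clauses of $\imp(\phi)$ depends on each $x_j$ whose clause $\neg x_j\lor\neg z_1\lor z_2$ it contains. At most $2^d$ lines of a join DAG lie at distance $d$ from the final line, so these $n$ clauses have average distance $\Omega(\log n)$. Summing over the lines on shortest paths gives total size $\Omega(n\log n)$.

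What your argument does prove is the bound $\bigO(m\cdot S+\sum_i|C_i|(n+m))$. For OBDDs and SDDs, where intermediate lines need no padding and have size $\bigO(|L|)$, this improves to $\bigO(m\cdot S+\sum_i|C_i|^2)$. Either bound is $\bigO(m\cdot S)$ once clause width is bounded by a constant $k$, since then $|C_i|(n+m)\le kS$. That covers every use in the paper, since the vertex cover and equality formulas are 2-CNFs.

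The paper's own proof has the same hole: it simply posits the representations $D^i$ without accounting for their derivation. So the right fix is to add a bounded-width hypothesis, or state the weaker bound, rather than try to close the step as written.
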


\begin{proof}
    Let $\mathcal{D} \coloneqq D_1, \dots, D_\ell$ be such a compilation. Notably, $D_\ell$ is equivalent to 
    $f_\phi = \bigwedge_{i \in [m]} C_i \lor y_i$. We can extend $\mathcal{D}$ into a 
    $\class(\land)$-refutation for $\imp(\phi)$ as follows:
    first, we compute $\hat{D^0} = D_\ell \land z_1$. This trivially has size $\bigO(S)$.
    Next, we compute for all $i \in [m]$ a $\class$-representation $D^i$ for the $i$th implication condition as
    \[ D^i \coloneqq \bigwedge_{\lambda \in C_i \cup \{ y_i \}} \left( \lambda \to ( z_i \to z_{i+1} ) \right). \]
    Finally, we iteratively compute $\hat{D^i} = \hat{D^{i-1}} \land D^i $. We observe that for all $i \in [m]$
    \[ \hat{D^i} \equiv \control(\phi) \land \bigwedge_{j \in [i]} z_j. \]
    Because each $z_j$ is a fresh variable that does not otherwise appear in $\control(\phi)$, each $\hat{D^i}$ has size $\bigO(S)$ as well. 
    Finally, if $D^{m+1} = \neg z_{m+1}$, then $D^* = \hat{D^{m+1}} \land D^{m+1} = \bot$ is unsatisfiable,
    and $\mathcal{D}' = D_1, \dots, D_\ell, \hat{D^0}, D^1, \hat{D^1} \dots, D^{m+1}, \hat{D^{m+1}} $ is a $\class(\land)$-refutation
    of $\imp(\phi)$ of size $\bigO(m \cdot S)$.

\end{proof}

\section{Separations\label{sec:separations}}

Now we are able to utilize the lifting theorem from the last section
to obtain separations between different proof systems. We start by
showing in Section~\ref{sec:SDDstronger} that in the absence of weakening, proof systems based on
structured circuits can provide strictly shorter proofs than their
OBDD counterparts. Moreover, this holds even if the OBDD-system has
reordering available and the SDD- or d-SDNNF-system does not.

Second, we show that separations that have been obtained for different
variants of OBDD-systems can also be obtained for proof systems based on
structured circuits.
In particular, we prove the following for all $\class,\classD \in \{ \text{OBDD}, \text{SDD}, \text{d-SDNNF}
\}$, where the case $\class=\classD=\text{OBDD}$ was already known:
\begin{itemize}
\item \classD($\land$, $r$) does not p-simulate \class($\land$,
  $w$) \quad (Section~\ref{sec:sep1}) 
\item \class($\land$, $w$) does not p-simulate \classD($\land$,
  $r$) \quad (Section~\ref{sec:sep2})
\item An exponential separation between \class($\land$) and \classD($\land$,
  $r$) \quad (Section~\ref{sec:sep3})
\end{itemize}

\subsection{OBDD($\land$, $r$) does not p-simulate SDD($\land$)\label{sec:sepgood}}
\label{sec:SDDstronger}

To obtain the separation, we lift from \emph{vertex cover formulas}. 
For a graph $G = (V,E)$ we define $\text{VC}_G$ as
\begin{equation}
\text{VC}_G \coloneqq \bigwedge_{\{ u,v \} \in E } (u \lor v). 
\end{equation}
Note that if $G$ has maximum degree $\Delta$, then $\text{VC}_G$ is a $(2,\Delta)$-CNF.
To define a suitable graph underlying this formula, we use the
following product construction: for two graphs  $G = (V_1,E_1)$ and $H = (V_2,E_2)$ we define $G \times H = (V_1 \times V_2, E')$ with
    \begin{equation*}
        \begin{split}
            E' \quad=\quad  &\{ \{(v_1,w_1),(v_2,w_2)\} \mid v_1 = v_2 \text{ and } \{w_1,w_2\} \in E_2  \} \\
                 \cup\; &\{ \{(v_1,w_1),(v_2,w_2)\} \mid w_1 = w_2 \text{ and } \{ v_1,v_2 \} \in E_1  \}.
        \end{split}
    \end{equation*}
Essentially, $G \times H$ is obtained by replacing every $v \in V_1$
with a copy $H_v$ of $H$, and connecting $w_1 \in V(H_v),w_2 \in V(H_w)$,
if the corresponding $v$ and $w$ are adjacent in $G$ and $w_1 = w_2$.
Here we consider the case where $G$ is a tree, and $H$ is a
path: we define $G[h,\ell] \coloneqq T_h \times P_\ell$, where  $T_h$ is a complete binary tree of height $h$ and $P_\ell$ is a path of length $\ell$. 

\begin{observation}
For all $h,\ell \geq 1$, $\text{VC}_{G[h,\ell]}$ is a $(2,5)$-\textup{CNF} of treewidth $\ell$.
\end{observation}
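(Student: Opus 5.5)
The plan is to check the two claims separately: first that $\text{VC}_{G[h,\ell]}$ is a $(2,5)$-CNF, then that its treewidth is $\ell$, with the treewidth upper bound coming from an explicit decomposition and the lower bound from a grid minor. The second claim is where the real work, and a caveat, lies.

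\emph{The $(2,5)$ part.} Every clause $(u\lor v)$ of $\text{VC}_{G[h,\ell]}$ has exactly two literals. A variable $(v,w)$ occurs in exactly $\deg_{G[h,\ell]}((v,w))$ clauses. By the product definition, this degree is $\deg_{T_h}(v)+\deg_{P_\ell}(w)$. Since $\deg_{T_h}(v)\le 3$ (internal nodes; the root has $2$, leaves $1$) and $\deg_{P_\ell}(w)\le 2$, every variable occurs in at most $5$ clauses.

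\emph{Upper bound on treewidth.} The primal graph of $\text{VC}_G$ is exactly $G$, since every clause is an edge and every vertex lies on an edge. So it suffices to build a tree decomposition of $G[h,\ell]$ of width $\ell$. Write the path as $w_1,\dots,w_\ell$ and let $P_v=\{(v,w_j)\mid j\in[\ell]\}$ be the path copy at tree node $v$.
\begin{enumerate}
\item For every node $v$ of $T_h$ take a bag $P_v$.
\item For every tree edge $\{p,c\}$ with $p$ the parent, insert between the bags $P_p$ and $P_c$ a path of bags $B_0,\dots,B_\ell$ with $B_i=\{(c,w_1),\dots,(c,w_i)\}\cup\{(p,w_i),\dots,(p,w_\ell)\}$. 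This sweeps $P_p$ into $P_c$ one vertex at a time.
\end{enumerate}
Each $B_i$ has at most $\ell+1$ vertices, so the width is $\ell$. It remains to verify the two tree-decomposition conditions.
\begin{itemize}
\item Edges inside a copy $P_v$ lie in the bag $P_v$.
\item A cross edge $\{(p,w_i),(c,w_i)\}$ lies in $B_i$.
\item For connectivity, $(c,w_i)$ appears exactly in $B_i,\dots,B_\ell$ on the $\{p,c\}$-sweep, in $P_c$, and in the initial segments of the sweeps towards the children of $c$. This is a connected subtree, and the symmetric statement holds for $(p,w_i)$.
\end{itemize}

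\emph{Lower bound, and the expected obstacle.} The matching lower bound is the delicate step. A root-to-leaf-to-root path in $T_h$ has $2h+1$ vertices, so $G[h,\ell]$ contains the grid $P_{2h+1}\times P_\ell$ as a subgraph. The $\ell\times\ell$ grid has treewidth $\ell$, hence $tw(G[h,\ell])\ge \ell$ as soon as $2h+1\ge \ell$. For very small $h$ (e.g.\ $h=1$) the graph is only a $3\times\ell$ grid-like graph and its treewidth is bounded by a constant. So I would interpret the observation as ``treewidth at most $\ell$, with equality when $h\ge \ell/2$''. Since only this regime, or only the upper bound, is needed for the later separations, I would state that qualification explicitly rather than try to force equality for all $h$.
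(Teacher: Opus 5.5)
Your proposal is correct. The paper states this observation without proof, so there is no competing argument to compare against; your argument supplies the missing verification. The degree count $\deg_{T_h}(v)+\deg_{P_\ell}(w)\le 3+2$ gives the $(2,5)$ property. Your sweep construction is a valid tree decomposition of width $\ell$: its tree is a subdivision of $T_h$. For $(c,w_i)$, the bags $B_i,\dots,B_\ell$ of the parent sweep, the bag $P_c$, and the bags $B'_0,\dots,B'_i$ of each child sweep form a connected subtree.

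Your caveat should be stated as a correction, not as an interpretation. $G[1,\ell]$ is exactly the $3\times\ell$ grid, whose treewidth is $3<\ell$ for $\ell\ge 4$. More generally, sweeping layer by layer along the path direction gives $tw(G[h,\ell])\le 2^{h+1}-1$. So ``treewidth $\ell$ for all $h,\ell\ge 1$'' is false. What is true is $tw\le\ell$ always, with equality once $2h+1\ge\ell$; the lower bound comes from your grid subgraph, built on a leaf--root--leaf path (you wrote ``root-to-leaf-to-root'').

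One refinement of your final remark. Lemmas~\ref{lem:1} and~\ref{lem:2} assume only $h>\log\ell$, which does not give $2h+1\ge\ell$. So the later argument survives because only the upper bound $tw\le\ell$ is ever used, through Lemma~\ref{lem:sddtwsize} and the $w+1$ bound for $\mathcal{C}(\text{VC}_G)$, not because of the regime. In the final instantiation $\ell=\log n$ one does have $h\approx\log n-\log\log n\ge\ell/2$, so equality holds there anyway.

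Finally, your reading of $P_\ell$ as having $\ell$ vertices is what makes the value $\ell$ correct. If $P_\ell$ has $\ell$ edges, both treewidth bounds shift to $\ell+1$.
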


We now lift from the following lower bound of representing
$\text{VC}_{G[h,\ell]}$ by OBDDs.

\begin{theorem}[\cite{RazgonNBP}]
    Let $\ell>50$ and $h> \log(\ell)$. Any \textup{OBDD}-representation of $\text{VC}_{G[h,\ell]}$ has size at least $n^{\Omega(\ell)}$. 
\end{theorem}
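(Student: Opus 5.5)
The plan is to reduce the statement to a communication-complexity bound via the standard link between OBDD size and a balanced cut in the variable order, and then to use the structure of $G[h,\ell]$ to force a large \emph{induced matching} across any such cut. Fix an arbitrary order $\leq$ on $V(G[h,\ell])$, and let $n = |V(G[h,\ell])| = (2^{h+1}-1)(\ell+1)$. For each prefix length $i$ let $L_i$ be the first $i$ variables and $R_i$ the rest. The width of any $\leq$-OBDD at level $i$ is at least the number of distinct subfunctions $f\restrict{\ascr}$ for assignments $\ascr$ of $L_i$. So the goal is to find one $i$ with at least $n^{\Omega(\ell)}$ such subfunctions.

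The key combinatorial step is to show that some cut $(L_i,R_i)$ is crossed by an induced matching of size $\Omega(\ell\log n)$ in $G[h,\ell]$. I would argue this in two stages.
\begin{itemize}
\item Choose $i$ so that the cut is balanced on the tree coordinate. Consider the copies of $T_h$ indexed by path positions $j\in P_\ell$. Because $\ell>50$ and the tree has $2^{h+1}-1 > \ell$ vertices (using $h>\log \ell$), I expect that for a suitable $i$ at least a constant fraction of the $\ell+1$ tree-copies are split so that each side holds a constant fraction of that copy. A pigeonhole argument over the first index at which copies become ``half-filled'' should deliver this.
\item In a complete binary tree, every balanced bipartition of the vertices cuts $\Omega(h)=\Omega(\log n)$ edges that can be chosen pairwise far apart. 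These come from the $\Omega(h)$ levels along a root-to-leaf path where the separating subtree changes. Combining this over $\Omega(\ell)$ split copies and thinning by a constant factor, so that the matching stays induced in the product graph, yields $\Omega(\ell\cdot h)$ cut edges forming an induced matching. The thinning is needed because the path edges connect copies, so I would take every third copy.
\end{itemize}

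Given an induced matching $\{u_1v_1,\dots,u_tv_t\}$ with $u_j\in L_i$ and $v_j\in R_i$, the subfunction count follows directly. Set every other $L_i$-variable to $1$. For each subset $S\subseteq[t]$, let $\ascr_S$ set $u_j=0$ for $j\in S$ and $u_j=1$ otherwise. Since all other neighbours of $v_j$ on the $L$-side are set to $1$ (the matching is induced), the restriction forces exactly $v_j=1$ for $j\in S$. Hence different $S$ give different subfunctions, which yields $2^t = 2^{\Omega(\ell h)} = n^{\Omega(\ell)}$ subfunctions, using $h=\Theta(\log n)$ for fixed $\ell\le 2^h$.

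The main obstacle is the first bullet: guaranteeing that a single prefix simultaneously splits many tree copies in a balanced way, since the order can interleave copies adversarially. If the direct pigeonhole fails, I would instead use the fact that any balanced cut of $G[h,\ell]$ has edge-cut size $\Omega(\ell\log n)$ (a pathwidth-type bound for this product), and then extract an induced matching of size $\Omega(\ell\log n)$ from the cut edges via the bounded degree $5$. An induced matching in a graph of maximum degree $5$ can always be taken to be a constant fraction of any edge set.
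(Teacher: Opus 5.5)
The paper does not prove this statement; it only cites it from the literature, so your argument has to stand on its own. Its outer shell is sound. You fix the order, exhibit a prefix $(L_i,R_i)$ crossed by an induced matching of size $t$, and read off $2^t$ distinct subfunctions. One wording fix: $u_j=0$ forces \emph{all} $R_i$-neighbours of $u_j$, not ``exactly'' $v_j$. You still separate $S\neq S'$ with $j\in S\setminus S'$ by setting $v_j=0$ and all other $R_i$-variables to $1$. By inducedness this satisfies $f\restrict{\ascr_{S'}}$ but not $f\restrict{\ascr_S}$. The step $h=\Omega(\log n)$ is also correct.

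\textbf{The gap.} The genuine gap is the combinatorial core, and the claims you offer for it are false.
\begin{itemize}
\item In $T_h$, splitting off the left subtree $T_L$ of the root gives a balanced bipartition that cuts one edge.
\item In $G[h,\ell]$, the prefix $V(T_L)\times V(P_\ell)$ is balanced and splits \emph{every} tree copy in half, yet only $\ell+1$ edges cross it. This refutes your fallback ``every balanced cut has $\Omega(\ell\log n)$ edges'', and it shows that no balance criterion for choosing $i$ can work.
\item Your first bullet fails as well: in the copy-by-copy order at most one copy is ever split.
\end{itemize}
What you need is a bound on the \emph{worst} prefix of \emph{every} order, i.e.\ vertex separation (pathwidth) $\Omega(\ell h)$ for $T_h\times P_\ell$. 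That is the real content of the theorem, and your proposal only names it. Once you have $\Omega(\ell h)$ crossing edges at some prefix, your degree-$5$ extraction of an induced matching is fine.

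\textbf{The hypothesis on $h$ must be stronger.} Such a bound also requires $2^h\gtrsim \ell h$. In the copy-by-copy order every prefix has boundary at most $2|V(T_h)|$, so there is an OBDD of size $n\,2^{O(2^h)}$. For $h=\lfloor\log\ell\rfloor+1$ and large $\ell$ this is $2^{O(\ell)}$, which is below $n^{\Omega(\ell)}=2^{\Omega(\ell\log\ell)}$. So in that regime neither your $\Omega(\ell h)$ matching nor the stated bound holds. You must assume something like $h\geq 3\log\ell$, which is satisfied in the application $\ell=\log n$.

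\textbf{A way to close the gap.} Under that assumption, here is one route that handles exactly the adversarial interleaving you worry about.

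(i) Robust pathwidth of $T_h$. Suppose $A\subseteq V(T_h)$ has vertex boundary at most $b$ (vertices of $A$ with a neighbour outside $A$). Then $A$ has at most $3b$ cut edges, so $|A|$ is a signed sum of at most $3b$ subtree sizes $2^k-1$, plus possibly $|V(T_h)|$. Prefixes of an order have distinct sizes, so at most $(Ch/b)^{O(b)}$ prefixes of any order of $T_h$ have boundary at most $b$. Taking $b=\epsilon h$ for a small constant $\epsilon$, all but $2^{h/2}$ prefixes have boundary more than $\epsilon h$.

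(ii) Synchronisation. The numbers of processed vertices in copies $j$ and $j+1$ differ by at most the number of crossing path edges between them. So if every prefix of the global order has boundary below $K$, all copies stay within $2K$ stages of one another. Hence each copy occupies each of its own stages for at most $(\ell+1)(4K+1)$ time steps. Averaging over time then gives a moment at which at least half of the copies are simultaneously at stages of boundary more than $\epsilon h$, provided $K<\epsilon(\ell+1)h/2$ and $2^{h/2}\gtrsim \ell h$. This forces $K\geq \epsilon(\ell+1)h/2$, which is the $\Omega(\ell h)$ you need.
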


From this lower bound and Theorem~\ref{th:meta} we directly
get:

\begin{lemma}\label{lem:1}
    Let $\ell>50, h> \log(\ell)$. Every {\upshape OBDD($\land,r$)}-refutation of $\imp( \text{VC}_{G[h,\ell]} )$ has size at least $n^{\Omega(\ell)}$.
\end{lemma}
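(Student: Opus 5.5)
The plan is to combine Theorem~\ref{th:meta} for $\class=\text{OBDD}$ with the lower bound of \cite{RazgonNBP}, taking some care over the parameters. Set $\phi \coloneqq \text{VC}_{G[h,\ell]}$, let $n = |V(G[h,\ell])|$ be its number of variables, and let $m$ be its number of clauses. By the observation above, $\phi$ is a $(2,5)$-CNF, so $k=2$ and every variable occurs in at most $5$ clauses. This gives $5 \leq \log n$ as soon as $n \geq 32$. Since $\ell>50$, the graph has more than $50$ vertices, so $\phi$ is a $(2,\log n)$-CNF and the hypothesis of Theorem~\ref{th:meta} holds.

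Now assume there is an OBDD$(\land,r)$-refutation of $\imp(\phi)$ of size $t$. Theorem~\ref{th:meta} then yields an OBDD representing $\phi$ of size $\bigO(t^2\cdot n^{2k^2}) = \bigO(t^2 n^{8})$. By \cite{RazgonNBP}, applicable because $\ell>50$ and $h>\log \ell$, every such OBDD has size at least $n^{c\ell}$ for some constant $c>0$. Hence $t^2 \geq \Omega(n^{c\ell-8})$, so $t \geq n^{c\ell/2 - 4 - o(1)}$.

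The remaining step is to absorb the additive constant into the $\Omega(\ell)$ exponent. Because $\ell > 50$, we have $c\ell/2-4 \geq c'\ell$ for a suitable $c'>0$, at least once the constant $c$ is large enough relative to $8/\ell$. If $c$ is too small for that inequality to hold directly, I would invoke the bound only for $\ell \geq \ell_0$ with $\ell_0$ a constant, and for the finitely many values $50<\ell<\ell_0$ the claim $n^{\Omega(\ell)}$ reduces to a trivial polynomial bound, since the refutation must contain the $\Theta(m)=\Theta(n)$ initial clauses.

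One more point needs attention: the refutation size is measured in the variable count of $\imp(\phi)$, which is $n + m + (m+1) = \Theta(n)$ because $m \leq 5n/2$. This changes the bound only by constant factors in the base, so the final statement is $n^{\Omega(\ell)}$ under either measure. The main obstacle is exactly this bookkeeping, making sure the polynomial overhead $n^{8}$ from Theorem~\ref{th:meta} does not destroy the $n^{\Omega(\ell)}$ bound. The constant-adjustment argument above handles it.
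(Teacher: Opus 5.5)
Your proof is correct and follows the paper's argument: the paper gets the lemma directly by combining Theorem~\ref{th:meta} (for $\class=\text{OBDD}$, $k=2$) with the lower bound of \cite{RazgonNBP}. Your additional bookkeeping makes explicit what the paper leaves implicit: you check that a $(2,5)$-CNF is a $(2,\log n)$-CNF once $n\geq 32$, absorb the $n^{8}$ overhead into the exponent, and handle bounded $\ell$ via the trivial bound (valid because, by Lemma~\ref{lem:impmin} and the absence of weakening, every clause of $\imp(\phi)$ must be introduced).
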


Note that if we let $\ell = \log n$, then we obtain a quasipolynomial
lower bound for OBDD$(\land,r)$-refutations 
as $n^{\Omega(\ell)} = n^{\Omega(\log n)}$.

On the other hand, we can obtain an upper bound on an SDD$(\land)$-refutation
using the following lemma:

\begin{lemma}\label{lem:sddtwsize}
  The following hold:
    Let $\phi = \bigwedge_{i \in [m]} C_i$ be a \textup{CNF} with $n$ variables and $tw(\phi) = w$. Then
    there exists a vtree $T_\phi$ such that for every subformula $\psi
    \subseteq \phi$ there is an \textup{SDD} $D_\psi$ respecting $T_\phi$ of size at most $\bigO( n2^w)$.

    Furthermore, there exists an {\upshape SDD$(\land)$}-compilation of $\phi$ of size $\bigO(m \cdot n2^w)$. 
\end{lemma}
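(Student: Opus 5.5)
We build $T_\phi$ from a tree decomposition of the primal graph $G_\phi$ of width $w$. Fix an optimal tree decomposition $(T,\beta)$ and normalise it so that it is rooted, binary, and every variable $x$ has a unique \emph{forget node} $t_x$: the highest node whose bag contains $x$. This normalisation is standard and costs only $\bigO(n)$ nodes. Then build a vtree bottom-up. At a decomposition node $t$, combine the vtrees of its children together with the leaves for the variables forgotten at $t$ into a vtree $T_t$. The combination uses any fixed full binary shape, e.g. right-linear in the forgotten variables and then joining the children subtrees. Let $T_\phi$ be the vtree at the root. The key invariant is that for every vtree node $v$ the set $V_v=b(T_v)$ has a small \emph{boundary}. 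Every clause of $\phi$ containing a variable of $V_v$ and a variable outside $V_v$ only mentions variables in the bag $\beta(t)$ of the decomposition node $t$ at which $v$ was created. Hence the outside variables adjacent to $V_v$ lie in a set $B_v$ with $|B_v|\le w+1$.

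Next we bound the SDD size for an arbitrary $\psi\subseteq\phi$. The approach is the standard one for compiling bounded-treewidth CNFs: at vtree node $v$, the relevant subfunctions are $\psi_v\restrict{\ascr}$, where $\psi_v$ is the set of clauses of $\psi$ whose variables are contained in $V_v\cup B_v$ and meet $V_v$, and $\ascr$ ranges over assignments to $B_v$. There are at most $2^{w+1}$ such subfunctions. We construct the SDD recursively for each node $v$ and each assignment $\ascr$ of $B_v$. Let $v$ have children $l,r$. The $V_l$-partition consists of the (satisfiable, disjoint) functions obtained by grouping assignments of $V_l$ according to the induced assignment on $B_r\cap V_l$, conjoined with $\psi_l\restrict{\ascr}$; the unsatisfiable remainder is grouped into a single $\bot$-prime. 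The corresponding sub is $\psi_r$ restricted accordingly. A clause spanning $V_l$ and $V_r$ has all its variables in a bag, so it is handled by the assignment on that boundary. The number of distinct sub-SDDs per vtree node is therefore $2^{\bigO(w)}$, giving total size $\bigO(n\,2^{\bigO(w)})$.

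The hard part is getting exactly $\bigO(n2^w)$ rather than $2^{\bigO(w)}$. Each node of the SDD corresponds to an assignment of a boundary/bag set of size at most $w+1$, so I would charge each SDD gate to a pair (vtree node, assignment of the bag at the corresponding decomposition node). With $\bigO(n)$ vtree nodes and $2^{w+1}$ assignments this gives $\bigO(n2^w)$, but care is needed in choosing the boundaries so that primes and subs are both indexed by the same single bag. If this fails, the weaker $n2^{\bigO(w)}$ still suffices for the application. Note that the construction is uniform in $\psi$, since dropping clauses only makes the subfunctions simpler, so one vtree $T_\phi$ works for all subformulas.

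For the compilation, fix $T_\phi$ and order the clauses $C_1,\dots,C_m$. Each clause $C_i$ has an SDD respecting $T_\phi$ of size $\bigO(n)$ by the first part, or directly. Derive these by the init rule, then join iteratively: $E_i = E_{i-1}\land C_i$ using Theorem~\ref{lem:sddpoly}. Each $E_i$ computes the subformula $\bigwedge_{j\le i}C_j$, so it has a canonical SDD of size $\bigO(n2^w)$ by the first part. The polynomial-time conjunction plus normalisation/canonicity (or simply taking the small SDD as the derived line, which the join rule permits since only the computed function is checked) yields a derivation of $m$ lines, each of size $\bigO(n2^w)$, i.e. total size $\bigO(m\cdot n2^w)$.
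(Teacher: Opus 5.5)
Your outer structure is sound and agrees with the paper. A width-$w$ decomposition of $\phi$ is also one for every $\psi\subseteq\phi$, which is why a single vtree works. The compilation is then clause-by-clause joins, where each derived line may simply be the small $D_\psi$. Use your parenthetical option here, since the first part gives you some small SDD, not the canonical one.

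The paper obtains the $\bigO(n2^w)$ bound by citing Darwiche. You reprove it instead, and that reproof has a genuine gap.

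\textbf{The recursion is not well defined.} $\psi_l$ contains every clause meeting both $V_l$ and $V_r$. Such a clause has variables in $B_l\cap V_r$, which an assignment $\ascr\in\{0,1\}^{B_v}$ does not fix. So $\psi_l\restrict{\ascr}$ is not a function over $V_l$ and cannot be part of a prime. For example, at a node whose left child is a single forgotten variable $x_j$, $\psi_l$ is every clause containing $x_j$.

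\textbf{The invariant is misstated.} A crossing clause need not lie in $\beta(t)$, since it may contain a variable forgotten at a child of $t$. At a vtree leaf $x$ the boundary is all of $N(x)$, which is unbounded already for $w=1$ (the centre of a star). Only $B_v\subseteq\beta(t)$ for \emph{internal} $v$ is true.

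\textbf{The bound is not established.} You explicitly leave $\bigO(n2^w)$ open and fall back on $n2^{\bigO(w)}$. That is weaker than the lemma, even though it would suffice for Lemma~\ref{lem:2}.

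\textbf{The repair.} Put crossing clauses on the sub side and use the shape of your vtree. An internal vtree node $v$ created at $t$ is of one of two kinds.
\begin{itemize}
\item Its left child is a leaf $x_j$ forgotten at $t$. Then $B_r\subseteq B_v\cup\{x_j\}$, and $\psi_v\restrict{\ascr}$ is the decision $\{(x_j,s_1),(\neg x_j,s_0)\}$. Here $s_b$ is $\psi_r\restrict{\ascr\cup\{x_j\mapsto b\}}$ or $\bot$, according to whether $x_j=b$ satisfies the $\ascr$-restricted clauses of $\psi_v$ that avoid $V_r$.
\item $l$ and $r$ are the vtrees of two distinct children of $t$. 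Then no clause meets both sides, $B_l,B_r\subseteq B_v$, and $\psi_v\restrict{\ascr}$ is the decision $\{(\psi_l\restrict{\ascr},\psi_r\restrict{\ascr}),(\neg\psi_l\restrict{\ascr},\bot)\}$, dropping an unsatisfiable prime if one occurs.
\end{itemize}
Every gate at $v$ is then $\psi_v\restrict{\ascr}$ or $\neg\psi_v\restrict{\ascr}$ for some $\ascr\in\{0,1\}^{B_v}$, where $|B_v|\le w+1$. Each has at most two elements, and negation only negates subs, so this family is closed. Leaves carry only literals and constants. Summing over the $2n-1$ vtree nodes gives $\bigO(n2^{w+1})=\bigO(n2^w)$. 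With this repair, or by citing Darwiche as the paper does, your argument goes through.
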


\begin{proof}
  It was shown in \cite{darwiche2011sdd} that there is a vtree $T_\phi$ and
  an SDD $D$ respecting $T_\phi$, which is equivalent to $\phi$ and has size $\bigO( n2^w )$. Since
  the treewidth of every $\psi \subseteq \phi$ is \emph{at most} $w$,
  it follows that there is a vtree $T_\psi$ and an equivalent vtree $D_\psi$
  of size $\bigO(n 2^w)$ as well. It turns out that we can simply chose
  the same vtree $T_\phi$ for all sub-CNFs, as $T_\phi$ is considered nice 
  for every $\psi \subseteq \phi$, as defined in \cite{darwiche2011sdd}.
  Using the first two statements, it follows immediately that there is an 
  SDD$(\land)$-compilation of $\phi$ with the desired upper bound.
\end{proof}

Furthermore, we observe that, if the CNF $\text{VC}_G$ has treewidth $w$,
then the transformed CNF $\control( \text{VC}_G ) \coloneqq \bigwedge_{\{u,v\} \in E } (u \lor v \lor y_e)$
has treewidth at most $w+1$.

As the treewidth of 
$\text{VC}_{G[h,\ell]}$ is $\ell$, by Lemma~\ref{lem:sddtwsize} we get that $\text{VC}_{G[h,\ell]}$
has a SDD($\land$)-compilation of size $\bigO( n^2 \cdot 2^\ell )$. 
By applying our
refutation-to-compilation lifting (Theorem~\ref{th:sddmeta}) we obtain that 

\begin{lemma}\label{lem:2}
    Let $\ell > 50, h > \log(\ell)$. Then there is a {\upshape SDD$(\land)$}-refutation
    of $\imp( \text{VC}_{G[h,\ell]} )$ of size $n^3 \cdot
    2^\ell$, where $n$ is the number of variables.
\end{lemma}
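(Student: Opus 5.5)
The plan is to chain Lemma~\ref{lem:sddtwsize} with Theorem~\ref{th:sddmeta}, with $\phi = \text{VC}_{G[h,\ell]}$. Write $n$ for the number of variables of $\imp(\text{VC}_{G[h,\ell]})$; it is at least the number $m$ of clauses of $\phi$ and at least the number $|V(G[h,\ell])|$ of variables of $\phi$.

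\textbf{Step 1: an SDD$(\land)$-compilation of $\control(\phi)$.}
\begin{itemize}
\item By the observation preceding the statement, $tw(\phi)=\ell$.
\item As remarked above, adding one fresh control variable $y_e$ to each clause raises the treewidth to at most $\ell+1$. The reason is that each $y_e$ lies in only one clause, so it can be placed in a new bag attached to a bag containing both endpoints of $e$.
\item $\control(\phi)$ has $|V|+|E| \le n$ variables and $|E| = m \le n$ clauses.
\item Lemma~\ref{lem:sddtwsize} then gives an SDD$(\land)$-compilation of $\control(\phi)$ of size $S = \bigO(m \cdot n \cdot 2^{\ell+1}) = \bigO(n^2 2^\ell)$.
\item The compilation starts from the init lines, i.e.\ the clauses of $\control(\phi)$, which are among the clauses of $\imp(\phi)$. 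Joining along the common vtree $T_{\control(\phi)}$ keeps every intermediate SDD of size $\bigO(n2^{\ell+1})$, because each intermediate function is a sub-CNF of $\control(\phi)$.
\end{itemize}

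\textbf{Step 2: convert to a refutation.} Apply Theorem~\ref{th:sddmeta} with $\class=\text{SDD}$. It extends the compilation by conjoining $z_1$, then the implication gadgets $D^i$, then $\neg z_{m+1}$. This yields an SDD$(\land)$-refutation of $\imp(\phi)$ of size $\bigO(m\cdot S) = \bigO(n^3 2^\ell)$.

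\textbf{Main obstacle.} The one delicate point is that Theorem~\ref{th:sddmeta} silently needs a vtree on all variables, including $Z$. I would extend $T_{\control(\phi)}$ by hanging the $z$-variables off a right spine near the root. Then each $D^i$ is an SDD of constant-times-$|C_i|$ size respecting the extended vtree, and conjoining with it does not blow up the size beyond a constant factor.

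The argument for that constant factor: after conjoining $z_1,\dots,z_i$, the function factors as $\control(\phi)\wedge z_1\wedge\cdots\wedge z_i$. The only interaction with the $x,y$ part goes through the literals of $C_i \cup\{y_i\}$. Conditioned on $z_i=1$, the gadget forces $z_{i+1}$ whenever $C_i\lor y_i$ holds, and $C_i\lor y_i$ is already entailed by $\control(\phi)$. So the conjunction simplifies to adding the unit $z_{i+1}$, and with the $z$'s in a separate branch of the vtree the normalised SDD grows by $\bigO(1)$ per step.

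The final join with $\neg z_{m+1}$ gives $\bot$. Summing over the $\bigO(m)$ lines of size $\bigO(S)$ gives the claimed $n^3\cdot 2^\ell$ bound up to constants.
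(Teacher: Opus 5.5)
Your proposal is correct and follows the paper's own route. You bound the treewidth of $\control(\text{VC}_{G[h,\ell]})$ by $\ell+1$, use Lemma~\ref{lem:sddtwsize} to get an SDD$(\land)$-compilation of size $\bigO(n^2 2^\ell)$, and then apply Theorem~\ref{th:sddmeta} to obtain a refutation of size $\bigO(m\cdot S)=\bigO(n^3 2^\ell)$. Your treatment of the vtree is an added detail that the paper leaves implicit: you extend it to the $Z$-variables by putting them in a separate branch, so that each intermediate line is just $\control(\phi)$ conjoined with a $Z$-term.
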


Combining Lemma~\ref{lem:1} and Lemma~\ref{lem:2} we obtain the desired separation.

\begin{theorem}%
  There is a family $(\phi_n)$ of unsatisfiable $n$-variable \textup{CNF}s of size $O(n^2)$
  that have {\upshape SDD$(\land)$}-refutations of size $n^{O(1)}$ but where
  every {\upshape OBDD$(\land,r)$}-refutation has size $n^{\Omega(\log n)}$.
\end{theorem}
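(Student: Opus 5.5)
We instantiate the family as $\phi_n \coloneqq \imp(\text{VC}_{G[h,\ell]})$ with $\ell = \lceil\log N\rceil$, where $N$ is the number of variables of $\text{VC}_{G[h,\ell]}$, and $h$ chosen as the smallest integer with $h>\log \ell$ and large enough that $\ell>50$. Concretely, for a given size parameter we may take $h$ with $2^{h}$ of order $N/\ell$, so that the tree part supplies the bulk of the vertices. The number of vertices of $G[h,\ell]$ is $(2^{h+1}-1)(\ell+1)$. Choosing $h$ so that this equals $N$ gives $\ell=\Theta(\log N)$ and $h=\Theta(\log N)$, so $h>\log\ell$ holds for large $N$. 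We need to handle only the values of $n$ realised by this construction; for other $n$ we pad with dummy variables or simply state the family along the realised subsequence.

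Next we check the parameters of $\phi_n$. $\text{VC}_{G[h,\ell]}$ is a $(2,5)$-CNF with $m=O(N)$ clauses, since the maximum degree is at most $5$. Then $\imp(\text{VC}_{G[h,\ell]})$ has $n = N+2m+1 = \Theta(N)$ variables and $O(m)$ clauses of constant width, so its size is $O(n)\subseteq O(n^2)$. Because $n=\Theta(N)$, any bound $N^{\Omega(\ell)}$ is also $n^{\Omega(\log n)}$, and $2^\ell = N^{O(1)}$.

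For the upper bound we apply Lemma~\ref{lem:2}: it yields an SDD$(\land)$-refutation of size $N^3\cdot 2^\ell = N^{O(1)} = n^{O(1)}$.

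For the lower bound we apply Lemma~\ref{lem:1}, which gives every OBDD$(\land,r)$-refutation size $N^{\Omega(\ell)} = n^{\Omega(\log n)}$. The hypothesis of Theorem~\ref{th:meta} needs a $(k,\log N)$-CNF. Here $k=2$ and the occurrence bound is $5\le\log N$ for large $N$, so Lemma~\ref{lem:1} applies; the extra factor $N^{2k^2}=N^8$ is absorbed into $N^{\Omega(\ell)}$. The main obstacle is purely bookkeeping: ensuring that the constraints $\ell>50$ and $h>\log\ell$ are compatible with $\ell=\Theta(\log n)$, and that the variable count of the lifted formula is linear in $N$, so that the bounds translate between $N$ and $n$ without loss.
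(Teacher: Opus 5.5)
Your proposal is correct and is essentially the paper's argument: you take $\phi_n=\imp(\text{VC}_{G[h,\ell]})$ with $\ell,h=\Theta(\log n)$ and combine Lemma~\ref{lem:1} with Lemma~\ref{lem:2}. Your bookkeeping ($n=\Theta(N)$ variables, the occurrence bound $5\le\log N$ needed for Theorem~\ref{th:meta}, and absorbing the $N^{8}$ factor) just makes explicit what the paper leaves implicit. The only slip is your opening choice of $h$ as the smallest integer with $h>\log\ell$, which would give $N=\Theta(\ell^2)$ and hence a superpolynomial upper bound $2^{\ell}$; the concrete choice $2^h\approx N/\ell$ that you give next is the right one.
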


\subsection{\classD($\land$, $r$) does not p-simulate \class($\land$,
  $w$)}
\label{sec:sep1}

The vertex cover formulas defined in the previous paragraph can also
be used to prove lower bounds for d-SDNNFs and SDDs when applied to
 graphs of high treewidth. 

\begin{theorem}[\cite{bova2015stronglyexponentialseparationdnnfs}]
    Let $d \geq 3, n \in \N$, $0 < c < 1$, and $G$ an $(n,d,c)$-expander. Every \textup{d-SDNNF} representing VC$_G$ has size $2^{\Omega(n)}$.
\end{theorem}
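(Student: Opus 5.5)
We use the standard route from d-SDNNF lower bounds to communication complexity: every d-SDNNF of size $s$ respecting a vtree $(T,b)$ induces, for every node $v$ of $T$, a balanced-enough partition $(X_1,X_2)=(b(T_v),X\setminus b(T_v))$ together with a \emph{disjoint} cover of $f^{-1}(1)$ by at most $s$ combinatorial rectangles with respect to $(X_1,X_2)$ \cite{bova2016knowledge}. The plan is therefore to pick a vertex $v$ of $T$ with $n/3\le |b(T_v)|\le 2n/3$, which exists in any full binary tree by the usual descent argument. We then show that every rectangle cover (even a non-disjoint one) of the models of $\text{VC}_G$ with respect to any such partition has $2^{\Omega(n)}$ rectangles.

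For the lower bound on covers, use expansion. Let $S=X_1$ and assume w.l.o.g.\ $|S|\le n/2$; then $|N(S)|\ge c|S|\ge cn/3$. Since $G$ has degree $d$, we can greedily extract a matching $M\subseteq\partial(S)$ of size $m\ge cn/(3d)=\Omega(n)$ whose edges are pairwise at distance at least $2$: pick an edge, delete all edges within distance $1$ (at most $O(d^2)$ of them), and repeat. Write $M=\{\{u_i,w_i\}\}_{i\le m}$ with $u_i\in X_1$ and $w_i\in X_2$. For each $A\subseteq[m]$, let $\assign_A$ set $u_i=1, w_i=0$ for $i\in A$, set $u_i=0, w_i=1$ for $i\notin A$, and set every other vertex to $1$. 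Each $\assign_A$ is a model of $\text{VC}_G$, since an edge can only be uncovered if both endpoints are among the $u_i,w_j$ set to $0$.

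Now suppose a rectangle $R=R_1\times R_2$ contains two such models $\assign_A,\assign_B$ with $A\neq B$, say $i\in B\setminus A$. Then $R$ also contains the mixed assignment taking the $X_1$-part of $\assign_A$ and the $X_2$-part of $\assign_B$. In this assignment $u_i=0$ (because $i\notin A$) and $w_i=0$ (because $i\in B$), so the edge $\{u_i,w_i\}$ is uncovered and the mixed assignment is not a model. Hence each rectangle in the cover contains at most one $\assign_A$, and the cover needs at least $2^m=2^{\Omega(n)}$ rectangles. This yields $s\ge 2^{\Omega(n)}$.

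The distance-$2$ spacing of $M$ is not really needed for this fooling-set argument, which only uses that the edges of $M$ are disjoint and cross the cut. The main point to get right is the rectangle-cover reduction itself, namely that the models of a d-SDNNF with $s$ gates decompose into at most $s$ rectangles across the cut $b(T_v)$. We take this from \cite{bova2016knowledge}: each $\land$-gate structured at $v$ contributes one rectangle, and the structuredness of the circuit guarantees that every accepting certificate passes through such a gate. We also have to make sure that $|N(S)|\ge c|S|$ yields $\Omega(n)$ disjoint crossing edges: each vertex of $N(S)$ is the endpoint of a crossing edge, and a greedy matching on these edges loses at most a factor $2d$.
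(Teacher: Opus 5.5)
The paper gives no proof of this statement; it imports it from \cite{bova2015stronglyexponentialseparationdnnfs}. Your route is the standard self-contained argument for the structured case: a balanced vtree node, the rectangle-cover bound of \cite{bova2016knowledge} (the paper's Lemma~\ref{lem:sddcommsize}), and a fooling set built from edges crossing the cut. The rectangle step itself is right: two distinct models $\assign_A,\assign_B$ in one $1$-rectangle force an uncovered matching edge. Since this works for arbitrary covers, you never use determinism. Your one-line justification of the cited lemma is not literally correct for circuits that are not smooth at $\land$-gates. For example, the single gate $x_1\land x_3$ over a vtree with left subtree $\{x_1,x_2\}$ and right leaf $x_3$ contains no $\land$-gate structured at the left child. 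Since you cite the lemma rather than prove it, nothing depends on this.

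However, your claim that the distance-$2$ spacing of $M$ ``is not really needed'' is false, and your proof relies on exactly that property. The zero set of $\assign_A$ is $\{w_i : i\in A\}\cup\{u_i : i\notin A\}$. Your justification only shows that an uncovered edge must lie inside this set, not that none does. If $M$ is merely a matching, $G$ may contain an edge $\{w_i,w_j\}$, $\{u_i,u_j\}$ or $\{u_i,w_j\}$ with $i\neq j$. Then every $\assign_A$ with $i,j\in A$ (resp.\ $i,j\notin A$, resp.\ $i\notin A\ni j$) falsifies that clause. So your fooling set is not contained in $f^{-1}(1)$, and the count $2^m$ is lost. As $A$ ranges over all subsets of $[m]$, all four pairs between any two edges of $M$ must be non-edges of $G$, i.e.\ $M$ must be an \emph{induced} matching.

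The plain greedy matching from your last paragraph is therefore insufficient, and you must keep your first construction. Its size is also miscounted. After selecting $\{u,w\}$ you discard every crossing edge meeting $N(u)\cup N(w)$. This set has at most $2d$ vertices and hence meets at most $2d^2$ edges, which gives $m\ge|N(S)|/(2d^2)\ge cn/(6d^2)$, not $cn/(3d)$. For fixed $d$ and $c$ this is still linear in $n$. With these two corrections the argument yields $|D|\ge 2^{cn/(6d^2)}=2^{\Omega(n)}$.
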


Note that this lower bound also holds for SDDs and OBDDs. Again by
applying Theorem~\ref{th:meta}, we can lift this to a proof size lower bound:

\begin{lemma}
    Let $d \geq 3, n \in \N$, $0 < c < 1$, and $G$ an $(n,d,c)$-expander. Then, every {\upshape d-SDNNF($\land,r$)}-refutation of
    $\imp( \text{VC}_{G})$ has size $2^{\Omega(n)}$.
\end{lemma}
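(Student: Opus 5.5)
The plan is to apply Theorem~\ref{th:meta} with $\class = \textup{d-SDNNF}$ to $\phi = \text{VC}_G$ and combine it with the lower bound of \cite{bova2015stronglyexponentialseparationdnnfs}.

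First we check the hypotheses of Theorem~\ref{th:meta}. Since $G$ is $d$-regular, $\text{VC}_G$ is a $(2,d)$-CNF on $n$ variables with $m = dn/2$ clauses. Here $d$ is a constant, so $d \leq \log n$ for all sufficiently large $n$. Thus $\text{VC}_G$ is an $n$-variable $(k,\log n)$-CNF with $k=2$. Small $n$ only affect the constant hidden in $\Omega(\cdot)$. If $G$ has isolated vertices, their variables do not occur in $\text{VC}_G$; but $d\ge 3$-regularity rules this out, so the variable count is exactly $n$.

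Next, suppose there is a d-SDNNF$(\land,r)$-refutation of $\imp(\text{VC}_G)$ of size $t(n)$. By Theorem~\ref{th:meta} there is a d-SDNNF representing $\text{VC}_G$ of size $\bigO(t(n)^2 \cdot n^{2k^2}) = \bigO(t(n)^2 n^{8})$. The cited theorem says every d-SDNNF representing $\text{VC}_G$ has size at least $2^{\varepsilon n}$ for some constant $\varepsilon>0$. Hence
\[
t(n)^2 \geq c \cdot 2^{\varepsilon n} / n^{8} ,
\]
so $t(n) \geq 2^{\varepsilon n/2 - 4\log n - \bigO(1)} = 2^{\Omega(n)}$.

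The only delicate point is making sure the parameter $n$ is used consistently. The proof size is measured against the number of variables of the original formula $\text{VC}_G$, which equals $|V(G)| = n$. The lifted formula $\imp(\text{VC}_G)$ has $n + m + m + 1 = \bigO(n)$ variables, so the bound $2^{\Omega(n)}$ is also exponential in its own number of variables. Apart from this, the argument is a direct substitution, and the polynomial overhead $n^{8}$ is absorbed into the exponential bound.
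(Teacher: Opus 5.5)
Your proposal is correct and is the paper's argument: apply Theorem~\ref{th:meta} with $\class=\textup{d-SDNNF}$ and $k=2$ to the $(2,d)$-CNF $\text{VC}_G$, which is a $(2,\log n)$-CNF once $n\ge 2^d$. The resulting d-SDNNF for $\text{VC}_G$ of size $\bigO(t^2 n^{8})$ then contradicts the $2^{\Omega(n)}$ lower bound of \cite{bova2015stronglyexponentialseparationdnnfs} unless $t=2^{\Omega(n)}$; your bookkeeping on the variable count and the absorbed $n^{8}$ factor spells out details the paper leaves implicit.
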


To obtain the separation we need to show that $\imp(\phi)$ is easily
refutable in \class($\land$, $w$) for $\class \in \{\text{OBDD, SDD,
  d-SDNNF}\}$. We prove the even stronger result that resolution can
efficiently refute every lifted  $\imp(\phi)$. Since OBDD($\land$,
$w$) is known to p-simulate resolution \cite{atserias2004constraint},
the result follows.

\begin{lemma}\label{lem:resupper}
    Let $\phi = \bigwedge_{i \in [m]} C_i$ be a \textup{CNF}. Then, there exists a polynomial resolution-refutation of $\imp(\phi)$.
\end{lemma}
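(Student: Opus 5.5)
Starting from $z_1$, we will derive the unit clauses $z_2, z_3, \dots, z_{m+1}$ in turn and then resolve $z_{m+1}$ with $\neg z_{m+1}$ to get the empty clause. The refutation therefore has $m$ stages, one for each $i\in[m]$. Each stage is a polynomial-size derivation of $z_{i+1}$ from $z_i$ together with the clauses of $\imp(\phi)$ that belong to index $i$.

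Fix $i$ and write $C_i = \lambda_1\lor\dots\lor\lambda_k$. We already have the unit clause $z_i$.
\begin{enumerate}
\item Resolve $z_i$ with each implication clause $(\neg\lambda_j\lor\neg z_i\lor z_{i+1})$, for $j\in[k]$. This gives $(\neg\lambda_j\lor z_{i+1})$.
\item Resolve $z_i$ with $(\neg y_i\lor\neg z_i\lor z_{i+1})$. This gives $(\neg y_i\lor z_{i+1})$.
\item Start from the clause $C_i\lor y_i$. Resolve it on $y_i$ with $(\neg y_i\lor z_{i+1})$ to obtain $C_i\lor z_{i+1}$.
\item Eliminate the literals of $C_i$ one at a time. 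For $j=1,\dots,k$, resolve the current clause on $\lambda_j$ with $(\neg\lambda_j\lor z_{i+1})$. The intermediate clauses have the form $\lambda_{j+1}\lor\dots\lambda_k\lor z_{i+1}$. The literal $z_{i+1}$ appears in both premises, so it is simply merged.
\end{enumerate}
After the last step only $z_{i+1}$ remains.

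Stage $i$ uses $O(|C_i|)$ resolution steps, and each clause in it has at most $|C_i|+1$ literals. Summing over all stages, the refutation has size $O(\sum_i |C_i| + m)$. This is linear in the size of $\imp(\phi)$, and in particular polynomial.

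The only points that need care are side conditions. Every resolution step must have exactly one clashing pair of literals. This holds because the $z$-variables and $y_i$ are fresh. It also uses that $C_i$ contains no complementary pair, which is guaranteed by the paper's definition of a CNF. Degenerate cases need a quick check: if $C_i$ is empty, step 4 is vacuous and step 3 already yields $z_{i+1}$. No genuine obstacle is expected.
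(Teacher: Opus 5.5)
Your proof is correct and follows essentially the paper's argument: for each $i$ you resolve $C_i\lor y_i$ against its block of implication clauses to remove every literal of $C_i$ and $y_i$, then chain along $z_1,\dots,z_{m+1}$ until you clash with $\neg z_{m+1}$. The only difference is the order of steps. You first cut away $\neg z_i$ using the already derived unit $z_i$, whereas the paper derives each $\neg z_i\lor z_{i+1}$ separately and chains them at the end; you also treat the $y_i$ literal explicitly, which the paper's sketch leaves implicit.
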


\begin{proof}
    We construct the resolution refutation inductively as follows:
    Let $C_i = \bigvee_{j \in [\ell]} \lambda_j $. Consider the clauses $\bigwedge_{j \in [\ell]} K_j$, where
    $K_j \coloneqq (\neg \lambda_j \lor \neg z_i \lor z_{i+1})$. We first compute $C_j \land K_1 = C_j \setminus \{\lambda_1\} \lor (\neg z_i \lor z_{i+1})$
    and proceed iteratively through all $K_j$. At each step $s$ we obtain clauses representing $C_j \setminus \{\lambda_1, \dots, \lambda_s\} \lor (\neg z_i \lor z_{i+1})$
    until we finally have the clause representing $ (\neg z_i \lor z_{i+1}) = (z_i \to z_{i+1})$. Note that each intermediate
    step of this construction has size at most $|C_i|$. With at most $|C_i|$ steps, one inductive step has size at most $|C_i|^2$.
    If we continue for all $i \in [m]$, we are finally left with clauses representing $(z_i \to z_{i+1})$ for all $i \in [m]$
    as well as $(z_1)$ and $(\neg z_{m+1})$. We can trivially join those clauses to obtain a contradiction. The resulting
    resolution refutation has size $\bigO( m \cdot M^2 )$, where $M$ is the size of the largest clause in $\phi$.
    Thus, there exists a polynomial resolution refutation of $\imp(\phi)$. 
\end{proof}

Combining the upper and the lower bound yields the desired separation. 

\begin{theorem}
  There is a family $(\phi_n)$ of unsatisfiable $n$-variable \textup{CNF}s of size $O(n^2)$
  that have
  resolution- and $\class(\land,w)$-refutations of size $n^{O(1)}$,
  but where
  every $\classD(\land,r)$-refutation has size  $2^{\Omega(n)}$ for all $\class,\classD \in \{ \textup{OBDD}, \textup{SDD}, \textup{d-SDNNF}
\}$.
\end{theorem}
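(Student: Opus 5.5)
The family is $\phi_n \coloneqq \imp(\text{VC}_{G_n})$, where $(G_n)$ is a family of $(d,c)$-expanders for some fixed $d \geq 3$ and $0<c<1$; such families exist by the remark in the preliminaries. The plan combines the d-SDNNF lower bound lifted through Theorem~\ref{th:meta} with the resolution upper bound of Lemma~\ref{lem:resupper}.

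\textbf{Size.} $G_n$ has $dn/2$ edges. Hence $\text{VC}_{G_n}$ has $O(n)$ clauses of width $2$, and $\imp(\text{VC}_{G_n})$ has $O(n)$ clauses of width at most $3$ over $N = n + O(n)$ variables. Its size is therefore $O(n)$, which is within $O(n^2)$. Since the number of variables is linear in the number of vertices, bounds of the form $2^{\Omega(n)}$ and $n^{O(1)}$ keep their form when restated in terms of $N$.

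\textbf{Upper bounds.} By Lemma~\ref{lem:resupper}, $\imp(\text{VC}_{G_n})$ has a resolution refutation of size $O(m \cdot M^2) = O(n)$. OBDD($\land,w$) p-simulates resolution \cite{atserias2004constraint}. By Theorem~\ref{thm:smallrestructuringsimulatesOBDD}, or more directly because OBDDs are SDDs over right-linear vtrees and SDDs are d-SDNNFs, every OBDD($\land,w$) proof is, up to polynomial blowup, an SDD($\land,w$) and a d-SDNNF($\land,w$) proof over the same fixed vtree. This gives polynomial $\class(\land,w)$-refutations for every $\class$.

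\textbf{Lower bounds.} The preceding lemma gives a $2^{\Omega(n)}$ bound for every d-SDNNF($\land,r$)-refutation of $\imp(\text{VC}_{G_n})$. One point needs checking there. $\text{VC}_{G_n}$ is a $(2,d)$-CNF, and $d \leq \log n$ for large $n$, so it is a $(2,\log n)$-CNF and Theorem~\ref{th:meta} applies. A refutation of size $t$ yields a d-SDNNF for $\text{VC}_{G_n}$ of size $O(t^2 n^{8})$. The expander bound \cite{bova2015stronglyexponentialseparationdnnfs} requires this to be $2^{\Omega(n)}$, which forces $t = 2^{\Omega(n)}$.

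\textbf{Transfer to all $\classD$.} Every OBDD($\land,r$)- or SDD($\land,r$)-refutation converts line by line, at linear cost, into a d-SDNNF($\land,r$)-refutation. An OBDD over an order is an SDD over the right-linear vtree, and an SDD is a d-SDNNF of linear size over the same vtree. Join and reordering steps remain valid because they depend only on the represented functions and on equality of the structures. So the lower bound carries over to all $\classD$.

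The only delicate point is the parameter bookkeeping. One must make sure the degree bound $d \leq \log n$ holds, so that Theorem~\ref{th:meta} applies. One must also make sure the polynomial factor $n^{2k^2}$ from Theorem~\ref{th:meta} and the squaring of $t$ do not destroy the $2^{\Omega(n)}$ bound, and they do not.
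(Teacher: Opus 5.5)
Your proposal is correct and is essentially the paper's argument: it uses $\imp(\text{VC}_{G_n})$ for expanders, lifts the lower bound of \cite{bova2015stronglyexponentialseparationdnnfs} through Theorem~\ref{th:meta}, and gets the upper bound from Lemma~\ref{lem:resupper}, the p-simulation of resolution by OBDD($\land,w$), and the inclusions OBDD $\subseteq$ SDD $\subseteq$ d-SDNNF. Your added bookkeeping (the condition $d\le\log n$, absorbing the $O(t^2n^{8})$ overhead, and the linear variable count) is accurate; one small correction is that Theorem~\ref{thm:smallrestructuringsimulatesOBDD} concerns systems with $r^\ast$, so the $(\land,w)$ simulation actually comes from your direct right-linear-vtree observation, not from that theorem.
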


\subsection{\class($\land$, $w$) does not p-simulate \classD($\land$,
  $r$)}
\label{sec:sep2}

The above separation between systems without reordering and without weakening also holds in the converse
direction, that is, d-SDNNF($\land,w$) does not p-simulate
OBDD($\land,r$). Here it turns out that the same CNF-family that
separates OBDD($\land,w$) from OBDD($\land,r$)
\cite{buss2018reordering} also separates the stronger
d-SDNNF($\land,w$) system from OBDD($\land,r$). To establish this, one
basically needs to verify, that the communication complexity arguments used
in \cite{buss2018reordering} also imply lower bounds on
d-SDNNF($\land,w$)-refutations. As this is technically tedious, we
only state the result here and provide some more details in appendix.
Note that by Theorem~\ref{thm:smallrestructuringsimulatesOBDD} this also implies a quasi-polynomial separation between d-SDNNF($\land,w$) and d-SDNNF($\land,w,r^\ast$).

\begin{theorem}\label{th:perm}
    There is a family $\left(\phi_n \right)$ of \textup{CNF}s  such that:
    \begin{itemize}
        \item $\phi_n$ has $n^{\bigO(1)}$ variables and $n^{\bigO(n
            \log n)}$ clauses.
        \item $\phi_n$ has an \textup{OBDD($\land,r$)}-refutation of size $n^{\bigO(n \log n)}=|\phi_n|^{\bigO(1)}$.
        \item Every \textup{d-SDNNF($\land,w$)}-refutation of $\phi_n$ has size at least $n^{\Omega(n^2)}=|\phi_n|^{\Omega(n / \log n)}$.
    \end{itemize}
\end{theorem}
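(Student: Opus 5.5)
We plan to reuse the permutation-based formulas of \cite{buss2018reordering} that separate OBDD($\land,w$) from OBDD($\land,r$), and to show that their lower-bound argument carries over to d-SDNNF($\land,w$). Concretely, $\phi_n$ is the conjunction $\psi_1\land\psi_2$ of two CNFs over the same $n^{\bigO(1)}$ variables. Each $\psi_i$ is easy to represent under its own order $\pi_i$ but hard under the other. We choose them so that $\psi_1\land\psi_2$ is unsatisfiable, for instance by encoding two incompatible equality constraints on a bipartitioned variable set. The upper bound is taken directly from \cite{buss2018reordering}. We build an OBDD for $\psi_1$ in order $\pi_1$ by joining its clauses, reorder once to $\pi_2$, and join it with the clauses of $\psi_2$ to reach $\bot$. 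The size is $n^{\bigO(n\log n)}$, which is polynomial in $|\phi_n|$ because the clause count is already $n^{\bigO(n\log n)}$. Since OBDD($\land,r$) is the weaker system here, nothing new is needed for this part.

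For the lower bound, we fix a d-SDNNF($\land,w$)-refutation. Without reordering, all lines respect a single vtree $(T,b)$. The key tool is the rectangle-cover connection of \cite{bova2016knowledge}. For every node $v$ of $T$, a d-SDNNF $D$ respecting $(T,b)$ of size $s$ yields a disjoint cover of $f_D^{-1}(1)$ by at most $s$ combinatorial rectangles with respect to the partition $(b(T_v), X\setminus b(T_v))$. Because $T$ is binary, we can pick a node $v$ whose leaf set gives a balanced partition, with between $|X|/3$ and $2|X|/3$ variables on each side. This plays the role of the cut in the OBDD order used by Buss et al.

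We then mimic their communication argument along the derivation. Every line is implied by $\phi_n$; it is a conjunction or weakening of earlier lines, and all lines share the same balanced cut. The crux is to show that, for any balanced partition, some line of the refutation must have large rectangle-cover number. We would argue this by following the derivation to the first line that ``captures'' enough of the structure of $\phi_n$ across the cut. For instance, we take the first line falsified by a large structured set of assignments, namely the permutation-encoding inputs that are hard across every balanced cut. We then show that this line's function has a fooling set, or rank-type witness, of size $n^{\Omega(n^2)}$ with respect to the cut. By the choice of the permutation formulas, at least one of $\psi_1,\psi_2$ induces a hard cross-cut structure for every balanced partition. That line then needs $n^{\Omega(n^2)}$ rectangles, and hence d-SDNNF size at least $n^{\Omega(n^2)}$ by the cover lemma.

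The main obstacle is this last step: adapting the proof-step analysis of \cite{buss2018reordering} from OBDD widths at a layer to rectangle covers at a vtree node. In particular, we must check that the fooling-set or discrepancy arguments only use the property that the represented function decomposes into few (disjoint) rectangles across the cut. That property is exactly what determinism and structuredness give. We must also check that weakening steps cannot circumvent the bound, since weakening may produce functions not expressible as sub-CNFs. We would first try to show that the Buss et al.\ argument is purely a statement about functions with small rectangle covers over a fixed balanced partition. If it is, the transfer is immediate, since an OBDD's layer width bounds its rectangle count in the same way.
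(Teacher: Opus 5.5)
There is a genuine gap, and it starts with the formula itself. The shape $\phi_n=\psi_1\land\psi_2$, with each $\psi_i$ easy only in its own order, is not the family of \cite{buss2018reordering}, and it fails on both sides.

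For the upper bound: reordering preserves the function. Reordering the OBDD of $\psi_1$ from $\pi_1$ to $\pi_2$ therefore produces an OBDD for $f_{\psi_1}$ in order $\pi_2$, which is exactly what you assumed to be large. Your two requirements on $\psi_1,\psi_2$ contradict each other. In the actual construction, reordering is only applied to tiny lines. For each coded permutation $\pi$, the clauses $\neg T_{\sigma_\Pi(\pi)}\lor C$ with $C\in\phi_\pi$ are refuted by running an OBDD$(\land)$-refutation of $\phi_\pi$ in an order good for $\phi_\pi$. This yields the clause $\neg T_{\sigma_\Pi(\pi)}$ over the $O(\log|\Pi|)$ code variables, and only these clauses are reordered to a common order and joined to $\bot$.

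For the lower bound: hardness of \emph{representing} $\psi_1$ or $\psi_2$ is irrelevant once weakening is available, since lines may be arbitrary implied functions. For instance, a contradiction built from equality constraints $x_i\leftrightarrow\pm y_j$ is an unsatisfiable 2-CNF, which has polynomial-size resolution refutations. Since d-SDNNF$(\land,w)$ p-simulates OBDD$(\land,w)$, which p-simulates resolution, such a family is easy for the very system you want to bound.

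The step you call the crux, a fooling set for ``the first line that captures enough structure'', is where the entire proof lives. Your plan gives no mechanism forcing such a line to exist when weakening can replace lines by arbitrary consequences; this is precisely what a lifting theorem supplies. The paper, following \cite{buss2018reordering}, uses two ingredients absent from your plan.

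First, a dag-like lifting result. The base formula is an unsatisfiable Tseitin formula $T(K_{\log n},c)$ composed with the indexing gadget $Ind_m$. Every $\mathcal{G}$-refutation of it whose functions all have small disjoint rectangle covers must be large, provided the partition places all pointer blocks $\overline{z^j}$ on one side and all data blocks $\overline{y^j}$ on the other (Lemma~\ref{lem:klognsize}). Lemma~\ref{lem:sddcommsize} then turns a d-SDNNF$(\land,w)$-refutation into such a $\mathcal{G}$-refutation.

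Second, that bound holds only for pointer/data-separating cuts, not for an arbitrary balanced cut of an arbitrary vtree. So one orifies ($\phi^{\lor p}$) and adds the coded affine permutations $\hat\Pi$ over $\mathrm{GF}(n)$. For every balanced cut, some $\pi\in\hat\Pi$ puts at least one variable of every orified pointer block on one side and one of every orified data block on the other (Lemma~14 of \cite{segerlind2008relative}). Fixing the code variables to $\sigma_{\hat\Pi}(\pi)$ and the remaining orified variables to $0$ then reduces to the hard case.

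Your closing idea, that the Buss et al.\ argument should be purely a statement about functions with few rectangles across a fixed cut, is the right intuition; it is exactly how the paper's $\mathcal{G}$-refutation framework is used. But with the $\psi_1\land\psi_2$ formula and without these two ingredients, the proposal does not yield a proof.
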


\subsection{An exponential separation between \class($\land$) and \classD($\land$,
  $r$)}
\label{sec:sep3}
While Theorem~\ref{th:perm} implies a quasipolynomial separation
between \class($\land$) and \classD$(\land,r)$, the goal of this
section is to obtain an exponential separation (and showcase another
application of the lifting theorem).

\begin{theorem}\label{th:shiftedeq}
    There exists a family $\left( \phi_n \right)$ of unsatisfiable \textup{CNF}s such that:
    \begin{itemize}
        \item $\phi_n$ has $\bigO(n)$ variables and size $n^{\bigO(1)}$.
        \item $\phi_n$ has an \textup{OBDD($\land,r$)}-refutation of size $|\phi_n|^{\bigO(1)}$
        \item Every \textup{d-SDNNF($\land$)}-refutation of $\phi_n$ has size $2^{\Omega(n)}$.
    \end{itemize}
\end{theorem}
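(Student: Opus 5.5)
The plan is to exploit the asymmetry between the two systems. In a d-SDNNF$(\land)$-refutation the final contradiction is obtained only through $\land$-steps, and $\land$ requires equal vtrees. So the entire derivation tree below the last line lives on a single vtree $(T,b)$. In OBDD$(\land,r)$, by contrast, different parts of the refutation can be carried out under different orders and then brought together.

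\textbf{Construction.} I would build $\phi_n$ from \emph{shifted equality} constraints. Take variables $x_1,\dots,x_n,y_1,\dots,y_n$ and, for a shift $s\in\{0,\dots,n-1\}$, let $\EQ_s\coloneqq\bigwedge_{i}(x_i\leftrightarrow y_{i+s \bmod n})$, written as a $(2,2)$-CNF. I would then chain the lifting gadget of Section~\ref{sec:meta} through several blocks that share the $x,y$ variables:
\[
\phi_n \coloneqq z^0_1\land\neg z^{r}_{\mathrm{end}}\land\bigwedge_{s=1}^{r}\imp_s,
\]
where $\imp_s$ is the $\imp$-construction for the CNF $\EQ_{s}$ with its own control variables $Y^s$ and chain variables $Z^s$. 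Consecutive chains are glued so that the $Z$-chain of block $s$ ends where the chain of block $s+1$ starts, and only the global $z_1$ and the final $\neg z$ are kept as unit clauses. Here $r=O(n)$ or $r=O(\log n)$, so there are $O(n)$ variables per block and polynomial size overall. As in Lemma~\ref{lem:impmin}, this formula is minimally unsatisfiable; I would verify this by the same case analysis.

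\textbf{Upper bound.} For each block $s$ I would choose the order that interleaves $x_i$ with $y_{i+s}$. Under this order, every subformula of $\control(\EQ_s)$ plus the implication clauses of block $s$ has polynomial-size OBDDs, since the pathwidth is constant. So, exactly as in Theorem~\ref{th:sddmeta}, I would compile block $s$ by joins and then project it down to the small implication $\neg z^s_{\mathrm{start}}\lor z^s_{\mathrm{end}}$. Projection is not available without weakening, so instead I would follow the resolution-like pattern of Lemma~\ref{lem:resupper}: fix the control variables and eliminate by joining, keeping every line a conjunction of clauses with a small OBDD. I would then reorder the resulting OBDD to a common order and join it with the other blocks and the unit clauses to reach $\bot$. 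If the no-weakening constraint makes the elimination awkward, the fallback is to join everything of block $s$ under its own order, giving a polynomial OBDD for $\imp_s$ restricted appropriately, reorder that polynomial OBDD, and join. The point to check is that the conjunction of blocks, after reordering to one order, stays small whenever the lines only carry constraints whose remaining $x,y$-structure is trivial.

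\textbf{Lower bound.} I would reprove the argument of Theorem~\ref{th:meta} in this multi-block setting. Fix the final step $D_p\land D_q\equiv\bot$ on vtree $T$ and the missing clauses $C_p,C_q$. These lie in at most two blocks $s_p,s_q$. By the communication-complexity characterisation of d-SDNNF via rectangle covers \cite{bova2016knowledge}, for every vtree $T$ there is a node whose variable split is balanced. For such a split, a standard counting argument shows that a constant fraction of the $n$ shifts have $\Omega(n)$ pairs $(x_i,y_{i+s})$ crossing the cut. I would choose $r$ large enough, and the shifts spread out, so that some block $s^\ast\notin\{s_p,s_q\}$ has a hard shift for $T$. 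Then I would construct partial assignments $\ascr_p,\ascr_q$ as in the case distinction of Theorem~\ref{th:meta}. These assignments switch off every block except $s^\ast$ by setting their control variables to $1$ and routing the $Z$-chains through them. Within block $s^\ast$ they satisfy the chain so that what survives is $\EQ_{s^\ast}$ minus $O(1)$ clauses. The missing clauses are patched by a constant-size $D^\ast$ over $O(1)$ variables, since $\EQ_s$ is a $(2,2)$-CNF. This yields a d-SDNNF for $\EQ_{s^\ast}$ on $T$ of size $O(|D_p||D_q|)$ by Lemma~\ref{lem:dSDNNFand}, and the rectangle-cover bound forces this to be $2^{\Omega(n)}$.

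\textbf{Main obstacle.} I expect the hardest step to be making the restriction argument work across blocks. The restrictions must disable blocks $s_p,s_q$, and all others, without falsifying any clause in $\psi_{D_p}$ or $\psi_{D_q}$, while leaving a hard block intact in \emph{one} of the two sides. This probably requires disabling blocks purely through control and chain variables, never through the shared $x,y$ variables. In parallel, the upper bound must genuinely avoid weakening, which is where the design of the chaining needs the most care.
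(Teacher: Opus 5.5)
\textbf{The construction cannot satisfy both the upper and the lower bound, and your own lower-bound argument shows why.}

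That argument looks only at the last $\land$-step $D_p\land D_q\equiv\bot$. It uses two facts: the two lines respect one vtree, and each line is equivalent to a sub-conjunction of $\phi_n$. Both facts hold verbatim for OBDD$(\land,r)$-refutations. Reordering preserves functions, so lines are still sub-conjunctions, and the final join is performed on one order, which is just a right-linear vtree. Hence, if every vtree has a hard shift outside $\{s_p,s_q\}$ (which you need), the same restriction argument gives $2^{\Omega(n)}$ for every OBDD$(\land,r)$-refutation of your $\phi_n$. If some vtree has no such shift, your lower bound breaks. In fact your formula is exactly $\imp(\EQ_{s_1}\land\cdots\land\EQ_{s_r})$, and Theorem~\ref{th:meta} is precisely the statement that $\imp$-lifted formulas gain nothing from reordering.

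Your upper-bound sketch fails for the same reason. Without weakening, every line is equivalent to a sub-conjunction of $\phi_n$. The conjunction of all clauses of block $s$ is $\control(\EQ_s)\land\bigwedge_i(z^s_i\to z^s_{i+1})$ on the shared $x,y$-variables, never anything like $\neg z^s_{\mathrm{start}}\lor z^s_{\mathrm{end}}$. So no block ever collapses to a small function that could be cheaply reordered. Separately, $r$ blocks with their own control and chain variables give $\Theta(rn)$ variables, not $O(n)$.

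\textbf{The missing idea: combine the shifts disjunctively through selector variables, not conjunctively through one $z$-chain.} The paper takes $\text{SEQ}_n=\bigwedge_\ell\bigl(T_{\sigma(\ell)}\to\imp(\EQ_n^\ell)\bigr)$ with $\log n$ fresh variables (Definition~\ref{def:SEQ}). Now each block is unsatisfiable on its own, so the conjunction of its clauses is the single clause $\neg T_{\sigma(\ell)}$ on $\log n$ variables.

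\emph{Upper bound.} OBDD$(\land,r)$ refutes each block under an order interleaving $x_i$ with $y_{i+\ell}$. This uses Theorem~\ref{th:sddmeta} with an $O(n)$-size OBDD compilation of $\control(\EQ_n^\ell)$. It then reorders the resulting tiny OBDD to a common order and joins everything.

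\emph{Lower bound.} This uses that a d-SDNNF$(\land)$-refutation lives on a single vtree $(T,b)$ throughout. Pick an $\ell$ that is hard for $(T,b)$, via a balanced vtree node and the Kushilevitz--Nisan rectangle bound. Restricting \emph{every} line by $\sigma(\ell)$ yields a d-SDNNF$(\land)$-refutation of $\imp(\EQ_n^\ell)$ on $(T,b)$. Theorem~\ref{th:meta} then applies, since $\EQ_n^\ell$ is a $(2,2)$-CNF.

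This whole-derivation restriction is exactly what fails for OBDD$(\land,r)$, where different blocks live on different orders. The construction also keeps the variable count at $O(n)$.
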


To prove the theorem we follow the same route as for the known
exponential separation of
OBDD($\land$) from OBDD($\land,r$) \cite{itsykson2020obdd}: we start
with \emph{shifted equality} $\phi$ \cite{Kushilevitz_Nisan_1996} and
transform it to an unsatisfiable CNF by adding a suitable implication chain.
However, since our $\imp(\phi)$ lifting for SDD/d-SDNNF is slightly
different as the one used for OBDDs in \cite{itsykson2020obdd} we need
to adapt the proof accordingly.

    Let $X$ be a set of variables. We define $T_\assign$ as the unique conjunction over $X$ such that for all 
    assignments $\assign'$ over $X$ it holds that $\assign' \models T_\assign$ exactly if $\assign' = \assign$.
We observe that for all $\assign,\assign'$, which disagree on at least one variable, $T_\assign \land T_{\assign'}$ is unsatisfiable.
We can use these terms to define permutation formulas.

\begin{definition}\label{def:perm}
    Let $\phi$ be a \textup{CNF} over variables $X$, and let $S_{X}: \{ \pi \mid \pi: X \to X\}$ be the set of all permutations 
    on $X$. Furthermore, let $\Pi \subseteq S_{X}$ be some set of permutations on $X$,
    let $\sigma_\Pi: \Pi \to \{0,1\}^{W_\Pi}$ be an injective function for
    some fresh variable set $W_\Pi$ with $|W_\Pi| = \lceil \log ( |\Pi| ) \rceil$, 
    Then we define:
    \[ \text{perm}_\Pi(\phi) \coloneqq \bigwedge_{\pi \in \Pi} \left( (T_{\sigma_\Pi(\pi)}) \to \phi_\pi \right)
    \land \bigwedge_{\ascr \in \{0,1\}^{W_\Pi}\setminus\sigma(\Pi)} \neg \left( T_\ascr \right),  \]
    where $\phi_\pi$ is constructed by replacing all variables $x$ in $\phi$ by $\pi(x)$
    and $\{0,1\}^{W_\Pi}\setminus\sigma(\Pi)$ is the set of all
    assignments to $W_\Pi$, which are not mentioned by some $\sigma(\pi)$.
\end{definition}
Note that, as $T_\ascr$ is a term, $\neg T_\ascr$ is a clause, and all $T_\ascr$ only appear negatively. We are now ready to describe our separation:

\begin{lemma}\label{lem:permlower}
    Let $\phi$ be an unsatisfiable \textup{CNF} such that the following holds:
    \begin{itemize}
        \item There exists an order $\leq$ and a polynomial size \textup{OBDD($\land$)}-refutation of $\phi$ respecting $\leq$
        \item There exists a set $\Pi \subseteq S_X$ of permutations
          of size $|\Pi|=|X|^{\bigO(1)}$ such that for every vtree $(T,b)$ there exists a
        $\pi \in \Pi$ such that any \textup{d-SDNNF($\land$)}-refutation of $\phi_\pi$ respecting $(T,b)$ has size $2^{\Omega(n)}$.
    \end{itemize}
    Then there exists a polynomial size \textup{OBDD$(\land,r)$} refutation of
    $\text{perm}_\Pi( \phi )$, but every \textup{d-SDNNF($\land$)}
    refutation of $\text{perm}_\Pi(\phi)$ has exponential size.
\end{lemma}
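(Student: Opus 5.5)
The proof splits into an upper bound and a lower bound.

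\emph{Upper bound.} Take the order on $W_\Pi \cup X$ that puts $W_\Pi$ first, followed by $X$ in the order $\leq$ of the hypothesis. For each $\pi \in \Pi$ we will build a refutation of the subformula $(T_{\sigma_\Pi(\pi)}) \to \phi_\pi$, i.e.\ derive the OBDD for $\neg T_{\sigma_\Pi(\pi)}$, using the order $\pi(\leq)$ on $X$, with $W_\Pi$ still on top.

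First, each clause $\neg T_{\sigma_\Pi(\pi)} \lor C_\pi$ is an initial OBDD in this order. Next, replay the given OBDD($\land$)-refutation of $\phi$ after renaming variables by $\pi$. Every intermediate OBDD $D$ is replaced by an OBDD for $\neg T_{\sigma_\Pi(\pi)} \lor D_\pi$. Such an OBDD has size $|D| + \bigO(|W_\Pi|)$: we read the $W$-variables first and branch into $D_\pi$ only on the assignment $\sigma_\Pi(\pi)$. Joins commute with this construction, because $(\neg T \lor A) \land (\neg T \lor B) \equiv \neg T \lor (A \land B)$. The replay therefore ends in the OBDD for $\neg T_{\sigma_\Pi(\pi)}$, which has $\bigO(|W_\Pi|)$ nodes.

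Then we reorder $\neg T_{\sigma_\Pi(\pi)}$ into one common order, say $W_\Pi$ followed by $\leq$. This step is cheap because the function depends only on $W_\Pi$. Finally we join all $\neg T_{\sigma_\Pi(\pi)}$ with the clauses $\neg T_\ascr$ for $\ascr \notin \sigma(\Pi)$. Every intermediate OBDD is a function on the $\lceil \log|\Pi| \rceil$ variables of $W_\Pi$, so it has size at most $2^{|W_\Pi|} = \bigO(|\Pi|)$. The result is $\bot$. The total size is $|\Pi| \cdot \mathrm{poly} + \bigO(|\Pi|^2)$, which is polynomial since $|\Pi| = |X|^{\bigO(1)}$.

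\emph{Lower bound.} Fix a d-SDNNF($\land$)-refutation $P$ of $\text{perm}_\Pi(\phi)$. Without reordering, all lines of $P$ respect one vtree $(T,b)$ over $W_\Pi \cup X$. Delete the $W_\Pi$ leaves by contracting them; this gives a vtree $(T',b')$ over $X$. Choose $\pi$ by the hypothesis for $(T',b')$. Now restrict every line of $P$ by the partial assignment $\ascr = \sigma_\Pi(\pi)$ on $W_\Pi$. Restriction of a d-SDNNF by a partial assignment does not increase size: we substitute constants at the leaves and simplify, which preserves determinism and decomposability. After removing constants and collapsing gates, the restricted circuit respects $(T',b')$.

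We then check what happens to each rule of $P$ under this restriction. An init clause becomes either $\top$ (when $\ascr$ satisfies $\neg T$) or a clause of $\phi_\pi$. A join step stays a join step. The final $\bot$ stays $\bot$. Lines equal to $\top$ are dropped. What remains is a d-SDNNF($\land$)-refutation of $\phi_\pi$ respecting $(T',b')$, of size at most $|P|$. By the hypothesis it has size $2^{\Omega(n)}$, so $|P| = 2^{\Omega(n)}$.

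The main obstacle is making the vtree restriction rigorous. After fixing the $W$-variables, a $\land$-gate may lose one side and must be collapsed into its other child. An $\lor$-gate must satisfy the structuredness condition $\text{var}(u_1) = \text{var}(u_2) = b(T_s)$, and this can break once some children become constants. I would handle this by smoothing: conjoin tautologies $(x \lor \neg x)$ where variables are missing. Each such fix should cost at most a polynomial factor per gate, and a polynomial blow-up is harmless against the $2^{\Omega(n)}$ bound. The smoothing also needs to keep the outcome of each join step consistent with the restriction.
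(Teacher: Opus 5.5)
Your proposal is correct and follows the paper's route. The lower bound is exactly the paper's argument of restricting every line to $\sigma_\Pi(\pi)$, and you are more careful than the paper in contracting the $W_\Pi$ leaves to get a vtree over $X$; your explicit upper-bound construction is the standard one that the paper simply cites (Itsykson et al., Theorem~4.1). The smoothing detour is unnecessary: by induction, a gate restricts to a syntactic $\top$ only if all its variables lie in $W_\Pi$, so every non-constant restricted gate $g$ has variable set exactly $\var(g)\setminus W_\Pi$, and the paper's $\lor$-condition is preserved automatically.
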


This lemma is very similar to Theorem 4.1 in \cite{itsykson2020obdd}
and its proof is deferred to the appendix.
In order to apply Lemma \ref*{lem:permlower}, we use \emph{shifted equality} \cite{Kushilevitz_Nisan_1996}.

\begin{definition}\label{def:SEQ}
    Let $n = 2^m$, for some $m \in \N$, $X = \{x_0,\dots,x_{n-1}\}$, $Y = \{y_0,\dots,y_{n-1}\}$ and $\ell \in \{0,\ldots,n-1\}$. We define EQ$_n^\ell$ as:
    \[ \text{EQ}_n^\ell \coloneqq \bigwedge^{n-1}_{i=0} \left( x_i
        \lor \neg y_{\ell + i\!\!\!\!\!\pmod n} \right)\land\left( \neg x_i \lor
        y_{\ell + i\!\!\!\!\!\pmod n} \right). \]
For $W = \{w_1,\dots, w_m\}$ and a bijection $\sigma: [n] \to
\{0,1\}^{W}$, \emph{shifted equality} %
is defined as
    \[ \text{SEQ}_n \coloneqq \bigwedge^{n-1}_{\ell=0} \left( T_{\sigma(\ell)} \to \imp( \text{EQ}_n^\ell) \right) = \text{perm}_{\Pi}\left( \imp(EQ_n^0) \right), \]
where $\Pi \coloneqq \{ \pi_\ell \mid 0\leq \ell \leq n-1 \}$ with
$\pi_\ell(y_i)\coloneqq y_{i+\ell\pmod n}$ for all $y_i\in Y$ and
$\pi_\ell(v)\coloneqq v$ for all
$v\in\textup{vars}(\imp(EQ_n^0))\setminus Y$.
\end{definition}
As $\imp(\text{EQ}_n^\ell)$ is unsatisfiable for all $\ell$, SEQ$_n$ is unsatisfiable as well.  
The proof of our separations (Theorem~\ref*{th:shiftedeq}) follows from the following key
lemma. While its first claim is very similar to
Proposition 4.5 in \cite{itsykson2020obdd}, the second claim requires
a proof that utilizes known communication
complexity lower bounds. The proof of the lemma is provided in the appendix. 

\begin{lemma}\label{lem:permupperlower}
    The following hold:
    \begin{enumerate}
        \item For every $\ell\in\{0,\ldots,n-1\}$ there exists a polynomial size \textup{OBDD$(\land)$}-refutation of $\imp( \EQ_n^\ell )$.
        \item For every vtree $(T,b)$ there exists an $\ell\in\{0,\ldots,n-1\}$ such that every \textup{d-SDNNF}-representation of $\EQ_n^\ell$ has size $2^{\Omega(n)}$.
    \end{enumerate}
\end{lemma}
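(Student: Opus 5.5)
For the first claim I will build the refutation explicitly under a single fixed order $\leq_\ell$ in which each $x_i$ is immediately followed by its partner $y_{\ell+i \bmod n}$. All control variables and all $Z$-variables are placed around these pairs, with $z_j$ near the clauses of index $j$. Under this order every clause of $\imp(\EQ_n^\ell)$ has a constant-size OBDD. I then mimic the proof of Theorem~\ref{th:sddmeta}, so the refutation has two phases.

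In the first phase I compile $\control(\EQ_n^\ell)=\bigwedge_i (x_i\lor\neg y_{\ell+i}\lor y'_{2i})\land(\neg x_i\lor y_{\ell+i}\lor y'_{2i+1})$ by joining clauses one at a time. Here $y'$ denotes the control variables of $\imp$. The key observation is that every intermediate conjunction is a conjunction of constraints on disjoint blocks $\{x_i,y_{\ell+i},y'_{2i},y'_{2i+1}\}$. Each block is contiguous in $\leq_\ell$, so the OBDD width stays constant and every intermediate OBDD has size $\bigO(n)$.

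In the second phase I join $z_1$, then the implication clauses of index $j$, then $\neg z_{m+1}$. The $z$-variables must also sit near their blocks, and I must check that the size remains polynomial. The cleanest way is to order the blocks consecutively and insert $z_j$ between consecutive blocks. Then every intermediate function is a chain-like conjunction of local constraints along the order, which gives constant width and polynomial size.

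For the second claim I will use the rectangle-cover characterisation of d-SDNNFs \cite{bova2016knowledge}. A d-SDNNF of size $s$ respecting $(T,b)$ yields, for some node $v$ of $T$ whose variable set $b(T_v)$ is balanced (between $|X\cup Y|/3$ and $2|X\cup Y|/3$), a disjoint cover of $f^{-1}(1)$ by at most $s$ combinatorial rectangles with respect to the partition $(b(T_v),\text{rest})$. Such a balanced node exists by the standard argument.

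Given this partition, I pick $\ell$ by the classical shifted-equality counting \cite{Kushilevitz_Nisan_1996}. Let $A=b(T_v)$. Count the pairs $(i,\ell)$ with exactly one of $x_i$ and $y_{i+\ell}$ in $A$. Since both $A$ and its complement contain $\Omega(n)$ variables, and $\ell$ is uniform, there are $\Omega(n^2)$ such split pairs in total. Averaging then gives an $\ell$ with $\Omega(n)$ indices $i$ split by $A$.

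For this $\ell$, restrict to a set $I$ of $\Omega(n)$ split indices and fix the other variables to a consistent equal assignment. The restricted function then contains $\EQ$ on $|I|$ bits as a subfunction, with its inputs separated across the partition. Every rectangle in the restricted cover contains at most one model of $\EQ$, because two equal-string pairs in one rectangle would produce a mixed non-equal pair that is also accepted. So $2^{|I|}$ rectangles are needed, which gives $s\ge 2^{\Omega(n)}$.

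I expect the main obstacle to be a technical point in the second claim: the rectangle bound must be transferred correctly under restriction and under a vtree node that is merely balanced. Isolating the split pairs relies on $x_i$ and $y_{i+\ell}$ being a matching, so each restriction leaves separated equality constraints. For the lower bound itself I would first try to invoke the known d-SDNNF/rectangle lemma directly. For the first claim, the order of the $z$-variables needs care.
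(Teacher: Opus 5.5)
Your proposal is correct and follows the paper's proof. For part~1 you use the same blockwise order and the compile-then-chain construction of Theorem~\ref{th:sddmeta}, and for part~2 you combine the rectangle-cover bound for d-SDNNFs at a balanced vtree node with the Kushilevitz--Nisan shifted-equality bound. The only differences are that you reprove the latter, which the paper simply cites (averaging over $\ell$ to get $\Omega(n)$ split pairs, restricting the remaining pairs to equal values, and a fooling-set argument), and that you are more explicit about where the $z$-variables go in the order.
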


\begin{proof}[Proof of Theorem \ref*{th:shiftedeq}]

  We let $\phi_n \coloneqq \text{SEQ}_n$ and the size bound follows from
  the definition. All we need to do is to apply Lemma~\ref{lem:permlower} to $\imp(
  \EQ_n^0 )$; for this we need to verify the two prerequisites.
  The first one (existence of \textup{OBDD$(\land)$}-refutations)
  follows from the first part of Lemma~\ref{lem:permupperlower}. 
  Combining the second claim of
  Lemma~\ref{lem:permupperlower} with Theorem~\ref*{th:meta} yields
  that for every $(T,b)$ there is an $\ell$ such that every
  \textup{d-SDNNF$(\land)$}-refutation of $\imp( \text{EQ}_n^\ell)$
  respecting $(T,b)$ has size $2^{\Omega(n)}$. By using the set $\Pi$
  from Definition~\ref{def:SEQ} this implies the second prerequisite
  from Lemma~\ref{lem:permlower}.
\end{proof}

\subsection{Other separations}

Finally, let us briefly discuss seperations between our proof systems and some well-known further proof systems.
In \cite{buss2021lower} it was shown that Cutting Planes does not p-simulate OBDD$(\land)$, and by extension all
OBDD-based proof systems. This clearly also holds for all of our proof systems. Furthermore, in \cite{atserias2004constraint}
it was shown that Cutting Planes \emph{with unary coefficients} is simulated by OBDD$(\land,w)$, though to our knowledge
it is unknown whether any OBDD-based proof system simulates the general version of Cutting Planes. All of these separations can
be easily extended to our SDD/d-SDNNF-based proof systems without further issues.

Frege systems are also often considered in relation to OBDD-based systems. It is well-known that OBDD$(\land,w)$ is incomparable to \emph{bounded-depth} Frege \cite{krajicekinterpolation}.
However, we leave it is an open question, whether the methods developed in \cite{krajicekinterpolation} can be extended to SDDs and d-SDNNFs as well.
Surprisingly, it has not been studied to our knowledge whether (unbounded depth) Frege simulates OBDD$(\land,w)$. We also leave
this question, as well as its extension to SDDs and d-SDNNFs open.

\section{Conclusion}

Despite its versatility, the lifting method introduced in this paper is not strong enough for some purposes.
Notably, even though an \emph{exponential} separation between OBDDs and SDDs exists in the satisfiable setting \cite{bova2016sdds}, we are not 
aware of an exponential separation that can be encoded as a $(k, \log n)$-CNF. Therefore, a proper exponential 
separation between OBDD$(\land,r)$ and SDD$(\land)$ would require more sophisticated methods. The same holds for any separations
between proof systems using SDDs and those using d-SDNNFs, as even though there are known superpolynomial separations between the two representation
formats in the satisfiable setting \cite{vinall2024structured}, it is unclear, whether these separations can be extended to the unsatisfiable setting as well.
Furthermore, as a result of Lemma \ref*{lem:resupper},
our lifting method is completely defeated by resolution, and therefore
by any proof systems using weakening. Thus it remains open,
whether OBDD$(\land,w)$ p-simulates SDD$(\land,w)$ or even SDD$(\land)$.

Another open problem concerns the one-shot restructuring rule (r) for SDDs and
d-SDNNFs: it is currently open whether SDD$(\land,w,r)$ and
d-SDNNF$(\land,w,r)$ are polynomially verifiable proof systems, as it is open whether the equivalence problem for SDDs/d-SDNNFs respecting different
vtrees is polynomial-time verifiable, which is a problem that  might
also be of independent interest.

Finally, we reiterate the long
standing open problem of superpolynomial lower bounds for
OBDD$(\land,w,r)$, which are a necessary requirement for separating
OBDD$(\land,w,r)$ from d-SDNNF$(\land,w,r)$.

\bibliography{refs}

\clearpage

\appendix

\section{Missing proofs}

\subsubsection*{Proof of Lemma \ref{lem:d-SDNNFproperties}}

It was proven in \cite{pipatsrisawat2008new} that, given two d-SDNNFs $D_1, D_2$ respecting the same vtree $(T,b)$,
it is possible to construct in polynomial time a third d-SDNNF $D$ such that $D \equiv D_1 \land D_2$.
Furthermore, it was also shown that, given two d-SDNNFs $D_1,D_2$ respecting the same vtree $(T,b)$, it can be verified
in polynomial time whether $D_1 \equiv D_2$. The second claim follows immediately.
The third, and fourth claim have been proven in \cite{pipatsrisawat2008new}. as well.
It therefore remains to show the first claim.

We can establish structured decomposability by simply assigning to every gate $g$
the \emph{lowest node} $v_g \in V(T)$ such that $\var(g) \subseteq b(T_{v_g})$.
It remains to verify the following properties:
\begin{enumerate}
    \item if $g$ is a $\land$-gate with children $s,t$, verify whether
    $v_{s}$ is in $V(T_{v_s,l})$, and $v_t$ is in $V(T_{v_g,r})$.
    \item if $g$ is a $\lor$-gate with children $s,t$, verify whether
    $\var(g) = \var(s) = \var(t) = b(T_{v_g})$.
\end{enumerate}
It is clear that $D$ respects $(T,b)$ exactly if these two properties are satisfied.

We can verify, whether $D$ is deterministic in a bottom-up manner:
let $g$ be a $\lor$-gate with children $s$ and $t$. Arguing inductively, we can
assume that the subcircuits $D\restrict{s}$ and $D \restrict{t}$, which are rooted
at $s$ and $t$ respectively, are each d-SDNNFs.
By Lemma~\ref{lem:dSDNNFand}, the circuit $D' = D \restrict{s} \land D \restrict{t}$
can be computed in polynomial time. It is clear that $g$ is a deterministic
$\lor$-gate exactly if $D'$ is unsatisfiable, which, by the fourth claim of
Lemma~\ref{lem:d-SDNNFproperties}, can be verified in polynomial time as well. 
 
\subsubsection*{Proof of Lemma \ref{lem:OBDDswap}}

Let $X = \{x_1, \dots, x_n\}$ be our set of variables. Without loss of generality we can assume that $x_i \leq' x_j$ exactly
if $i < j$. Consider any node $v$ of $D'$, which is labelled by a variable $x_i$, and partial assignments $\ascr_1, \ascr_2$,
of the first $i-1$ variables, which reach $v$. Then, for any assignment $\bscr$ of the remaining variables, it holds that
$\ascr_1 \cup \bscr$ is a model of $D'$ exactly if $\ascr_2 \cup \bscr$ is a model of $D'$.

We can therefore define a linear order $\leq_i$ by taking our original order $\leq$, and sorting the first $i$ variables
to the beginning according to $\leq'$. We define an OBDD $D_i$ as follows:

let $v$ be a node of $D'$ labelled with $x_i$, and $\ascr$ a partial assignment of the first $i-1$ variables reaching
$v$. Then we define $D_v \coloneqq D \restrict{\ascr}$.
We construct $D_i$ by taking the a copy of $D'$ restricted to the nodes respecting the first $i$ variables, and replacing
each node $v$, which respects $x_i$, with the OBDD $D_v$.

By our first observation the OBDD $D_i$ is equivalent to $D$ and $D'$. Furthermore, each $\leq_i$ can be obtained
from $\leq_{i-1}$ by moving only the position of a single variable, because we only need to move the variable $x_i$ to its corresponding
place. Finally, the size bound holds, as each $D_v$ has size at most $|D|$, and we only require $|D'|$ many copies.

\subsubsection*{Proof of Lemma \ref*{lem:permlower}}

    For the upper bound on OBDD$(\land,r)$-refutations, we refer to \cite{itsykson2020obdd}, Theorem 4.1.

    For the lower bound, let $(T,b)$ be a vtree, and $R_1, \dots, R_\ell$ be a a d-SDNNF$(\land)$-refutation of $\text{perm}_\Pi(\phi)$.
    Assume that $R_1, \dots, R_\ell$ has size $2^{o(n)}$.

    By assumption, there exists a $\pi \in \Pi$ such that every d-SDNNF$(\land)$-refutation of $\phi_\pi$ respecting $(T,b)$
    has size $2^{\Omega(n)}$. Furthermore, let $\sigma_\Pi(\pi)$ as in Definition \ref*{def:perm}. We observe that
    $\text{perm}_\Pi( \phi ) \restrict{\sigma_\Pi(\pi)} \equiv \phi_\pi$, and $R_1 \restrict{\sigma_\Pi(\pi)}, \dots, R_\ell \restrict{\sigma_\Pi(\pi)}$
    is a d-SDNNF$(\land)$-refutation of $\phi_\pi$ respecting $(T,b)$ of size $2^{o(n)}$. This is a contradiction.

\subsubsection*{Proof of Lemma \ref*{lem:permupperlower}}

In order to prove Lemma \ref*{lem:permupperlower}, we need to use some techniques from communication complexity.

\begin{definition}
    Let $X$ be a set of variables. A partition $(X_1,X_2)$ of $X$ is called \emph{balanced}, if $\frac{|X|}{3} \leq |X_1| \leq \frac{2|X|}{3}$.
\end{definition}

\begin{definition}\label{def:disjrectangles}
    Let $X$ be a set of variables, $f: \{0,1\}^X \to \{0,1\}$ a Boolean function, and $(X_1,X_2)$ a partition of $X$.
    A \emph{rectangle with underlying partition} $(X_1,X_2)$ is a pair $(A \times B) \subseteq (X_1 \times X_2)$ such that for all $x \in (A \times B)$ it holds that $f(x) = 1$.
    a set $R$ of rectangles with underlying partition $(X_1,X_2)$ is called a \emph{rectangle cover}, if $\bigcup_{(A,B) \in R} (A \times B) = f^{-1}(1). $
    Furthermore, $R$ is called \emph{disjoint}, if all $(A,B),(A',B')$ are pairwise disjoint.
\end{definition}

\begin{lemma}[\cite{bova2016knowledge}\label{lem:sddcommsize}]
Let $D$ be a \textup{d-SDNNF} respecting a vtree $(T,b)$, let $f = f_D$, and let $v \in V(T)$. Then there exists a disjoint rectangle
cover of $f$ with size at most $|D|$ and underlying partition $(var(T_v),X \setminus var(T_v))$.
Furthermore, there exists a $v \in V(T)$ such that $(var(T_v),X \setminus var(T_v))$ is balanced.
\end{lemma}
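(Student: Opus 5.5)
The plan is to derive this lemma directly from the structural correspondence between d-SDNNFs and communication matrices established in \cite{bova2016knowledge}. I would reuse the standard construction behind the theorem that d-SDNNFs yield rectangle covers, taking it as given.

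For the first claim, fix a node $v\in V(T)$ and set $X_v = \var(T_v)$ and $X'_v = X\setminus X_v$. To every gate $g$ of $D$ I associate its vtree node $v_g$, the lowest node with $\var(g)\subseteq b(T_{v_g})$. Consider the set $G_v$ of gates $g$ whose node is exactly $v$ or, more robustly, gates $g$ with $\var(g)\subseteq X_v$ that feed into a gate whose variables leave $X_v$. For each such $g$ I form the function $f_g$ on $X_v$ and its ``context'' function $c_g$ on $X'_v$. The context $c_g(\bscr)$ is the value of $D$ when $g$ is replaced by the constant $1$, restricted to assignments where $g$ is the unique activated $X_v$-gate.

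The key structural fact is the following. By structured decomposability, every $\land$-gate whose scope straddles $X_v$ splits into a part inside $T_v$ and a part disjoint from it. Every $\lor$-gate either lies inside $T_v$, lies completely outside, or has scope containing all of $X_v$. Hence every accepting ``certificate'' of a model $\assign$ passes through exactly one maximal gate $g$ with scope inside $X_v$, and
\[ f_D = \bigvee_{g} f_g \land c_g. \]
Determinism then implies that for a model there is a unique such certificate, and hence a unique $g$. So the rectangles $f_g^{-1}(1)\times c_g^{-1}(1)$ are pairwise disjoint and cover $f^{-1}(1)$. Their number is at most the number of gates, $|D|$.

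The main obstacle is making the context $c_g$ well defined as a function on $X'_v$ alone, independently of the $X_v$-part of the input. I would define it via certificates (proof trees). A certificate for a model picks one child at each $\lor$-gate and both children at each $\land$-gate. Decomposability forces the subtree of the certificate covering $X_v$ to hang from a single gate $g$, since the vtree constrains scopes. Disjointness follows from determinism: two different $g, g'$ both fully accepting $\assign$ would give two certificates, so some $\lor$-gate would have two satisfied children. That contradicts determinism, because for models of a deterministic circuit the certificate is unique.

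For the second claim, I use the standard tree-separator argument. Start at the root of $T$ and walk down, always into the child whose leaf set is larger, until $|\var(T_v)|\le 2|X|/3$. The parent of the final node had more than $2|X|/3$ leaves, and $v$ took the larger half of them. Hence $|\var(T_v)| > |X|/3$, so the partition is balanced.
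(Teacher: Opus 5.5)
Your strategy is the standard argument from \cite{bova2016knowledge}, which the paper only cites and does not reprove: cut $D$ at the topmost gates whose scope lies in $X_v=\var(T_v)$, pair each such gate $g$ with a context on $X'_v$, and get disjointness from the uniqueness of certificates in a deterministic circuit. Your walk to a balanced node is correct. However, the step you yourself call ``the main obstacle'' is left open, and the rest of your argument depends on it. You never define $c_g$ as a function of $X'_v$ alone; the definition you give still depends on which $X_v$-gates are ``activated''. Without such a definition, the inference ``each model has a unique certificate, hence a unique $g$, so the rectangles $f_g^{-1}(1)\times c_g^{-1}(1)$ are disjoint and cover $f^{-1}(1)$'' does not go through. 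Uniqueness of certificates only shows that the sets $M_g$, consisting of the models whose certificate has topmost $X_v$-gate $g$, partition $f^{-1}(1)$. The content of the lemma is that each $M_g$ is a combinatorial rectangle: every pair in $f_g^{-1}(1)\times c_g^{-1}(1)$ must be a model whose certificate has topmost $X_v$-gate $g$. Your sentence ``two different $g,g'$ both fully accepting $\assign$ would give two certificates'' presupposes exactly this.

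The missing idea is a grafting (exchange) argument. Set $c_g(\bscr)=1$ iff some certificate $C$ of $D$ with topmost $X_v$-gate $g$ has all its literals outside the sub-certificate at $g$ satisfied by $\bscr$. These literals are over $X'_v$, since every $X_v$-gate of $C$ lies below the unique maximal one. If $\ascr\models f_g$, take a certificate $C_\ascr$ of $g$ satisfied by $\ascr$ and substitute it for the part of $C$ below $g$. The result is a certificate of $D$ satisfied by $\ascr\cup\bscr$ whose topmost $X_v$-gate is still $g$. This yields $f_g^{-1}(1)\times c_g^{-1}(1)\subseteq f^{-1}(1)$, and, together with uniqueness of certificates, disjointness for $g\neq g'$; covering is immediate from the definition.

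Two smaller points need correcting:
\begin{itemize}
\item A straddling $\land$-gate does not split into ``a part inside $T_v$ and a part disjoint from it''; its child on the side of $v$ may itself straddle. What holds, and suffices via a lowest-common-ancestor argument in the certificate, is that at most one of its children meets $X_v$.
\item Certificates that mention no variable of $X_v$ have no topmost $X_v$-gate, so your ``exactly one'' claim fails for them. They must be handled separately, e.g.\ collected into one additional rectangle $\{0,1\}^{X_v}\times B_0$.
\end{itemize}
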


\begin{lemma}[\cite{Kushilevitz_Nisan_1996}]
    Let $X = var(\text{EQ}_n^0)$, and let $(X_1,X_2)$ be a balanced partition of $X$. Then, there exists an $\ell \in \{0, \dots, n-1\}$ such
    that any disjoint rectangle cover of EQ$_n^\ell$ with underlying partition $(X_1,X_2)$ has size $2^{\Omega(n)}$. 
\end{lemma}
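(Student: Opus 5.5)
The plan is the standard averaging argument of Kushilevitz and Nisan: first find a shift $\ell$ under which many of the equality constraints $x_i \leftrightarrow y_{\ell+i \bmod n}$ are \emph{split} by the partition $(X_1,X_2)$, meaning one endpoint lies in $X_1$ and the other in $X_2$. Then restrict $\EQ_n^\ell$ to an equality function on the split pairs and use the fooling-set bound for equality.

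\textbf{Step 1 (counting split pairs).} We have $|X| = 2n$, so balancedness means $\frac{2n}{3}\le |X_1|\le \frac{4n}{3}$. Let $a n = |\{x_i\}\cap X_1|$ and $b n = |\{y_j\}\cap X_1|$, so $s \coloneqq a+b\in[\frac23,\frac43]$. The number of pairs $(i,j)$ with $x_i$ and $y_j$ on different sides is
\[ n^2\bigl(a(1-b)+(1-a)b\bigr) = n^2(s-2ab). \]
Using $ab\le s^2/4$, this is at least $n^2(s-s^2/2)$. The function $s - s^2/2$ is concave, so its minimum over $[\frac23,\frac43]$ is attained at an endpoint. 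Both endpoints give the value $\frac49$, so at least $\frac49 n^2$ pairs are split.

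\textbf{Step 2 (choosing the shift).} For each fixed $\ell$, the map $i\mapsto \ell+i \bmod n$ is a bijection. Hence every pair $(i,j)$ is counted for exactly one shift, and the $n$ shifts partition the $n^2$ pairs into perfect matchings between $X$ and $Y$. By averaging, some $\ell$ has a set $I$ of at least $\frac{4n}{9}$ indices $i$ for which $x_i$ and $y_{\ell+i}$ are split. Fix this $\ell$.

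\textbf{Step 3 (restriction).} Let $\ascr$ set every variable of each non-split pair to $0$. This is consistent with equality, and each such pair lies entirely within one side of the partition. Take any disjoint rectangle cover of $\EQ_n^\ell$ with underlying partition $(X_1,X_2)$. Restricting each rectangle to the inputs consistent with $\ascr$ again yields rectangles, now for the partition restricted to the free variables. These rectangles cover $\EQ_n^\ell\restrict{\ascr}$, and their number does not increase.

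The function $\EQ_n^\ell\restrict{\ascr}$ is exactly equality between two $k$-bit strings, where $k=|I|$ and each side of the partition holds exactly one variable from each pair. Which side holds $x_i$ may vary with $i$; this is harmless after renaming coordinates.

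\textbf{Step 4 (fooling set).} The function has $2^k$ accepting inputs, namely one for each $u\in\{0,1\}^k$. Suppose a single $1$-rectangle contained the accepting inputs for $u\neq u'$. Then it would also contain the mixed input built from Alice's part of $u$ and Bob's part of $u'$. On a coordinate where $u$ and $u'$ differ, the two variables of that pair disagree, so the mixed input is rejected, a contradiction. Hence every cover needs at least $2^k\ge 2^{4n/9}=2^{\Omega(n)}$ rectangles.

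I expect the only step needing care to be the extremal estimate in Step 1, which was checked above. It also relies on splitting the pairs across sides rather than splitting the $X$ and $Y$ variables themselves.
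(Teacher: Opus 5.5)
Your proof is correct. The paper gives no proof of its own and only cites Kushilevitz and Nisan; your argument is the standard one behind that citation: at least $\frac{4}{9}n^2$ split pairs overall, hence a shift $\ell$ with at least $\frac{4n}{9}$ split constraints, then fixing the unsplit pairs and applying the fooling-set bound for equality. Since Step~4 never uses disjointness, you in fact obtain the stronger bound for arbitrary, not necessarily disjoint, rectangle covers.
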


Note that for every $\ell,k \in \{0, \dots, n-1\}$ it holds that $\var( \text{EQ}_n^k ) = \var( \text{EQ}_n^\ell )$.

\begin{corollary}\label{cor:sddeq}
    Let $(T,b)$ be a vtree for $var(\text{EQ}_n^0)$. Then, there exists an $\ell \in \{0, \dots, n-1\}$ such that any \textup{d-SDNNF} $D$ with $f_D = \text{EQ}_n^\ell$
    has size $2^{\Omega(n)}$.
\end{corollary}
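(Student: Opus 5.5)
The corollary follows by composing the two preceding results. Fix a vtree $(T,b)$ over $X=\var(\text{EQ}_n^0)$, which is also the variable set of every $\text{EQ}_n^\ell$.

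First, use the second part of Lemma~\ref{lem:sddcommsize}, which depends only on $(T,b)$, to choose a node $v\in V(T)$ such that $(X_1,X_2)\coloneqq(\var(T_v),X\setminus\var(T_v))$ is a balanced partition of $X$. Since $v$ is chosen before any $\ell$ is fixed, the partition depends on the vtree alone.

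Second, apply the lemma of \cite{Kushilevitz_Nisan_1996} to this balanced partition. It gives an $\ell\in\{0,\dots,n-1\}$ such that every disjoint rectangle cover of $\text{EQ}_n^\ell$ with underlying partition $(X_1,X_2)$ has size $2^{\Omega(n)}$.

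Third, let $D$ be any d-SDNNF with $f_D=\text{EQ}_n^\ell$ that respects $(T,b)$. By the first part of Lemma~\ref{lem:sddcommsize} applied to $D$ and the same node $v$, there is a disjoint rectangle cover of $f_D=\text{EQ}_n^\ell$ with underlying partition $(X_1,X_2)$ of size at most $|D|$. Combining this with the second step gives $|D|\geq 2^{\Omega(n)}$.

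The main point to check is the order of quantifiers: the node $v$, and hence the partition, must be fixed from $(T,b)$ before $\ell$ is chosen. This is what makes the single $\ell$ work for all d-SDNNFs respecting $(T,b)$. A minor technicality is that Lemma~\ref{lem:sddcommsize} is stated for $f$ over the full variable set of the vtree. This causes no issue here, because $\text{EQ}_n^\ell$ mentions all of $X$ and $(T,b)$ is a vtree over exactly $X$. The implicit hypothesis that $D$ respects $(T,b)$ is part of the intended statement, as its use in Lemma~\ref{lem:permupperlower} shows.
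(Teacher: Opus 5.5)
Your proposal is correct and is the argument the paper intends: the paper states the corollary without a separate proof, as an immediate combination of Lemma~\ref{lem:sddcommsize} and the Kushilevitz--Nisan rectangle-cover bound. You apply them in the right quantifier order, fixing the balanced node $v$ from $(T,b)$ before choosing $\ell$. You are also right that the hypothesis that $D$ respects $(T,b)$ is implicit in the statement. Without it the claim would fail, since each single $\text{EQ}_n^\ell$ has a linear-size OBDD under a suitable order.
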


We are now ready to prove Lemma \ref*{lem:permupperlower}

\begin{proof}[Proof of Lemma \ref*{lem:permupperlower}]
    For the upper bound, our goal is to apply Theorem \ref*{th:sddmeta}. Therefore, it is sufficient to
    construct an efficient compilation of $\control(\EQ_n^i)$ as follows:

    let $k \in \{0, \dots, n-1\}$, and $p = i + k \mod n$. Without loss of generality, we can describe each equality condition as
    \[ \phi_k \equiv (\neg x_k \lor y_p \lor c_l) \land (x_k \lor \neg y_p \lor c_r). \]
    We define an order $\leq_k$ on the variables $\{x_k,y_p, c_l, c_r\}$, and observe that there is
    a constant size OBDD $D_k$ respecting $\leq_k$, which represents $\phi_k$.
    Then, we define a linear order $\leq$ for $\control(\EQ_n^i)$ as follows:

    \[ \leq \coloneqq \{ (a,b) \mid a \in \var(\phi_s) \text{ and } b \in \var(\phi_t) \text{ such that } s < t \} \cup \bigcup_{k \in \{0, \dots, n-1\}} \leq_k, \]

    where we view each order $\leq_k$ as a binary relation.
    In other words, $\leq$ orders all $k \in \{0, \dots, n-1\}$ in the canonical order, and decides on each variable in $\var(\phi_k)$
    along $\leq_k$.
    
    We obtain an OBDD $D$ respecting $\leq$, which represents $\control(\EQ_n^i)$, by just computing each $\phi_k$ individually,
    and then computing the conjunction over all $k \in \{0, \dots, n-1\}$. Note that, since for $k \neq k'$ the sub-CNFs $\phi_k$ and $\phi_{k'}$
    have disjoint variables, therefore the size of $D$ is just the sum of sizes for all $D_i$. Therefore, $|D| \in \bigO(n)$.
    Note that this construction also gives us a polynomial size OBDD$(\land)$-compilation of $\control(\EQ_n^i)$.

    Now we can obtain a polynomial upper bound on an OBDD$(\land)$-refutation of $\imp(\control(\EQ_n^i))$ by just applying Theorem \ref*{th:sddmeta}.

    The lower bound follows immediately from Corollary \ref*{cor:sddeq}.

\end{proof}

\subsubsection*{Proof of Theorem \ref{th:perm}}

This proof is largely adapted from the proof for Theorem 24 in \cite{buss2018reordering}. We only need to verify that the original proof strategy,
which depends on OBDDs also works on d-SDNNFs.

The main idea of this proof relies on lifting using $\G$-refutations. We recall the definition of disjoint rectangle covers from
Definition \ref{def:disjrectangles}.
\begin{definition}[$\G$-refutations]
    
Let $\phi$ be an unsatisfiable CNF with $n$ variables,
and $\G \coloneqq (g_i)_{i \leq \ell}$ a set of functions with $g_i: \{0,1\}^n \to \{0,1\}$. Then, a $\G$-refutation of $\phi$
is a directed, acyclic graph $G = (V,E,g)$, where $g: V \to \G$ is a function mapping vertices to functions in $\G$ such that:
\begin{itemize}
    \item $G$ has a unique source $r$ and $g(r) = \bot$.
    \item If $v \in V$ is an internal node, then $v$ has at most two children $u_1,u_2$ and $ g(v)^{-1}(0) \subseteq g(u_1)^{-1}(0) \cup g(u_2)^{-1}(0)$
    if $v$ has two children, and $g(v)^{-1}(0) \subseteq g(u)^{-1}(0)$, if $v$ has a unique child $u$.
    \item If $v \in V$ is a sink, then there is a clause $C_v \in \phi$ such that $g(v)^{-1}(0) \subseteq f_{C_v}^{-1}(0)$.
\end{itemize}
The size of the $\G$-refutation is $|V|$ the size of the graph. Furthermore, for a partition $\Pi$ of the variables of $\phi$, 
we say that the $\G$-refutation has a disjoint $\Pi$-rectangle cover of size $S$, 
if every $g \in \G$ has a disjoint rectangle cover with underlying partition $\Pi$ of size at most $S$.
\end{definition}

The original definition relies on communication complexity rather than disjoint rectangle covers. However, our definition is equivalent.
For our purposes, the set $\G$ is represented by a d-SDNNF($\land,w$)-refutation. Therefore, any lower bounds on the functions $g \in \G$
also are lower bounds on any d-SDNNF($\land,w$)-refutation.

The idea of this proof is to take an unsatisfiable formula CNF $\phi$ and prove that every d-SDNNF$(\land,w)$-refutation of $\phi$
respecting \emph{a certain} vtree is large. We then add permutations to make sure that every d-SDNNF$(\land,w)$-refutation
of the transformed CNF is large for \emph{every vtree}. For this, we use \emph{Tseitin formulas}. For a graph $G = (V,E)$,
we let $E(v) \coloneqq \{ e \in E \mid v \in e \}$.

\begin{definition}
    Let $G = (V,E)$ be a graph, and $c: V \to \{0,1\}$. The Tseitin formula $T(G,c)$ is defined as a function $T(G,c): \{0,1\}^E \to \{0,1\}$
    such that for every $\assign: E \to \{0,1\}$:
    \[ T(G,c)(\assign) = \bigwedge_{v \in V} \left( \sum_{ e \in E(v) } \assign(e) \equiv c(v) \mod 2. \right) \]
\end{definition}

It is well-known \cite{urquhart1987hard} that, if $G$ is connected, then the Tseitin formula $T(G,c)$ is satisfiable exactly if
the sum $\sum_{v \in V}(c(v))$ is even.

Let $K_{\log n}$ be the complete graph with $\log n$ vertices and $T(K_{\log n},c)$ be some unsatisfiable Tseitin formula
on $K_{\log n}$. Furthermore, let $Ind_m: \{0,1\}^{\lfloor \log m \rfloor} \times \{0,1\}^m$ be a function such that
for $\overline{z} = \{z_1, \dots, z_{\lfloor \log m \rfloor}\}$, $\overline{y} = \{y_1 ,\dots, y_m\}$,
and partial assignment $\ascr: \overline{z} \to \{0,1\}$:
\[ Ind_m(\overline{z},\overline{y})\restrict{\ascr} = y_i, \]
exactly if $\ascr$ represents the binary encoding of $i$. Then, $T(K_{\log n},c) \circ Ind_m$ is obtained by replacing
all variables $x_j$ in $T(K_{\log n},c)$ with copies of $Ind_m^j$. The variables of $Ind_m^j$ are $\overline{z^j}$ and $\overline{y^j}$.
Note that by Lemma 22 of \cite{buss2018reordering}, there is a CNF for $T(K_{\log n},c) \circ Ind_m$ of size $m^{\bigO(\log n)}$.
The following lemma is slightly adjusted for brevity.

\begin{lemma}[\cite{buss2018reordering}\label{lem:klognsize}]
    Let $T(K_{\log n},c)$ be unsatisfiable, and $m = (\log n)^{2\delta}$ for some constant $\delta$.
    Let $(X,Y)$ be a partition of the variables of $T(K_{\log n},c) \circ Ind_m$ such that all $\overline{z^j}$
    are in $X$ and all $\overline{y^j}$ are in $Y$. If a set $\G$ has a disjoint $(X,Y)$-rectangle cover of size $S$,
    then every $\G$-refutation of $T(K_{\log n},c) \circ Ind_m$ has size at least $S^{-3}(\log n)^{\log^2 n}$.
\end{lemma}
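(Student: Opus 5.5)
This is the lifting lemma of \cite{buss2018reordering}, and the plan is to reproduce its argument essentially unchanged. The only interface to the proof system is the disjoint $(X,Y)$-rectangle covers of the functions in $\G$.

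\textbf{Step 1: extract a small-width resolution-like object.} Start from a $\G$-refutation $G=(V,E,g)$ of $T(K_{\log n},c)\circ Ind_m$ in which every $g(v)$ has a disjoint $(X,Y)$-rectangle cover of size at most $S$. Interpret the refutation as a communication game. Alice holds the $\overline{z^j}$-part and Bob the $\overline{y^j}$-part of an assignment. Walking from the source $r$ towards a sink, they maintain the invariant that the current node's function is falsified.

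At a node with children $u_1,u_2$, the parties must decide which child is falsified. The functions $g(u_i)$ have disjoint covers of size $\le S$, so each can be evaluated by a protocol of cost $\bigO(\log S)$: Alice announces the index of her row set, and Bob checks membership. Using disjointness, exactly one rectangle can contain the input, so the answer is well defined. The walk therefore yields a DAG-like protocol / communication game for the search problem ``find a falsified clause''. Its size is $|V|$ and each step costs $\bigO(\log S)$ bits.

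\textbf{Step 2: simulate and reduce.} Following \cite{buss2018reordering}, convert this DAG-like protocol with small-rectangle nodes into a resolution (or Res$(\oplus)$-type) refutation of the unlifted formula $T(K_{\log n},c)$. The tool is the simulation/lifting theorem for the index gadget with $m=(\log n)^{2\delta}$. Each node with an $S$-rectangle description becomes a small number of nodes querying sub-cubes, with blow-up polynomial in $S$; this is where the $S^{3}$ factor arises.

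\textbf{Step 3: apply the known lower bound.} Tseitin on $K_{\log n}$ requires resolution size $(\log n)^{\log^2 n}$ (via its width lower bound of $\Omega(\log n)$ together with the small-parameter size--width tradeoff used in \cite{buss2018reordering}). Combining the three steps gives $|V|\cdot S^{3}\ge (\log n)^{\log^2 n}$, which is the claimed bound.

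The main obstacle is Step 2: checking that the lifting argument in \cite{buss2018reordering}, originally phrased for OBDD lines, only ever uses the fact that each line has a low-cost deterministic two-party evaluation. Since a disjoint rectangle cover of size $S$ provides exactly such an evaluation (cost $\log S$ plus one bit), I would go through their proof and replace every use of ``OBDD of size $S$ under a split order'' by ``disjoint rectangle cover of size $S$''. For d-SDNNF lines, the existence of these covers is supplied by Lemma~\ref{lem:sddcommsize}.
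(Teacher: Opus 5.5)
Your skeleton has the same shape as the argument the paper cites (Corollary~16, Theorem~18 and Corollary~19 of \cite{buss2018reordering}): turn the $\G$-refutation into a communication object, then lift from the unlifted Tseitin formula. However, Steps~2 and~3 are quantitatively wrong, and the error is not a matter of bookkeeping. $T(K_{\log n},c)$ has only $\binom{\log n}{2}$ variables, so it has a tree-like resolution refutation of size $2^{\bigO(\log^2 n)}$. That is far below $(\log n)^{\log^2 n}=2^{\log^2 n\,\log\log n}$. So the resolution size lower bound you invoke in Step~3 is false, and no size-to-size simulation with $\mathrm{poly}(S)$ overhead, as in Step~2, can yield the lemma.

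The parameter that gets lifted is resolution \emph{width}. Balanced cuts of $K_{\log n}$ contain $\Omega(\log^2 n)$ edges, so by Ben-Sasson--Wigderson the width is $\Omega(\log^2 n)$; your $\Omega(\log n)$ is just the initial clause width. The lifting step needed is the width-to-size theorem of Garg--G\"o\"os--Kamath--Sokolov for the index gadget with $m$ polynomial in the number of variables. It says a rectangle-dag of size $s$ for the lifted search problem yields a resolution refutation of width $\bigO(\log s/\log m)$, so $s\ge m^{\Omega(w)}=(\log n)^{\Omega(\log^2 n)}$. The $\log\log n$ in the exponent comes from $\log m$, and the factor $S^3$ has nothing to do with the lifting step.

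Step~1 is also false. In a disjoint cover $g^{-1}(1)=\bigsqcup_i A_i\times B_i$ only the rectangles are disjoint; the row sets $A_i$ may overlap, so Alice has no well-defined index to announce. A disjoint $1$-cover of size $S$ gives only an unambiguous nondeterministic protocol of cost $\log S$. Making it deterministic costs $\bigO(\log^2 S)$ (Yannakakis), and Clique vs.\ Independent Set shows that $\bigO(\log S)$ is impossible in general. Your route would therefore give overhead $2^{\bigO(\log^2 S)}$, not $S^3$. Moreover, to move to a child the walk must certify membership in $g(u_i)^{-1}(0)$, and a cover of $g(u_i)^{-1}(1)$ provides no rectangles inside that set.

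The simulation that produces the factor $S^{-3}$ needs each line to have a \emph{deterministic} protocol with at most $S$ leaves. For an OBDD reading all $\overline{z^j}$ before all $\overline{y^j}$, Alice sends the node at the cut; the row sets are then disjoint, and the cut partitions both $g^{-1}(0)$ and $g^{-1}(1)$. The rectangle-dag is built as follows: inside each of the $\le S$ $0$-leaves of $v$, run the protocol of $g(u_1)$; inside each of its $\le S$ $1$-leaves, run the protocol of $g(u_2)$. This gives $\bigO(S^3)$ rectangles per node.

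So replacing ``OBDD of size $S$ under a split order'' by ``disjoint rectangle cover of size $S$'' is not a drop-in substitution. The paper only asserts that its cover-based formulation is equivalent to the original communication-based one. Your Step~1 is exactly where that equivalence would need a proof, and the justification you give does not hold.
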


Lemma \ref{lem:klognsize} is a combination of Corollary 16, Theorem 18, and Corollary 19 in the original proof.

By Lemma \ref{lem:sddcommsize} the size of a d-SDNNF $D$ respecting a particular vtree corresponds to the size of a disjoint rectangle cover
of the computed function $f_D$. Thus, we apply Lemma \ref{lem:klognsize} and Lemma \ref{lem:sddcommsize} to obtain the following corollary:

\begin{corollary}\label{cor:tseitinind}
    Let $T(K_{\log n},c)$ be unsatisfiable, and $m = (\log n)^{2\delta}$ as in Lemma \ref{lem:klognsize}.
    Then, there exists a vtree $(T,b)$, such that every {\upshape d-SDNNF($\land,w$)}-refutation of $T(K_{\log n},c) \circ Ind_m$ respecting $(T,b)$ has
    size at least $(\log n)^{\Omega( \log^2 n )}$.
\end{corollary}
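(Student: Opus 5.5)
The plan is to combine Lemma~\ref{lem:sddcommsize} with Lemma~\ref{lem:klognsize}, taking care to pick the vtree so that the partition it induces is the $(X,Y)$-partition required by Lemma~\ref{lem:klognsize}.

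\emph{Choice of vtree.} Let $X$ be the union of all index blocks $\overline{z^j}$ and $Y$ the union of all data blocks $\overline{y^j}$. We take a vtree $(T,b)$ whose root has left subtree $T_{\mathrm{root},l}$ with $b(T_{\mathrm{root},l}) = X$ and right subtree with $b(T_{\mathrm{root},r}) = Y$. The shape inside each side is arbitrary, for instance right-linear. For $v$ the left child of the root we then have $(\var(T_v), \var\setminus\var(T_v)) = (X,Y)$.

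\emph{Rectangle covers for every line.} Fix any d-SDNNF($\land,w$)-refutation $(D_1,(T,b)),\ldots,(D_s,(T,b))$ of $T(K_{\log n},c)\circ Ind_m$. No reordering is allowed, so every line respects $(T,b)$. For each $i$, Lemma~\ref{lem:sddcommsize} applied at the node $v$ gives a disjoint rectangle cover of $f_{D_i}$ with underlying partition $(X,Y)$ and at most $|D_i|\le S$ rectangles, where $S$ is the size of the refutation. Thus the family $\G=(f_{D_i})_i$ has a disjoint $(X,Y)$-rectangle cover of size $S$. The balancedness part of Lemma~\ref{lem:sddcommsize} is not needed, because Lemma~\ref{lem:klognsize} only asks that every $\overline{z^j}$ lie in $X$ and every $\overline{y^j}$ in $Y$.

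\emph{From the derivation to a $\G$-refutation.} Read the derivation backwards to obtain a DAG. The source is the final line, which computes $\bot$. A join line $D_i = D_j\land D_{j'}$ gets children $D_j, D_{j'}$; it satisfies $f_{D_i}^{-1}(0)= f_{D_j}^{-1}(0)\cup f_{D_{j'}}^{-1}(0)$. A weakening line $D_i$ derived from $D_j$ gets the single child $D_j$; since $f_{D_j}\models f_{D_i}$, we have $f_{D_i}^{-1}(0)\subseteq f_{D_j}^{-1}(0)$. Init lines are sinks and equal a clause of the formula, as required. This is a $\G$-refutation of size at most $S$. Lemma~\ref{lem:klognsize} then gives $S\ge S^{-3}(\log n)^{\log^2 n}$, so $S^4\ge(\log n)^{\log^2 n}$ and hence $S\ge(\log n)^{\Omega(\log^2 n)}$.

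The main obstacle I expect is this last step: matching the paper's definition of $\G$-refutation, namely the direction of the zero-set inclusions and the sink condition, with the derivation rules. I also need to confirm that Lemma~\ref{lem:klognsize} is stated with $S$ bounding the cover of every $g\in\G$, rather than for a fixed $\G$ independent of the refutation; since $\G$ here depends on the refutation, the lemma must be read as applying to any such family. If it is instead meant for a fixed family, I would first apply it with $\G$ equal to all functions having $(X,Y)$-covers of size at most $S$, which contains every $f_{D_i}$.
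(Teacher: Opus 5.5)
Your proposal is correct and follows the paper's argument, which the paper only sketches: you choose a vtree with a node $v$ whose split separates all $\overline{z^j}$ from all $\overline{y^j}$, bound every line's disjoint $(X,Y)$-rectangle cover by its size via Lemma~\ref{lem:sddcommsize}, read the join/weakening derivation as a $\G$-refutation, and apply Lemma~\ref{lem:klognsize} to get $S^4 \geq (\log n)^{\log^2 n}$. One small detail to add: restrict to the lines actually used in deriving the final $\bot$-line. This gives the DAG a unique source and ensures all these lines respect the same vtree $(T,b)$, since init lines are not a priori tied to it.
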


This gives us a lower bound on a d-SDNNF($\land,w$)-refutation using only the \emph{worst vtree}. However,
we are interested in obtaining a lower bound on \emph{any vtree}. To achieve this, we need to add permutations.
We recall the definition of $\text{perm}_{S_Y}(\phi)$ for some CNF $\phi$ on variables $X$ from Definition \ref{def:perm}.
To avoid notational clutter, we omit the $S_X$ from the subscript, and write $\text{perm}(\phi)$ instead.

Let $(T,b)$ be a vtree. It is easy to see that, if any d-SDNNF$(\land,w)$-refutation of $\phi$ respecting $(T,b)$ has size at least $S$,
then any d-SDNNF($\land,w$)-refutation of $\text{perm}(\phi)$ respecting $(T,b')$ has size at least $S$ for any bijection $b'$ on $\{x_1,\dots,x_n\}$. \\
However, this strategy has two flaws. Notably, $\text{perm}(\phi)$ may grow exponentially, and this method only provides
us with a lower bound on refutations using the same underlying tree $T$. Therefore, we need to rely on $\text{perm}_\Pi(\phi)$
for some polynomially sized $\Pi \subseteq S_X$, however this method may not cover every possible bijection $b'$. 
We can solve this issue as well by adding additional variables. Incidentally, this also solves our vtree problem. 

Let $\phi$ be any CNF with variables $\{x_1,\dots, x_n\}$. Then, we can obtain the CNF $\phi^{\lor m}$ by replacing
every $x_i$ with a disjunction $\bigvee_{j \in [m]} y_{i,j}$. The resulting formula is not necessarily a CNF, however
transforming $\phi^{\lor m}$ into a CNF is not very difficult, if the total number of negated literals is small enough. For more details, we refer to section 3.1 in \cite{buss2018reordering}.

Now we need to construct our set of allowed permutations $\Pi$. Assume that $n = 2^k$ for some $k \in \N$, and $\mathbb{F} = \text{GF}(n)$.
Then, we define $\hat{\Pi} = \{ a\cdot x + b \mid a,b \in \mathbb{F}, a\neq 0 \}.$ This is a well-known set of permutations,
and is vey useful for our purposes.

\begin{lemma}\label{lem:reorderinglemma}
    Let $\phi$ be an unsatisfiable \textup{CNF} with $n = 2^k$ variables for some $k \in \N$. Then there exists an $m \in \bigO(n^3)$
    such that the following holds: assume there is a vtree $(T,b)$ such that there is an {\upshape d-SDNNF($\land,w$)}-refutation
    of $\text{perm}_{\hat{\Pi}}(\phi^{\lor m})$ of size S. Then, for \emph{every vtree} $(T',b')$, there is a {\upshape d-SDNNF($\land,w$)}-refutation
    of $\phi$ of size at most $S$.
\end{lemma}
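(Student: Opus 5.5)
Fix a vtree $(T,b)$ and a d-SDNNF($\land,w$)-refutation $R_1,\dots,R_s$ of $\text{perm}_{\hat\Pi}(\phi^{\lor m})$ respecting $(T,b)$, and fix an arbitrary target vtree $(T',b')$ over $\{x_1,\dots,x_n\}$. The plan is to turn the given refutation, by restriction, into a d-SDNNF($\land,w$)-refutation of $\phi$ that respects $(T',b')$ and has size at most $S$. We use two facts. First, restricting a d-SDNNF by a partial assignment keeps it a d-SDNNF of no larger size, respecting the vtree with the assigned leaves contracted away. Second, restriction commutes with the three rules: if $f_{D_i}=f_{D_j}\land f_{D_{j'}}$ then the restrictions satisfy the same identity, entailment is preserved under restriction, and an initial clause restricts either to $\top$ or to a subclause. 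Steps producing $\top$ can be dropped. A subclause line is justified as an init line for the corresponding clause of the restricted formula. The refutation stays polynomially verifiable.

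\textbf{Step 1 (fixing the permutation).} Restrict the $W_{\hat\Pi}$ variables according to $\sigma(\pi)$ for a $\pi\in\hat\Pi$ still to be chosen. By construction $\text{perm}_{\hat\Pi}(\phi^{\lor m})\restrict{\sigma(\pi)}\equiv(\phi^{\lor m})_\pi$, exactly as in the proof of Lemma~\ref{lem:permlower}. The clauses $\neg T_{\ascr}$ become $\top$, and the remaining clauses become those of $(\phi^{\lor m})_\pi$. We may equivalently treat $\pi$ as permuting the blocks $\{y_{i,1},\dots,y_{i,m}\}$.

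\textbf{Step 2 (choosing representatives).} In each block $i$, pick one variable $y_{i,j_i}$. Set all other block variables to $0$, so that $\bigvee_j y_{i,j}$ collapses to $y_{i,j_i}$, which we rename $x_{\pi(i)}$. The restricted refutation is then a refutation of $\phi$ (after renaming), respecting the vtree obtained from $T$ by deleting all leaves except the $n$ chosen ones and contracting degree-2 nodes. The choices of the $j_i$ and of $\pi$ must make this induced vtree equal to $(T',b')$.

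\textbf{Step 3 (main obstacle: embedding $T'$).} This combinatorial step is the one I expect to be hardest. I would try the following Ramsey/pigeonhole-style argument. $T'$ has $n$ leaves and a fixed shape. $T$ has $nm$ leaves, with $m$ leaves of each colour (block). We need to find an induced topological minor of $T$ isomorphic to $T'$, using one leaf of each colour, such that the map from $T'$-leaf labels to colours is realised by some $\pi\in\hat\Pi$. Left/right orientation can be handled by swapping children, since the rules are symmetric in this respect.

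For the shape, we recursively split $T'$ at its root into parts with $n_1+n_2=n$ leaves. We then find a node $t$ of $T$ such that $T_{t,l}$ and $T_{t,r}$ each contain many leaves of suitable colours, and recurse. Having $m=\Theta(n^3)$ copies per colour should give enough slack to find, at each of the $\le n$ internal nodes, a split separating colour classes correctly. The colours realised on the two sides form an arbitrary partition of the blocks. This is where the doubly transitive (in fact sharply 2-transitive) affine family $\hat\Pi=\{ax+b\}$ over $\text{GF}(n)$ would be used.

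Concretely, I would attempt an inductive argument over internal nodes of $T'$. We maintain a set of candidate colours and use a counting argument: at most $n^2$ permutations times at most $n$ splits, bounded by $m$, to show that some $\pi$ survives all constraints. If exact realisation of the leaf labelling fails, I would fall back to showing that the induced vtree is one where $\phi$'s labelling coincides with $b'$ up to a permutation in $\hat\Pi$. Since both refutations would have the same size bound, this suffices. Given the embedding, Steps 1 and 2 yield the required refutation of size at most $S$.
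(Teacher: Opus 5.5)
\textbf{Verdict: there is a genuine gap, in Step 3.} Your Steps 1 and 2 are the right restriction mechanism, and the paper uses the same one: fix $W_{\hat{\Pi}}$ to $\sigma(\pi)$, keep one copy per block and set the other copies to $0$. The problem is the target you set for Step 3. You want to realise an arbitrary prescribed $(T',b')$ exactly as the vtree that $T$ induces on one representative per block, relabelled by some $\pi\in\hat{\Pi}$. By restriction this cannot be done in general, for two reasons.

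\emph{Shape.} If $T$ is right-linear, the vtree induced on any set of leaves is again right-linear. Swapping children only produces caterpillars, in which every internal node has a leaf child. So for $n\ge 4$ a balanced $T'$ never arises. Your recursion ``find $t$ with many leaves of suitable colours in both $T_{t,l}$ and $T_{t,r}$'' has nothing to work with, because every $T_{t,l}$ is a single leaf.

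\emph{Labelling.} Suppose the $m$ copies of each block are exactly the leaves of a subtree of $T$. Then the induced labelled vtree does not depend on the choice of the $j_i$. Only the $n(n-1)$ relabellings of one fixed labelled tree by elements of $\hat{\Pi}$ are reachable, far fewer than the labellings of $T'$, and no counting over $m$ changes this.

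Your fallback, matching $b'$ only up to some $\pi\in\hat{\Pi}$, does not help either. It yields a refutation of the renamed formula $\phi_\pi$ with respect to $(T',b')$, not a refutation of $\phi$.

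\textbf{The missing idea: you never need an exact vtree.} The lemma is only used in Corollary~\ref{cor:tseitinind}. That lower bound comes from Lemma~\ref{lem:klognsize} via Lemma~\ref{lem:sddcommsize}, so it needs small disjoint rectangle covers for just \emph{one} partition: the one separating the pointer variables $\overline{z^j}$ from the data variables $\overline{y^j}$. The paper's argument proceeds as follows.
\begin{enumerate}
\item Fix a node $v\in V(T)$ such that $(\text{var}(T_v),\text{rest})$ is balanced; such a node exists by Lemma~\ref{lem:sddcommsize}.
\item Invoke Lemma~14 of \cite{segerlind2008relative}, adapted from orders to balanced partitions. This is where $\hat{\Pi}=\{ax+b\}$ and $m=\bigO(n^3)$ enter. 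It gives a $\pi\in\hat{\Pi}$ such that every block that must carry a pointer variable has a copy inside $\text{var}(T_v)$, and every block that must carry a data variable has a copy outside it.
\item Take exactly those copies as the representatives in your Step 2. This gives a refutation of $\phi$ of size at most $S$ in which the node $v$ induces the required pointer/data split.
\end{enumerate}
So the argument really establishes this single-partition version, not a refutation respecting an arbitrary $(T',b')$. That version is all the subsequent lower bound uses.
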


The core idea of the proof Lemma \ref{lem:reorderinglemma} is as follows: let $v \in V(T)$ be a node such that $(X,Y) = (\text{var}(T_v),X \setminus \text{var}(T_v))$ is a balanced partition.
Let $z_{i,j}$ be the variables of $\text{perm}_{\hat{\Pi}}(\phi^{\lor m})$ corresponding to $\overline{z^i}$, and $y_{i,j}$ analogously corresponding to $\overline{y^i}$.
If all $z_{i,j}$ are in $X$, and all $y_{i,j}$ are in $Y$, we are done by applying Lemma \ref{lem:klognsize}.
Furthermore, we can even apply Lemma \ref{lem:klognsize}, if there exists at least one permutation $\pi \in \hat{\Pi}$ such that for every 
$\overline{z^i}$ and $\overline{y^i}$ at least one $\pi(z_{i,j})$ is in $X$, and one $\pi(y_{i,j'})$ is in $Y$,
as we can simply set all remaining variables to $0$, and only need to consider the lower bound for $\phi_\pi$. For a proof that such a $\pi$ exists, we refer to Lemma 14 of \cite{segerlind2008relative}.
Note that the original proof only discusses OBDD refutations. However, the proof can be adapted to balanced partitions without any further issues.

Therefore, if we combine Corollary \ref{cor:tseitinind} and Lemma \ref{lem:reorderinglemma}, we obtain the following result:

\begin{corollary}
    Let $T(K_{\log n,c})$ be unsatisfiable with $n' = \binom{\log n}{2}$ variables such that $n' = 2^k$ for some $k \in \N$, $m = (n')^{2\delta}$ as in Lemma \ref{lem:klognsize}, and $p \in \bigO( (n')^3 )$ as in Lemma \ref{lem:reorderinglemma}.
    Then any {\upshape d-SDNNF($\land,w$)}-refutation of $\phi = \text{perm}_{\hat{\Pi}}\left( (T(K_{\log n},c) \circ Ind_m)^{\lor p} \right)$ has size as least 
    $(\log n)^{\Omega( \log^2 n )} = |\phi|^{\Omega(\log n / \log \log n)}$.
\end{corollary}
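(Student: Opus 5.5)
The plan is to combine Corollary~\ref{cor:tseitinind} with Lemma~\ref{lem:reorderinglemma}, applied to the base formula $\psi \coloneqq T(K_{\log n},c)\circ Ind_m$. Write $n'$ for the number of variables of $T(K_{\log n},c)$. Each such variable is replaced by an index gadget, so $\psi$ is padded to a power of two if necessary. Corollary~\ref{cor:tseitinind} gives a \emph{specific} bad vtree $(T^\ast,b^\ast)$: every d-SDNNF$(\land,w)$-refutation of $\psi$ respecting $(T^\ast,b^\ast)$ has size $(\log n)^{\Omega(\log^2 n)}$.

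Now suppose $\phi = \text{perm}_{\hat\Pi}(\psi^{\lor p})$ has a d-SDNNF$(\land,w)$-refutation of size $S$ respecting some arbitrary vtree $(T,b)$. Lemma~\ref{lem:reorderinglemma} states that \emph{every} vtree, in particular $(T^\ast,b^\ast)$, then admits a d-SDNNF$(\land,w)$-refutation of $\psi$ of size at most $S$. Comparing this with the corollary forces $S \geq (\log n)^{\Omega(\log^2 n)}$.

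The mechanism behind the lemma is the one already sketched in the text. First pick a node $v$ of $T$ whose leaf set gives a balanced partition; one exists by Lemma~\ref{lem:sddcommsize}. Next use the affine family $\hat\Pi$ over $\mathrm{GF}(n')$ (Lemma~14 of \cite{segerlind2008relative}) to find $\pi$ such that each $\lor$-block of every $z$-variable has a representative on one side and each block of every $y$-variable has one on the other. Then restrict $\phi$ by $\sigma(\pi)$ and set all non-representative copies to $0$. Restriction preserves d-SDNNF, structure and entailment and does not increase size, so the restricted refutation is a refutation of $\psi$ whose functions all have disjoint rectangle covers of size $\leq S$ with respect to the ``$z$ versus $y$'' partition. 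Lemma~\ref{lem:klognsize} then applies directly, with $S^{-3}$ absorbed into the $\Omega$.

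The last step is the size accounting that turns the bound into $|\phi|^{\Omega(\log n/\log\log n)}$. By Lemma~22 of \cite{buss2018reordering}, $\psi$ has a CNF of size $m^{O(\log n)}$ with $m=(\log n)^{O(1)}$, i.e.\ $(\log n)^{O(\log n)}$. The $\lor p$ blow-up and the CNF transformation of \cite{buss2018reordering}, Section~3.1, cost only polynomial factors because clauses have $O(\log n)$ negated literals. The $\text{perm}_{\hat\Pi}$ construction adds $|\hat\Pi|\leq n'^2$ copies plus $O(\log n')$ selector variables. Altogether $\log|\phi| = O(\log n\log\log n)$, so
\[
(\log n)^{\Omega(\log^2 n)} = 2^{\Omega(\log^2 n\,\log\log n)} = |\phi|^{\Omega(\log n/\log\log n)} .
\]

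The main obstacle is the restriction step inside Lemma~\ref{lem:reorderinglemma}. I must check that restricting to the chosen copies leaves the vtree-induced partition balanced in the sense that Lemma~\ref{lem:klognsize} needs, i.e.\ all $z$'s on one side and all $y$'s on the other after renaming. I must also check that the rectangle-cover bound of Lemma~\ref{lem:sddcommsize} survives restriction. It does, because restricting a rectangle cover yields a rectangle cover of no larger size.
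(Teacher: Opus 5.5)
Your reduction is the paper's own. The corollary is obtained by combining Corollary~\ref{cor:tseitinind} (a bad vtree for $\psi = T(K_{\log n},c)\circ Ind_m$) with Lemma~\ref{lem:reorderinglemma}. Your account of the mechanism matches the paper's sketch: a balanced node, an affine $\pi\in\hat\Pi$ that puts a copy of every $z$-variable on one side and of every $y$-variable on the other, restriction by $\sigma(\pi)$ with the other copies set to $0$, and then the rectangle-cover bound with the $S^{-3}$ factor absorbed.

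The step that is wrong is the size accounting.
\begin{itemize}
\item Clauses of the CNF for $\psi$ do not have only $O(\log n)$ negated literals. Each clause ranges over the $\log n-1$ gadgets at a vertex, and each gadget contributes its $\lfloor\log m\rfloor=\Theta(\log\log n)$ index literals, all of them negative for suitable index values. So up to $\Theta(\log n\log\log n)$ literals per clause are negated.
\item Orification turns a clause with $t$ negated literals into $p^t$ clauses, and $p=O(\log^6 n)$. The blow-up is therefore $(\log n)^{O(\log n\log\log n)}$, not a polynomial factor.
\item Hence $|\phi|=(\log n)^{O(\log n\log\log n)}$, i.e.\ $\log|\phi|=O(\log n(\log\log n)^2)$, which is the bound the paper records. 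With it, $2^{\Omega(\log^2 n\log\log n)}=|\phi|^{\Omega(\log n/\log\log n)}$ holds tightly.
\end{itemize}
Your estimate $\log|\phi|=O(\log n\log\log n)$ would instead have given the stronger $|\phi|^{\Omega(\log n)}$. The $\log\log n$ loss in the stated exponent is exactly the cost of orification, which should have signalled the miscount. Your displayed conclusion is still true, but only through the corrected size bound.
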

We note that there is a CNF for $\text{perm}_{\hat{\Pi}}\left( (T(K_{\log n},c) \circ Ind_m)^{\lor p} \right)$ of size $\left( \log n \right)^{\bigO( \log n \log \log n)}$.
We finally note that by \cite{buss2018reordering}, there is an OBDD($\land,r$)-refutation of this $\phi$ of size at most $(\log n)^{\bigO( \log n \log \log n )} = |\phi|^{\bigO(1)}$.
Because d-SDNNF($\land,r$) p-simulates OBDD($\land,r$), this completes our quasipolynomial separation.

\end{document}